\definecolor{defblue}{rgb}{0.121,0.47,0.705}
\definecolor{linkblue}{rgb}{0.098,0.098,0.4392}
\let\emph\relax
\DeclareTextFontCommand{\emph}{\color{defblue}\em}
\DeclareTextFontCommand{\bl}{\color{defblue}}
\newcommand{\restateref}[1]{\IfAppendix{\hyperref[#1]{$\star$}}{\hyperref[#1*]{$\star$}}}
\let\doendproof\endproof
\renewcommand\endproof{~\hfill$\qed$\doendproof}
\let\orgdescriptionlabel\descriptionlabel
\renewcommand*{\descriptionlabel}[1]{%
  \let\orglabel\label
  \let\label\@gobble
  \phantomsection
  \edef\@currentlabel{#1}%
  \let\label\orglabel
  \orgdescriptionlabel{#1}%
}
\renewcommand{\orcidID}[1]{\href{https://orcid.org/#1}{\includegraphics[scale=.03]{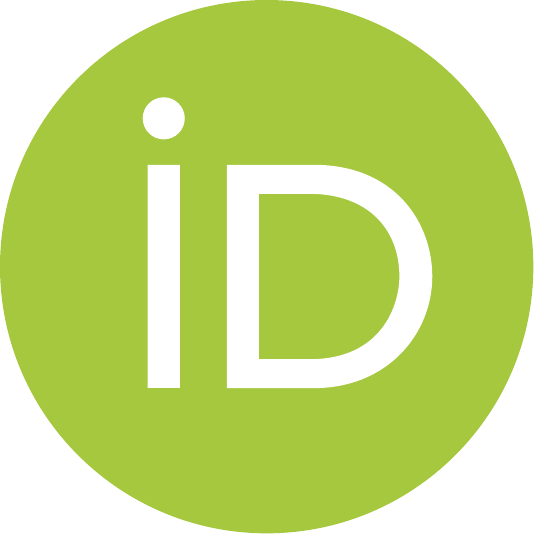}}}
\crefname{numclaim}{Claim}{Claims}
\Crefname{numclaim}{Claim}{Claims}
\crefname{observation}{Observation}{Observations}
\Crefname{observation}{Observation}{Observations}
\crefname{condition}{Condition}{Conditions}
\Crefname{condition}{Condition}{Conditions}
\newcommand{\mclique}{\textsc{Multicolored Clique}\xspace} 
\newcommand{\AoNF}{\textsc{All or Nothing Flow}\xspace}
\newcommand{\AoNFs}{\textsc{AoNF}\xspace} 
\newcommand{\COr}{\textsc{Circulating Orientation}\xspace}
\newcommand{\Oh}{\mathcal{O}}
\DeclareMathOperator{\pw}{pw}
\DeclareMathOperator{\VS}{VS}
\def\calG{\mathcal G}
\newcommand{\calF}{\mathcal{F}}
\DeclareMathOperator{\tw}{tw}
\let\CH\relax
\DeclareMathOperator{\CH}{CH}
\newcommand{\Epsilon}{\ensuremath{\mathcal{E}}}
\DeclareMathOperator{\rep}{rep}
\newcommand{\defquestion}[3]{
	\vspace{2mm}
	\noindent\fbox{
		\begin{minipage}{0.96\linewidth}
			\begin{tabular*}{\linewidth}{@{\extracolsep{\fill}}lr} \textsc{#1} & \\ \end{tabular*}
			{\bf{Input:}} #2 \\
			{\bf{Question:}} #3
		\end{minipage}
	}
	\vspace{2mm}
}
\newcommand{\defparquestion}[4]{
	\vspace{2mm}
	\noindent\fbox{
		\begin{minipage}{0.96\linewidth}
			\begin{tabular*}{\linewidth}{@{\extracolsep{\fill}}lr} \textsc{#1} & \\ \end{tabular*}
			{\bf{Input:}} #2 \\
                {\bf{Parameter:}} #3 \\
			{\bf{Question:}} #4
		\end{minipage}
	}
	\vspace{2mm}
}
\newcounter{casecounter}
\newcounter{subcasecounter}
\newcounter{subsubcasecounter}
\crefname{casecounter}{Case}{Cases}
\crefname{subcasecounter}{Case}{Cases}
\crefname{subsubcasecounter}{Case}{Cases}
\Crefname{casecounter}{Case}{Cases}
\Crefname{subcasecounter}{Case}{Cases}
\Crefname{subsubcasecounter}{Case}{Cases}
\begin{document}
\crefname{observation}{Observation}{Observations}
\Crefname{observation}{Observation}{Observations}
\crefname{numclaim}{Claim}{Claims}
\Crefname{numclaim}{Claim}{Claims}

\title{Upward and Orthogonal Planarity are W[1]-hard Parameterized by Treewidth}
\titlerunning{Upward and Orthogonal Planarity are W[1]-hard by Treewidth}

\author{Bart M. P. Jansen\inst{1}\orcidID{0000-0001-8204-1268} 
\and Liana Khazaliya\inst{2} \orcidID{0009-0002-3012-7240} 
\and Philipp Kindermann\inst{3} \orcidID{0000-0001-5764-7719} 
\and \\Giuseppe Liotta\inst{4} \orcidID{0000-0002-2886-9694} 
\and Fabrizio Montecchiani\inst{4}\orcidID{0000-0002-0543-8912} 
\and Kirill Simonov\inst{5}\orcidID{0000-0001-9436-7310}
}
\authorrunning{B.M.P. Jansen et al.}

\institute{Eindhoven University of Technology, Eindhoven, The Netherlands\footnote{Bart M. P. Jansen has received funding from the European Research Council (ERC) under the European Union’s Horizon 2020 research and innovation programme (grant agreement No 803421, ReduceSearch).} \email{b.m.p.jansen@tue.nl} 
\and Technische Universit\"{a}t Wien, Vienna, Austria\footnote{Liana Khazaliya is supported by Vienna Science and Technology Fund (WWTF) [10.47379/ICT22029]; Austrian Science Fund (FWF) [Y1329]; European Union’s Horizon 2020 COFUND programme [LogiCS@TUWien,
grant agreement No. 101034440].} \email{lkhazaliya@ac.tuwien.ac.at}
\and Universit\"{a}t Trier, Trier, Germany \email{kindermann@uni-trier.de} 
\and University of Perugia, Perugia, Italy\footnote{This work was supported, in part, by MUR of Italy, under PRIN Project n. 2022ME9Z78 - NextGRAAL: Next-generation algorithms for constrained GRAph visuALization, and under PRIN Project n. 2022TS4Y3N - EXPAND: scalable algorithms for EXPloratory Analyses of heterogeneous and dynamic Networked Data.}\email{name.surname@unipg.it}
\and Hasso Plattner Institute, University of Potsdam, Potsdam, Germany\footnote{Kirill Simonov acknowledges support by DFG Research Group ADYN via grant DFG 411362735.} \email{kirillsimonov@gmail.com}
}

\maketitle

\begin{abstract}
\textsc{Upward planarity testing} and \textsc{Rectilinear planarity testing} are central problems in graph drawing. It is known that they are both \NP-complete, but \XP~when parameterized by treewidth. In this paper we show that these two problems are W[1]-hard parameterized by treewidth, which answers open problems posed in two earlier papers. The key step in our proof is an analysis of the \textsc{All-or-Nothing Flow} problem, a generalization of which was used as an intermediate step in the NP-completeness proof for both planarity testing problems. We prove that the flow problem is W[1]-hard parameterized by treewidth on planar graphs, and that the existing chain of reductions to the planarity testing problems can be adapted without blowing up the treewidth. Our reductions also show that the known $n^{\Oh(\mathsf{tw})}$-time algorithms  cannot be improved to run in time~$n^{o(\mathsf{tw})}$ unless \textsc{ETH} fails.
\keywords{Upward Planarity, Rectilinear Planarity, Parameterized Complexity, Treewidth}
\end{abstract}


\section{Introduction}
A graph is \emph{planar} if it admits a drawing in the plane where no two edges cross each other. Testing graph planarity is among the most fundamental problems in graph algorithms and  graph drawing. 
While several papers proposed efficient algorithms for this problem (including the celebrated linear-time algorithm of Hopcroft and Tarjan~\cite{DBLP:journals/jacm/HopcroftT74}), notable variants and restrictions have also been investigated,  including clustered planarity (see, e.g.~\cite{DBLP:conf/esa/BlasiusFR21,DBLP:journals/jacm/FulekT22}), constrained planarity (see, e.g.~\cite{DBLP:journals/talg/BlasiusR16,DBLP:journals/jcss/LiottaRT23}), and $k$-planarity (see, e.g.~\cite{DBLP:journals/algorithmica/GrigorievB07,KM-2012,DBLP:journals/ipl/UrschelW21});  refer to~\cite{DBLP:reference/crc/Patrignani13} for a survey.

\begin{figure}
    \centering
    \subcaptionbox{Directed graph $\vv{G}$\label{fig:intro-input}}[.32\textwidth]{\includegraphics[page=1]{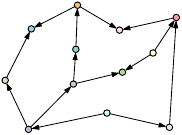}}
    \hfil
    \subcaptionbox{Upward planar drawing\label{fig:intro-upward}}[.32\textwidth]{\includegraphics[page=2]{example}}
    \hfil
    \subcaptionbox{Rectilinear drawing\label{fig:intro-rectilinear}}[.32\textwidth]{\includegraphics[page=3]{example}}
    \caption{A directed graph $\protect\vv{G}$, an upward planar drawing of $\protect\vv{G}$, and a rectilinear planar drawing of its underlying undirected graph $G$.}
    \label{fig:intro}
\end{figure}

This paper investigates two of such classical variants, namely \emph{upward planarity testing} and \emph{rectilinear planarity testing}. Given a directed acyclic graph $G$, upward planarity testing asks whether $G$ admits a crossing-free drawing where all edges are monotonically increasing in a common direction, which is conventionally called the upward direction; see \cref{fig:intro-upward}. For an undirected graph $G$, rectilinear planarity testing asks whether $G$ admits a crossing-free drawing such that each edge is either a horizontal or a vertical segment; see \cref{fig:intro-rectilinear}. Both upward planarity and rectilinear planarity testing are classical and extensively investigated topics in graph drawing (see, e.g.~\cite{DBLP:books/ph/BattistaETT99,DBLP:conf/dagstuhl/1999dg,DBLP:books/ws/NishizekiR04,DBLP:reference/crc/2013gd}).

While apparently different, the two problems have a lot in common. Namely, both an upward planar drawing of a digraph and a rectilinear planar drawing of a graph exists provided that the graph has a planar embedding where for every face $f$ there is some ``balancing'' of the angles that the edges along the boundary of the face form in the interior of $f$. Consider, for simplicity, biconnected graphs. In a rectilinear planar drawing of a biconnected graph the boundary of every internal face $f$ is an orthogonal polygon and hence the number of $\frac{\pi}{2}$ angles  minus the number of $\frac{3\pi}{2}$ angles must be $4$. Similarly, if a biconnected graph is upward planar it admits an upward planar drawing where for every internal face the number of angles that are smaller than $\frac{\pi}{2}$ always exceeds by two units the number of angles that are larger than $\pi$.

Hence, it is quite natural to see that in the \emph{fixed embedding setting} (i.e. when the combinatorial structure of the faces of the graph is given as part of the input) classical results solve both upward planarity testing and rectilinear planarity testing in polynomial time by looking for the existence of a feasible flow in a network where vertices supply angles to faces and faces have a limited capacity which depends on structure of the graph~\cite{DBLP:journals/algorithmica/BertolazziBLM94,DBLP:journals/siamcomp/Tamassia87}.
On the other hand, both problems are \NP-complete in the so-called \emph{variable embedding setting}, that is when the testing algorithm must verify whether the input graph has a combinatorial structure of its faces which allows the  balancing of angles described above. Again unsurprisingly, both proofs of \NP-completeness follow the same logic based on a reduction from a common flow problem on planar graphs~\cite{GargT01}.

These \NP-completeness results have motivated a flourishing literature describing both polynomial-time solutions for special classes of graphs and parameterized solutions for general graphs. For example, polynomial-time solutions are known for both problems when the input graph has treewidth at most two~\cite{DBLP:journals/siamcomp/BattistaLV98,DBLP:journals/siamdm/DidimoGL09,DBLP:conf/gd/DidimoKLO22,DBLP:journals/comgeo/Frati22,DBLP:journals/jcss/GiacomoLM22}; also, rectilinear planarity testing can be solved in linear time if the maximum degree of the input graph is at most three~\cite{DBLP:conf/soda/DidimoLOP20}, and upward planarity testing can be solved in linear time if the digraph has only one source vertex~\cite{DBLP:journals/siamcomp/BertolazziBMT98,DBLP:conf/gd/BrucknerHR19,DBLP:journals/siamcomp/HuttonL96}.
Concerning parameterized solutions, upward planarity testing is fixed-parameter tractable when parameterized by the number of triconnected components~\cite{10.1007/978-3-540-30140-0_16}, by the treedepth~\cite{DBLP:conf/compgeom/ChaplickGFGRS22}, and by the number of sources~\cite{DBLP:conf/compgeom/ChaplickGFGRS22}.

The research in this paper is motivated by the fact that both upward planarity and rectilinear planarity testing are known to lie in   \XP~when parameterized by  treewidth~\cite{DBLP:conf/compgeom/ChaplickGFGRS22,DBLP:journals/jcss/GiacomoLM22}. Determining whether these two parameterized problems are in \FPT~are mentioned as open problems in both~\cite{DBLP:conf/compgeom/ChaplickGFGRS22,DBLP:journals/jcss/GiacomoLM22}. The main contribution of this paper is as follows.

\begin{theorem}\label{th:main}
Upward planarity testing and rectilinear planarity testing parameterized by treewidth are both \textsc{W[1]}-hard. Moreover, assuming the Exponential Time Hypothesis, neither problem can be solved in time~$f(k) \cdot n^{o(k)}$ for any computable function~$f$, where~$k$ is the treewidth of the input graph.
\end{theorem}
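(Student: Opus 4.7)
The plan is to follow the roadmap hinted at in the abstract: first establish that \AoNFs on planar graphs is W[1]-hard parameterized by treewidth (with a matching $n^{o(\tw)}$ ETH lower bound), and then push the hardness through the Garg--Tamassia chain of reductions~\cite{GargT01} while controlling the treewidth.

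For the first step I would reduce from $k \times k$ Grid Tiling (or equivalently from \mclique on $k$ color classes), which under ETH admits no $f(k)\cdot n^{o(k)}$-time algorithm. The reduction would place one small cell gadget at each of the $k^2$ grid positions, with horizontal and vertical ``wires'' running between neighboring cells. Each gadget is a constant-sized planar \AoNFs sub-network whose feasible all-or-nothing flows encode the choice of a row/column pair in that cell, and the wires propagate the choices along rows and columns to enforce pairwise agreement. Because the gadgets have constant size and only neighboring cells share vertices, a row-by-row sweep of the grid yields a tree decomposition of width $O(k)$, so the whole construction is planar and of treewidth linear in $k$.

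For the second step, I would revisit the reductions used in~\cite{GargT01}, which turn a flow-style intermediate problem (a generalization of \AoNFs) into upward and rectilinear planarity testing instances. These reductions replace each edge and vertex of the flow network by a small planar widget of constant size, glued to the rest of the graph along a constant-sized interface. I would first check that the simpler \AoNFs variant produced above is still strong enough to drive the Garg--Tamassia constructions (adjusting the widgets where the generalization was essential), and then argue that each local widget replacement inflates any tree decomposition by at most an additive constant. Starting from an \AoNFs instance of treewidth $O(k)$ this yields upward and rectilinear planarity instances of treewidth $O(k)$; the ETH lower bound of $n^{o(k)}$ for Grid Tiling then transfers to give the claimed $n^{o(\tw)}$ lower bound, and W[1]-hardness follows by the same parameter-preserving reductions.

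The main obstacle, I expect, is the first step. Producing a planar \AoNFs gadget that is simultaneously rigid (permitting only the intended combinations of local flow patterns), constant in size, and placeable in a grid with only $O(k)$ interface edges in any row-cut is delicate: naive constructions tend either to require crossings to wire up, or to use long auxiliary paths that blow the treewidth up to $\Omega(k \log n)$ or worse. Once this planar grid of bounded-width gadgets is in place, the remainder is essentially bookkeeping: propagating the treewidth bound through the Garg--Tamassia widgets and verifying correctness via the angle-balancing arguments that are standard in the fixed- and variable-embedding analyses of upward and orthogonal drawings.
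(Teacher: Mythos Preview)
Your two-step plan matches the paper's, but the second step contains two concrete gaps. First, the Garg--Tamassia tendril gadgets are \emph{not} of constant size: the tendril~$T_\ell$ replacing an edge of weight~$\ell$ has~$\Theta(\ell)$ vertices, and the weights coming out of the flow construction are polynomial in the instance size (they encode vertex indices via capacities of the form~$2kN+2q$). The reason treewidth stays bounded is that each tendril has constant \emph{pathwidth} (namely~$2$); replacing every edge of a pathwidth-$p$ graph by a constant-pathwidth gadget increases pathwidth by only an additive constant (\cref{lemma:pw_gadgets}). Your additive-constant conclusion is right, but the stated reason (``constant size'') would not survive scrutiny and, if taken literally, would force you into a construction with tiny capacities that cannot encode the clique instance. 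Second, you are missing a structural layer: the tendrils are not glued onto the flow network itself but onto the planar \emph{dual} of an intermediate \COr instance, so that faces of the dual (corresponding to vertices of the primal) receive the balanced angle contributions. Bounding the treewidth of the final instance therefore also needs the non-trivial fact that the dual of a bounded-pathwidth planar graph again has bounded pathwidth (Amini--Huc--P\'erennes, \cref{theorem:pw_dual}); without this the argument does not close.

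For the first step the paper does not go through Grid Tiling. It reduces directly from \mclique via $k$ horizontal rows of~$N$ parallel arcs (capacities~$2kN+2q$ encoding the choice of the $q$-th vertex in~$V_i$) and one checker column per non-edge, with a capacity-$1$ bottleneck that forbids the two bad choices simultaneously. Planarity comes from the observation that every crossing in the natural drawing involves arcs of \emph{distinct} capacities, so each crossing can be replaced by a single degree-$4$ vertex without affecting all-or-nothing feasibility. This row-and-checker construction, together with the trick of absorbing crossings via distinct capacities, is one concrete resolution of the obstacle you correctly flagged as the delicate part.
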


\cref{th:main} implies that, under the standard hypothesis \FPT~$\neq$~\W[1] in parameterized complexity, there exists no fixed-parameter tractable algorithm for either problem parameterized by treewidth, hence answering the above mentioned open problems.
To obtain our results we analyze the auxiliary flow problem used as a common starting point in the \NP-completeness proof of both planarity  problems. It closely resembles the \textsc{All-or-Nothing Flow} problem (\AoNFs), which asks for an $st$-flow of prescribed value in an edge-capacitated flow network such that each edge is either used fully, or not at all. The \AoNFs problem parameterized by treewidth was recently shown to be \textsc{W[1]}-hard (in fact, even \textsc{XNLP}-complete) on general graphs by Bodlaender et al.~\cite{BodlaenderCW22}. By a significant adaptation of their construction, we can prove that \AoNFs parameterized by treewidth remains \textsc{W[1]}-hard on \emph{planar} graphs. By revisiting the chain of reductions to the planarity testing problems, passing through the \textsc{Circulating Orientation} problem in between, we show they can be carried out without blowing up the treewidth of the graph and thereby obtain~\cref{th:main}.

The rest of the paper is organized as follows. In \cref{sec:preliminaries} we formally define the problems involved in our chain of reductions. Due to space limitations, all formal proofs have been deferred to the appendix. In \cref{sec:overview} we therefore provide high-level sketches of our proofs. We conclude in \cref{sec:conclusion}. For space reasons, results marked with a (clickable) ``$\star$'' are proved in the appendix.

\section{Preliminaries} \label{sec:preliminaries}
We assume familiarity with the basic notions of graph drawing~\cite{DBLP:books/ph/BattistaETT99} and of parameterized complexity~\cite{CyganFKLMPPS15}, including the notions of treewidth and pathwidth which are commonly used parameters to capture the complexity of a graph~$G$. The formal definitions, including those for upwards and rectilinear planarity, can be found in the appendix. Below we define the parameterized problems which are used in the chain of reductions of our lower bounds. Throughout the paper we utilize both undirected and directed graphs, which may have parallel edges but no loops. A graph without parallel arcs or edges is a \emph{simple} graph. We use $uv$ to denote a (directed) arc from~$u$ to~$v$ and~$\{u,v\}$ to denote an (undirected) edge between~$u$ and~$v$. The vertex set of a graph~$G$ is denoted by~$V(G)$ and the (multi)set of edges by~$E(G)$. Depending on whether the graph is directed or not,~$E(G)$ either contains its undirected edges or its directed arcs. For a vertex in a directed graph~$G$, we denote by~$E^-_G(v)$ the (multi)set of arcs leading into~$v$, and by~$E^+_G(v)$ the (multi)set of arcs leading out of~$v$.

The starting point of our reductions is the following parameterized version of the \textsc{Clique} problem, which is well-known to be W[1]-complete~\cite[Thm. 13.25]{CyganFKLMPPS15}. Assuming ETH, it cannot be solved in time~$f(k) n^{o(k)}$ for any computable function~$f$~\cite[Cor. 14.32]{CyganFKLMPPS15}.

\defparquestion{\mclique}{An undirected simple graph~$G$ and a partition of its vertex set into~$k$ sets~$V_1, \ldots, V_k$, each consisting of~$N$ vertices.}{$k$.}{Does~$G$ contain a clique~$C \subseteq V(G)$ such that~$|C \cap V_i| = 1$ for each~$i \in [k]$?}

Note that the assumption that all sets~$V_i$ have the same size is without loss of generality, since we may pad the input with isolated vertices if needed.

The next problem in our chain of reductions is a variation of \textsc{Maximum Flow}; we therefore need some terminology regarding flows. A \emph{flow network}~$(G,c,s,t)$ consists of a directed graph~$G$ with a capacity function~$c \colon E(G) \to \mathbb{Z}_+$ on the arcs, together with two distinct vertices~$s,t$ called the \emph{source} and \emph{sink}. We allow a flow network to have parallel arcs. An \emph{$st$-flow} in the flow network is a function~$f \colon E(G) \to \mathbb{Z}_{\geq 0}$ such that for each arc~$e \in E(G)$ we have~$0 \leq f(e) \leq c(e)$ (capacity constraints), and for each vertex~$v \in V(G) \setminus \{s,t\}$, we have~$\sum_{e \in E^-_G(v)} f(e) = \sum_{e \in E^+_G(v)} f(e)$ (flow conservation). The \emph{value} of the flow is defined as~$\sum_{e \in E^+_G(s)} f(e) - \sum _{e \in E^-_G(s)} f(e)$. For a vertex~$v$ in a flow network~$(G,c,s,t)$, we denote the total capacity of arcs leaving~$v$ by~$d^+_G(v) := \sum_{e \in E^+_G(v)} c(e)$ and use~$d^-_G(v) := \sum_{e \in E^+_G(v)} c(e)$ for the total capacity of arcs entering~$v$. While a maximum flow can be found in polynomial time, the following variation is hard.

\defquestion
{\AoNF}
{A flow network~$(G,c,s,t)$ and a positive integer~$\calF$.}
{Does there exist an $st$-flow of value exactly~$\calF$, such that the flow through any arc~$uv \in E(G)$ is either~$0$ or equal to~$c(uv)$?}

We then reduce to the following problem on undirected graphs that models the combinatorial difficulty encountered in testing upwards or rectilinear planarity, because it captures the problem of deciding orientations of edges to balance certain contributions around a vertex (i.e. a face of the dual graph). 

\defquestion
{\COr}
{An undirected graph~$G$ with an edge-capacity function~$c \colon E(G) \to \mathbb{Z}_{\geq 0}$.}
{Is it possible to orient the edges of~$G$, such that for each vertex~$v \in V(G)$ the total capacity of edges oriented into~$v$ is equal to the total capacity of edges oriented out of~$v$? (Such an orientation is called a \emph{circulating orientation}.)} 

Edges are allowed to have capacity~$0$ in this problem, which allows us to construct \emph{triconnected} instances in the hardness reduction by inserting capacity-0 edges that do not violate planarity and do not blow up the pathwidth. For an undirected multigraph~$G$, we use~$E_G(v)$ to denote the (multi)set of edges incident on a vertex~$v \in V(G)$. In the context of an edge-capacity function~$c$, we denote the total capacity of edges incident on~$v$ by~$d_G(v) := \sum_{e \in E_G(v)} c(e)$. 

In the non-planar setting, \AoNF easily reduces to \COr by a polynomial-time transformation that increases the pathwidth by only a constant: it suffices to add a super-source and super-sink with properly chosen capacities on their incident edges~\cite{BodlaenderCW22}. Since the addition of a super-source and super-sink typically violates planarity of the graph, in our hardness construction for the flow problem we take special care to produce instances that can later be reduced to \COr without violating planarity. We point out that Didimo et al.~\cite{DidimoLP19} recently proved the NP-completeness of \COr on planar graphs. However, their reduction does not have any consequences for the complexity of the problem parameterized by treewidth.



\section{Overview} \label{sec:overview}

Since space requirements prohibit us from presenting our reductions in detail, we give an outline that discusses the main technical ideas behind our result and defer the formal proofs to the appendix. 

\subsection{Hardness of \textsc{All-or-Nothing-Flow}}
\paragraph{The non-planar case.} To aid the intuition, we first sketch an FPT-reduction from \mclique to non-planar \AoNF on a graph of pathwidth~$\Oh(k)$, which is inspired by an XNLP-completeness proof due to Bodlaender et al.~\cite{BodlaenderCW22}. Given an input~$(G, V_1, \ldots, V_k, k)$, which asks whether~$G$ has a clique containing exactly one vertex from each of the size-$N$ sets~$V_1, \ldots, V_k$, we construct a flow network~$\calG$ as follows. Number the vertices in each set~$V_i$ as~$v_{i,1}, \ldots, v_{i,N}$. Let~$m = |\overline{E}(G)|$, where~$\overline{E}(G)$ is the set of unordered vertex pairs which do not form an edge of~$G$. The graph~$\calG$ contains~$k$ \emph{rows}~$R_1, \ldots, R_k$. Each row~$R_i$ consists of~$m+1$ vertices~$V_{i}^{j}$ for~$j\in [m+1]$. For each~$j \in [m]$, there are~$N$ parallel arcs from~$V_{i}^{j}$ to~$V_{i}^{j+1}$ whose capacities are~$2kN+2q$ for each~$q \in [N]$. (Below, we will decrease the capacities of some of these arcs by~$1$, to model the non-edges of~$G$.) Intuitively, sending flow over an arc with capacity~$2kN+2q$ on row~$R_i$ corresponds to selecting the~$q$-th vertex of~$V_i$ into the clique. Flow conservation will ensure that the same choice is made for all arcs on the same row~$R_i$. To feed each row~$R_i$, there is an arc of capacity~$2kN$ from the source~$s$ to the first vertex~$V_{i}^{1}$, along with~$N$ parallel arcs of capacity~$2$. This is sufficient to saturate any single edge on the row, but insufficient to saturate two edges. The analogous arcs leave from the last vertex~$V_{i}^{m+1}$ to the sink~$t$.

\begin{figure}[t]
    \centering
    \includegraphics[width=\textwidth]{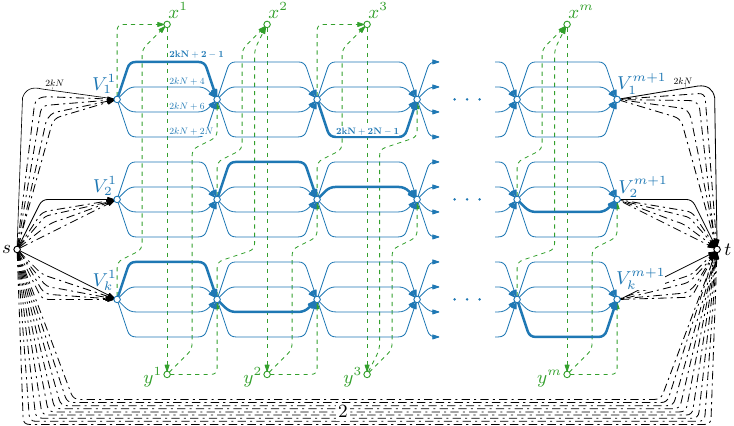}
    \caption{Illustration for the FPT-reduction from \mclique with~$k=3$,$N=4$ to non-planar \AoNF on a graph of pathwidth~$\Oh(k)$. Dashed edges have capacity~1, dash-dotted edges ($k(N-1)$ of them on the bottom) have capacity~2.}
    \label{fig:multicolored-k-clique-reduction}
\end{figure}

We will set the target value~$\calF$ of the flow problem to~$k(2kN+2N)$, which is effectively the amount that we obtain when selecting the largest index on each row. To compensate for the fact that some rows may want to select a smaller index, corresponding to sending less than~$2kN+2N$ flow, we additionally add $k(N-1)$ parallel capacity-2 arcs directly from~$s$ to~$t$. This number is chosen so that any feasible flow which sends at least value~$2kN+2$ through each row can be augmented to a flow of value exactly~$k(2kN+2N)$, while making it impossible to reach the desired flow value without choosing one index on each row.

The final part of the construction ensures that the choices encoded by the flow correspond to a multicolored $k$-clique in~$G$, using a small gadget that crucially exploits the all-or-nothing property of the flow; see \cref{fig:multicolored-k-clique-reduction}. For each pair~$v_{i,a}, v_{\ell,b}$ of non-adjacent vertices of~$G$ with~$i \neq \ell$, we incorporate a gadget to ensure that we cannot simultaneously choose flow value~$2kN+2a$ on row~$R_i$ and~$2kN+2b$ on row~$R_{\ell}$. We pick a unique index~$z$ from~$[m]$ for this non-edge of~$G$, and adapt the construction as follows:
\begin{itemize}
    \item Among the multiple arcs from~$V_i^z$ to~$V_i^{z+1}$, we consider the arc whose capacity was set to~$2kN + 2a$, which corresponds to choosing vertex~$v_{i,a}$. We decrease the capacity of this arc by~$1$.
    \item Among the multiple arcs from~$V_\ell^z$ to~$V_\ell^{z+1}$, we consider the arc whose capacity was set to~$2kN + 2b$, which correspond to choosing vertex~$v_{\ell,b}$. We decrease the capacity of this arc by~$1$.
    \item We introduce two new vertices $x^z, y^z$ and an arc~$x^z y^z$ of capacity~$1$. Vertices~$V_{i}^{z}, V_{\ell}^{z}$ both get an arc to~$x^z$ of capacity~$1$. Vertices~$V_{i}^{z+1}, V_{\ell}^{z+1}$ both get an arc from~$y^z$ of capacity~$1$.
\end{itemize}
The idea behind the gadget is as follows. It is possible to send a flow of value~$2kN+2a$ from~$V_{i}^{z}$ to~$V_{i}^{z+1}$ (by utilizing the arc of capacity~$2kN+2a-1$ to~$V_{i}^{z+1}$ along with a flow of value 1 along the path via~$x^z, y^z$). Analogously, it is possible to send a flow of value~$2kN+2b$ from~$V_{\ell}^{z}$ to~$V_{\ell}^{z+1}$. But we cannot do both simultaneously, due to the capacity-1 bottleneck between~$x^z$ and~$y^z$ and the fact that the capacity of the other arcs on the row is either too large or too small to be used in an all-or-nothing fashion. Hence we ensure the values~$a$ and~$b$ cannot be simultaneously selected on rows~$i$ and~$\ell$.

The construction is completed by inserting such a gadget for each non-edge of~$G$. The resulting flow network~$\calG$ can be shown to have pathwidth~$\Oh(k)$ since it effectively consists of~$k$ paths whose interconnections are confined to vertices whose index differ by at most one along the path.

\paragraph{Planarizing the instance.} It is conceptually not difficult to extend the construction to prove hardness also for planar instances of the flow problem, since the all-or-nothing nature of the flow facilitates a simple method to eliminate edge crossings. Suppose we have a crossing between two arcs~$uv$ and~$xy$ whose capacities are different. Then we may simply replace the crossing arcs~$uv,xy$ by a new vertex~$d$ of degree four along with arcs~$ud, dv$ of capacity~$c(uv)$ and arcs~$xd, dy$ of capacity~$c(xy)$. This transformation preserves the answer to the \AoNF problem: in one direction, any flow in the original network trivially yields a flow of the same value in the transformed network. The interesting step is the converse direction. An all-or-nothing flow in the reduced network either sends flow over both halves~$ud,dv$ (respectively~$xd,dy$) of an arc, or over neither half of the arc: flow-conservation ensures that all flow entering~$d$ must also exit~$d$, while the all-or-nothing property of the flow together with the fact that~$c(uv) \neq c(xy)$ means that flow entering on arc~$ud$ ($xd$) cannot leave on arc~$dy$ ($dv$). (The same approach for eliminating edge crossings was used previously by Didimo et al.~\cite{DidimoLP19}.)

Since the non-planar drawings coming out of the construction above (\cref{fig:multicolored-k-clique-reduction}) have the property that all crossings involve pairs of arcs of different capacities, we can simply planarize the drawing by inserting degree-four vertices where needed. It is not difficult to show that the pathwidth increases by only a constant factor, resulting in the following lemma. Its proof can be found in the appendix.

\begin{restatable}[\restateref{lemma:aonf}]{lemma}{LemAonf}
\label{lemma:aonf}
There is a polynomial-time algorithm that, given an instance of \mclique with parameter~$k$, outputs an equivalent instance of \AoNF on a planar graph of pathwidth~$\Oh(k)$ whose edge capacities are bounded by a polynomial in~$|V(G)|$.
\end{restatable}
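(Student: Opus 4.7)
The plan is to formalize the construction sketched in the overview and then verify three things: correctness, planarizability, and the pathwidth/capacity bounds. I would start by fixing the notation exactly as in the overview: number the non-edges of $G$ as $e_1,\ldots,e_m$ where $m = |\overline{E}(G)|$, and build the flow network $\calG$ consisting of $k$ rows $R_1,\ldots,R_k$, each a sequence $V_i^1,\ldots,V_i^{m+1}$ connected by $N$ parallel arcs with capacities $\{2kN+2q : q \in [N]\}$, plus the source/sink attachments (one arc $s V_i^1$ of capacity $2kN$ and $N$ parallel arcs of capacity $2$, symmetrically to $t$), plus $k(N-1)$ parallel $s$--$t$ arcs of capacity $2$, plus, for each non-edge $\{v_{i,a},v_{\ell,b}\}$ assigned to position $z$, the three new arcs and two new vertices $x^z,y^z$ together with the capacity decrement on the two relevant row-arcs. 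I would set $\calF := k(2kN+2N) + 2k(N-1)$ (i.e., the maximum choice on each row plus full use of the $s$--$t$ bypass arcs), matching the value that arises when one selects the largest index on every row and fully uses the bypass arcs.

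For correctness, in the forward direction I would take a multicolored clique $\{v_{i,q_i}\}_{i \in [k]}$ and explicitly define an all-or-nothing flow that, on each row $R_i$, saturates the arc of capacity $2kN + 2q_i$ (using the capacity-$2kN$ feeder arc together with $q_i$ of the capacity-$2$ feeder arcs), uses the remaining $N-q_i$ capacity-$2$ feeders on the $s$--$t$ bypass, and leaves all non-edge gadget arcs empty; since the chosen indices correspond to an edge of $G$, none of the gadgets forbid the combination. The converse direction is the delicate part: I would argue that any all-or-nothing flow of value exactly $\calF$ must saturate every capacity-$2$ arc out of $s$ (by a parity/value count), which forces each row to carry flow in $[2kN+2,\,2kN+2N]$; then the all-or-nothing condition together with the parity of capacities forces exactly one row-arc to be selected per $V_i^j \to V_i^{j+1}$ step, and flow conservation forces this selection to be consistent across the entire row. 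The non-edge gadget then yields a contradiction for any non-edge combination, because routing flow through the capacity-$1$ bridge $x^z y^z$ uses exactly one unit on each side, which only balances if both rows at position $z$ use the specific ``$-1$'' arcs; and the other row-arc capacities are either $\geq 2kN$ away from the needed correction or have the wrong parity to be combined under the all-or-nothing rule.

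For planarization I would draw $\calG$ with the rows stacked vertically, the source on the left and sink on the right, and the non-edge gadgets between the two rows they involve. In this drawing, every crossing occurs between a gadget arc of capacity $1$ (respectively $2$) and a row arc of capacity $\geq 2kN$, so all crossings involve arcs of strictly different capacities. Then the subdivision trick from the overview applies verbatim: replace each crossing by a degree-$4$ vertex with the incident arcs split into two halves, each half inheriting the capacity of its original arc. Since the two halves of each original arc carry the same capacity while the two arcs crossing there carry different capacities, flow conservation at the new degree-$4$ vertex, combined with the all-or-nothing rule and distinctness of the capacities, forces flow to pass straight through each crossing, preserving equivalence of the flow problem.

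For the pathwidth bound, I would construct an explicit path decomposition by sweeping left to right across the ``column'' index $j \in [m+1]$: bag $B_j$ contains $s,t$, the ``current'' column vertices $V_1^j,\ldots,V_k^j$ and $V_1^{j+1},\ldots,V_k^{j+1}$, and (only while processing the gadget at column $z=j$) the two gadget vertices $x^j,y^j$; the subdivision vertices introduced during planarization sit between consecutive columns and can be handled by a constant number of extra bags inserted between $B_j$ and $B_{j+1}$, each containing $\Oh(k)$ ``active'' vertices at that point. This gives bags of size $\Oh(k)$, hence pathwidth $\Oh(k)$, and the whole construction has size polynomial in $|V(G)|$ with capacities bounded by $2kN + 2N = \Oh(|V(G)|)$. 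I expect the main obstacle to be the backward direction of correctness: rigorously ruling out every spurious all-or-nothing flow of total value $\calF$ requires a careful case analysis combining the value equation, the parities of the $2kN+2q$ and $2kN+2q-1$ capacities, and the capacity-$1$ bottleneck in each gadget; once that analysis is laid out, the planarization and pathwidth bookkeeping are routine.
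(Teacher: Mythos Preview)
Your choice of the target flow value is wrong, and this breaks the reduction in both directions. You set $\calF = k(2kN+2N) + 2k(N-1)$, which equals $d^+_{\calG}(s)$. Since the source has no incoming arcs, any all-or-nothing flow of this value must saturate \emph{every} arc leaving $s$. In particular each row $R_i$ then receives exactly $2kN+2N$ units, and since any two parallel row arcs together carry at least $2(2kN+1) > 2kN+2N$, only a single row arc can be used; the only one with the right capacity is the arc for $q=N$. Hence your construction would force $q_i = N$ on every row, encoding no choice at all. The paper (and the overview you were following) takes $\calF = k(2kN+2N)$; the $k(N-1)$ bypass arcs are slack that is used \emph{partially}, namely with $\sum_i (N-q_i)$ of them, to top up the flow when some rows select smaller indices. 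Your own forward-direction description (``uses the remaining $N-q_i$ capacity-$2$ feeders on the $s$--$t$ bypass'') is in fact consistent with $\calF = k(2kN+2N)$, not with the value you wrote.

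There is a second, smaller slip in the forward direction: you cannot ``leave all non-edge gadget arcs empty''. If the clique picks $v_{i,a}$ and column $z$ encodes a non-edge $\{v_{i,a},v_{\ell,b}\}$, then the row-$i$ arc for index $a$ at column $z$ has capacity $2kN+2a-1$, so one unit must go through the $x^z y^z$ path. The clique property only guarantees that rows $i$ and $\ell$ do not \emph{both} need the gadget at column $z$, not that neither does. With the correct $\calF$ and this fix, your planarization-by-crossing-subdivision and column-sweep path decomposition are fine and match the approach sketched in the overview (the appendix instead achieves planarity directly via a more elaborate gadget, but either route works).
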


The bound on the edge capacities of the instance will later govern the size of \emph{tendril} gadgets that will be created for the planarity testing problems.

\subsection{Hardness of \textsc{Circulating Orientation}}
We continue by describing the relation between \AoNF and \COr, starting with a special case that will be insightful to establish some intuition. Suppose we have a flow network~$(G,s,t,c)$ in which we ask for an all-or-nothing flow of value~$\calF$, satisfying the following conditions: for each vertex~$v \in V(G) \setminus \{s,t\}$ we have~$d^+_G(v) = d^-_G(v)$, the source has no incoming arcs, the sink has no outgoing arcs, and~$\calF = d^+_G(s)/2 = d^-_G(t)/2$. 

We argue that this flow instance is equivalent to the instance of \COr on the edge-capacitated undirected graph~$G'$ that is simply obtained from~$G$ by dropping the orientation of the edges. An all-or-nothing flow~$f$ of value~$\calF$ in~$G$ leads to a circulating orientation of the undirected graph~$G'$, as follows: simply start from the orientation of arcs as given by~$G$, but reverse the orientation for each arc~$uv \in E(G)$ with~$f(uv) = 0$. For each vertex~$v \in V(G) \setminus \{s,t\}$, flow conservation ensures that the total capacity of the incoming edges that carry flow is equal to the capacity of outgoing edges that carry flow. By the assumption that~$d^-_G(v) = d^+_G(v)$, the total capacity of out-arcs of~$v$ which is reversed by the process  equals that of the capacity of in-arcs of~$v$ which is reversed. Hence the total capacity of arcs which are oriented outwards remains unaffected by the reversals and equals~$d^-_G(v) = d^+_G(v) = d_{G'}(v)/2$. For the source~$s$, since the flow has value~$\calF = d^+(s)/2$ by assumption, reversing the orientation of edges not carrying flow leaves~$\calF = d^+_G(s)/2$ of capacity oriented out of~$s$, which is exactly half of~$d_{G'}(s)$ since there are no incoming arcs. The situation for the sink is analogous. This shows that a solution to the flow problem yields a solution to \COr, and it is not difficult to show the converse also holds.

Under the simplifying assumptions above, it is therefore trivial to reduce from \AoNF to \COr. In the flow network~$G$ we construct above, most vertices~$v \in V(G) \setminus \{s,t\}$ satisfy~$d^-_G(v) = d^+_G(v)$. The only exceptions are the~$x^j$-vertices, along with the last vertex of each row (their in-capacity exceeds their out-capacity), and the~$y^j$ vertices along with the first vertex of each row (their out-capacity exceeds their in-capacity). In general graphs, the imbalance can be resolved by adding a super-source~$S$ and super-sink~$T$. Here we can do something similar while preserving planarity, utilizing the fact that there is a face in the embedding that contains all~$x^j$ vertices along with the source~$s$ and sink~$t$,  and another face that contains all~$y^j$-vertices and~$\{s,t\}$. We insert a super-source and super-sink into these faces and use them to resolve the imbalance of the~$x^j$ and~$y^j$ vertices. Based on a delicate argument, we show that the imbalance of the first and last vertices of each row can be resolved via the standard source and sink, which are already adjacent to them.

For the super-source~$S$ and super-sink~$T$ to work as desired, we need an edge between them. This cannot be drawn in a planar fashion. However, we argue that the effect of the edge~$\{S,T\}$ can be simulated by having a four-cycle~$(s,S,t,T)$ involving the standard source and sink with appropriate capacities, which can be added in a planar fashion. By carefully setting the capacities, this four-cycle also resolves the issue caused by the fact that~$\calF \neq d^+_G(s)/2$.

To transform the resulting instances of \COr into the two planarity testing problems, it will be useful for the constructed planar graph to be triangulated (which implies it is triconnected). We can achieve this property by a post-processing step based on a result by Biedl. She proved~\cite{Biedl16} that any simple planar graph~$G$ of pathwidth~$k$ can be transformed into a triangulated planar supergraph~$G'$ on the same vertex set having pathwidth~$\Oh(k)$, and such a triangulation can be computed efficiently. To make our graph simple, we can start by subdiving all edges which is known to increase the pathwidth by at most a constant. In the context of the \COr problem, in which we are allowed to have edges of capacity 0 which do not affect the answer to the problem, we may then compute a triangulation of the simple graph and assign all newly introduced edges capacity 0, thereby leading to the following lemma.

\begin{restatable}[\restateref{lemma:co}]{lemma}{LemCo}
\label{lemma:co}
There is a polynomial-time algorithm that, given an instance of \mclique with parameter~$k$, outputs an equivalent instance of \COr on a simple, triconnected, triangulated planar graph of pathwidth~$\Oh(k)$ whose edge capacities are bounded by a polynomial in~$|V(G)|$.
\end{restatable}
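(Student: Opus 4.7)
The plan is to chain Lemma~\ref{lemma:aonf} with a careful reduction from planar \AoNF to \COr that preserves both planarity and pathwidth, followed by a post-processing step that makes the graph simple, triconnected, and triangulated. Starting from the planar \AoNF instance produced by Lemma~\ref{lemma:aonf}, I would drop the orientations of all arcs and keep the same capacity function. For every vertex $v \notin \{s,t\}$ that is already balanced in the sense that $d^-_G(v) = d^+_G(v)$, the correspondence sketched in the overview already makes all-or-nothing flows correspond to circulating orientations. The main task is therefore to patch up the few unbalanced vertices (the $x^j$-vertices and the last vertex of each row have deficient out-capacity, while the $y^j$-vertices and the first vertex of each row have deficient in-capacity), and to encode the requirement that the target flow value is exactly $\calF$ rather than the maximum possible through $s$ and $t$.

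Next I would introduce a super-source $S$ and a super-sink $T$ that take care of the $x^j$- and $y^j$-imbalances. Here I rely on the structural property of the drawing produced in the proof of Lemma~\ref{lemma:aonf}: all $x^j$-vertices share one face with $s$ and $t$, and all $y^j$-vertices share another face with $s$ and $t$. Placing $S$ inside the first face and $T$ inside the second allows me to add edges from $S$ to every $x^j$, and from every $y^j$ to $T$, without violating planarity, with capacities chosen so that each $x^j$ and $y^j$ becomes balanced. The imbalance at the first and last vertex of each row can be absorbed by additional edges to the standard source and sink $s,t$, which are already neighbours of these vertices and share a face with them. Arguing that these local repairs still preserve the correspondence between all-or-nothing flows and circulating orientations, while never blowing up pathwidth by more than an additive constant, is what I expect to be the main technical obstacle, since the ``extra'' capacity introduced at $s$ and $t$ has to be accounted for exactly.

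A direct reduction would now add an edge $\{S,T\}$ of capacity $\calF$ serving as the return-flow path, but this edge cannot be drawn planarly. Instead I would simulate its effect by a four-cycle on the vertices $s,S,t,T$ drawn along the outer face, carefully choosing the capacities of the four edges so that (i) $S$ and $T$ are balanced, (ii) the global flow across the $\{s,t\}$--$\{S,T\}$ cut is forced to be exactly $\calF$, and (iii) the discrepancy between $\calF$ and $d^+_G(s)/2$ on the source side, together with the symmetric discrepancy on the sink side, is absorbed. Verifying the correctness of this four-cycle encoding amounts to a small case analysis on which of its four edges can be oriented ``outward'' from $s$ and $t$ in a circulating orientation.

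Finally, to obtain a simple, triconnected, and triangulated instance, I would subdivide every edge once, which makes the graph simple and raises pathwidth by at most a constant factor; each subdivision vertex has two incident edges of equal capacity and can therefore always be balanced by orienting one in and one out, which in turn corresponds to a unique orientation of the original edge. I would then apply Biedl's theorem~\cite{Biedl16} to extend the simple planar graph to a triangulated, hence triconnected, planar supergraph whose pathwidth is still $\Oh(k)$, and assign capacity $0$ to every newly added triangulation edge. Zero-capacity edges can be oriented arbitrarily without affecting any balance equation, so the answer to \COr is preserved. The final bounds on capacities and pathwidth then follow immediately from Lemma~\ref{lemma:aonf} together with the constant-factor blow-ups introduced by subdivision and triangulation.
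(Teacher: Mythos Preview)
Your proposal is correct and follows essentially the same approach as the paper: drop orientations, repair the unbalanced $x^j$/$y^j$ vertices via a super-source and super-sink placed in the two faces guaranteed by the construction, absorb the row-endpoint imbalances through extra edges to the existing $s,t$, simulate the forbidden $\{S,T\}$ edge by a four-cycle on $s,S,t,T$ with carefully chosen capacities, and finally subdivide and invoke Biedl's triangulation with capacity-$0$ filler edges. The only cosmetic difference is that the paper swaps the roles of your $S$ and $T$ (it connects $S$ to the $y^j$'s and $T$ to the $x^j$'s), and it forces the four-cycle orientation by making each of its four edges carry a dominant additive term~$\xi$ larger than the total remaining capacity at each of $s,S,t,T$, so that the cycle must be oriented consistently; this is the concrete mechanism behind the ``small case analysis'' you allude to.
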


\subsection{From \textsc{Circulating Orientation} to planarity testing problems}

The main idea behind the reduction from \COr to upward/rectilinear planarity testing is largely the same as in the original \NP-hardness proof of Garg and Tamassia~\cite{GargT01}. In what follows we explain the approach and highlight the differences.
Since the reductions for both problems are fairly similar, we mostly focus on \textsc{Upward Planarity Testing}. See also~\cref{fig:enter-label} for illustration.

\begin{figure}
    \centering
    \subcaptionbox{$P$\label{fig:upward-P}}{\includegraphics[page=1,width=0.35\textwidth]{./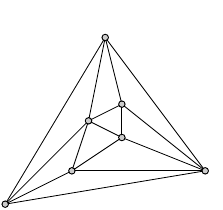}}\hfil
    \subcaptionbox{$D$\label{fig:upward-D}}{\includegraphics[page=2,width=0.35\textwidth]{./figures/upward.pdf}}\vspace{2mm}
    \subcaptionbox{$\vv{D}$\label{fig:upward-VD}}{\includegraphics[page=3,width=0.35\textwidth]{./figures/upward.pdf}}\hfil
    \subcaptionbox{$T_3$\label{fig:upward-T}}{\includegraphics[page=4,width=0.35\textwidth]{./figures/upward.pdf}}\vspace{2mm}
    \subcaptionbox{$\vv{G}$\label{fig:upward-G}}{\includegraphics[page=5,width=0.65\textwidth]{./figures/upward.pdf}}
    \caption{The reduction to \textsc{Upward Planarity Testing}: (a) A triangulated planar graph $P$. 
    (b) The dual $D$ of $P$ (based on the depicted planar embedding). 
    (c) An orientation  $\protect\vv{D}$ of $D$ which is an $st$-planar graph. 
    (d) A tendril $T_3$ and its schematization; the red boundary has negative contribution, while the blue (dashed) boundary has positive contribution. 
    (e) Construction of $\protect\vv{G}$: replacing the edges of a face of $\protect\vv{D}$ with the corresponding tendrils.}
    \label{fig:enter-label}
\end{figure}

As per the previous subsection, we start our reduction with an instance $(P, c)$ of \textsc{Circulating Orientation}, where $P$ is a planar triconnected graph of pathwidth $\Oh(k)$. The first step is to consider the dual graph $D$ of $P$. By known results~\cite{AminiHP09} about the pathwidth of planar graphs, the pathwidth of $D$ is also $\Oh(k)$. We also consider the graph $D$ to be weighted by $c$: the weight $w(e)$ of an edge $e$ in $D$ is set to be the capacity of its dual edge in $P$.
We obtain the final digraph $\vv{G}$ of the reduction as follows: every edge $e \in E(D)$ is replaced by a \emph{tendril} $T_{w(e)}$. The tendrils $T_\ell$ are special gadget graphs designed by Garg and Tamassia~\cite{GargT01}; their properties are: (i) the upward planar embedding is unique; and (ii) one of the boundary walks has contribution $2\ell$ to the adjacent face, and the other boundary walk has contribution $-2\ell$. Here, contribution refers to the angle assignment characterization of upward planarity: roughly speaking, the graph admits an upward planar embedding if and only if the angles in a planar embedding of the graph could be assigned numbers in $\{-1, 0, 1\}$ according to certain rules so that the sum of angles on the boundary of every inner face is $-2$, and on the boundary of the outer face is $2$.

    \begin{figure}[h]
        \centering
        \includegraphics[page=6, width=0.5\textwidth]{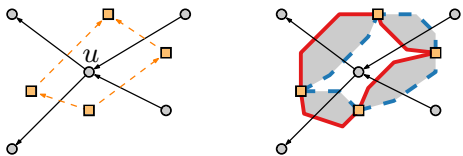}
        \caption{Constructing a circulating orientation of $(P,c)$.}
        \label{fig:backtop}
    \end{figure}

Now, since the ``skeleton'' graph $D$ is triconnected and the inserted tendrils are also triconnected, the planar embeddings of $G$ are essentially defined by the flip of each tendril. Picking the flip of a tendril then directly corresponds to picking the orientation of the respective edge in $P$. Specifically, the property of the target orientation of $P$ is that the sum of the weights of outgoing and incoming edges of a vertex is zero; this translates to the property that the edges of $D$ are to be oriented so that for every face, the sum of the weights of the clockwise edges is equal to the sum of the weights of the counter-clockwise edges. Finally, the latter translates to the upward planarity condition on the face of $\vv{G}$ that originates from a face of $\vv{D}$: the weights of clockwise and counter-clockwise edges are balanced if and only if the total contribution of all tendrils to the face is zero, since the contribution of a tendril is proportional to the weight of the respective edge, and the sign is picked exactly in accordance with the orientation of the edge. See~\cref{fig:backtop} that illustrates the correspondence.

As opposed to the original \NP-hardness proof~\cite{GargT01}, our starting point is the more special \textsc{Circulating Orientation} problem, so we only require the fixed-contribution tendril gadgets, and not the wiggle gadgets that have variable contribution. We also start the reduction with an arbitrary planar triconnected graph, instead of the special instance originating from the satisfiability problem; all in all this leads to an arguably clearer and more direct \NP-hardness proof. We additionally observe that the reduction performed in this way does not blow up the pathwidth, which is crucial for our main result.

Finally, there are a few differences in the case of \textsc{Rectilinear Planarity Testing}. First, it is important that we start with a triangulated graph $P$ and so the graph $D$ is of maximum degree 3. The edges of the graph $D$ are then subdivided to obtain the graph $F$ that admits a rectilinear embedding. Finally, to obtain the target graph $G$, one edge in every subdivision is replaced by a \emph{rectilinear tendril}~\cite{GargT01}, which play an analogous role to that of tendrils above. In the same way, faces of $G$ originating from faces of $D$ correspond to vertices in $P$, with the contribution of tendrils to the face being proportional to the capacities of the respective edges.

Combining these transformations with \cref{lemma:co}, we obtain the following.

\begin{restatable}[\restateref{lemma:planarity}]{lemma}{LemPlanarity}
\label{lemma:planarity}
There is a polynomial-time algorithm that, given an instance of \mclique with parameter~$k$, outputs an equivalent instance of \textsc{Upward planarity testing} (respectively \textsc{Rectilinear planarity testing}) on a graph of pathwidth~$\Oh(k)$.
\end{restatable}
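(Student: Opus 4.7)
The plan is to chain Lemma~\ref{lemma:co} with a tendril-based construction due to Garg and Tamassia~\cite{GargT01}, carried out in such a way that pathwidth is preserved up to a constant factor. Starting from a \mclique instance, Lemma~\ref{lemma:co} already yields an equivalent instance $(P,c)$ of \COr in which $P$ is simple, triconnected, triangulated, planar, of pathwidth $\Oh(k)$, with capacities polynomially bounded in the input size. My first step is to compute the planar dual $D$ of $P$ with weights inherited by duality, $w(e^\ast):=c(e)$. By the result of Amini, Huc and P\'erennes~\cite{AminiHP09} relating the pathwidth of a planar graph and its dual up to a constant factor, $\pw(D)=\Oh(k)$, and (since $P$ is a triconnected triangulation) $D$ is itself triconnected and $3$-regular in the rectilinear case.

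For \textsc{Upward Planarity Testing}, I replace each edge $e\in E(D)$ by a copy of the Garg--Tamassia tendril gadget $T_{w(e)}$, producing a directed graph $\vv G$. The two properties I would invoke are: (i) each $T_\ell$ is triconnected and admits a unique upward planar embedding up to reflection; and (ii) its two boundary walks contribute $+2\ell$ and $-2\ell$, respectively, to the adjacent face in the angle-assignment characterization of upward planarity. Because both $D$ and every tendril are triconnected, the combinatorial embedding of $\vv G$ is fixed up to the flip of each individual tendril, and the flip of the tendril for $e$ corresponds bijectively to an orientation of $e$ in $D$. The upward-planarity balance condition on the face of $\vv G$ coming from a face $f$ of $D$ then becomes: the weighted sum of clockwise-oriented edges around $f$ equals that of counter-clockwise-oriented edges. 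Since faces of $D$ correspond to vertices of $P$, this is exactly the circulating-orientation condition at the corresponding vertex of $P$, giving equivalence.

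For \textsc{Rectilinear Planarity Testing}, I would exploit that $P$ is triangulated, hence $D$ has maximum degree $3$. Following Garg--Tamassia, I subdivide each edge of $D$ and substitute a \emph{rectilinear} tendril into one of the two subdivision edges, yielding the undirected graph $G$. The same analysis goes through with the rectilinear angle-assignment characterization (where inner faces must have angle sum $4$ in units of $\pi/2$ corners): each rectilinear tendril contributes a signed multiple of $w(e)$ depending on its flip, and vanishing of the total contribution around each face of $G$ coming from a face of $D$ reduces to the circulating condition at the corresponding vertex of $P$.

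The main obstacle I anticipate is bounding the pathwidth of the output graph. Although each tendril $T_{w(e)}$ has size $\Theta(w(e))$, which can be polynomial in $|V(G)|$, the substitution is combinatorially local: starting from a path decomposition of $D$ of width $\Oh(k)$, whenever a bag first contains both endpoints of an edge $e$, I splice in a sequence of bags that walks along the internal vertices of $T_{w(e)}$, each bag retaining only the two endpoints of $e$ plus a constant-size frontier of the tendril. Since tendrils have constant pathwidth and bounded interface, this raises the bag size by only an additive constant, so $\pw(\vv G)=\Oh(k)$ (respectively $\pw(G)=\Oh(k)$). Correctness of the overall reduction then follows by composing the equivalence from Lemma~\ref{lemma:co} with the face-by-face equivalence between circulating orientations of $(P,c)$ and upward-/rectilinear-planar embeddings of $\vv G$/$G$ established above.
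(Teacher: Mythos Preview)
Your approach matches the paper's: chain Lemma~\ref{lemma:co} with the Garg--Tamassia tendril construction on the dual, and control pathwidth via the gadget-replacement argument (the paper formalizes this as Lemma~\ref{lemma:pw_gadgets}, which is exactly your splicing argument). The overall architecture is correct.

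However, there are two technical points you elide that the paper treats explicitly and that are needed for the equivalence to go through. First, for \textsc{Upward Planarity Testing} you write ``replace each edge $e\in E(D)$ by a tendril, producing a directed graph $\vv G$'', but $D$ is undirected and tendrils are directed; you must first choose an orientation of $D$. The paper picks an $st$-planar orientation $\vv D$ (via an $st$-numbering), so that $\vv G$ is acyclic and the base angle assignment is forced by the unique upward planar embedding of $\vv D$. It also premultiplies all capacities by~$3$: this makes every tendril contribution a multiple of~$6$, which is what rules out embeddings of $\vv G$ with $s,t$ not on the outer face (otherwise the face-balance target of~$\pm 2$ could potentially be met by a spurious embedding). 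Second, for \textsc{Rectilinear Planarity Testing} you subdivide once and insert a tendril of weight $w(e)$; the paper subdivides four times and, crucially, scales each weight by a large factor $\theta=|V(F)|+1$ before building the tendril. This is because rectilinear tendrils have contribution in $\{4k,4k+1,4k+2\}$ rather than exactly $4k$, and the base angles around a face also have slack; without the $\theta$-scaling you cannot conclude that the significant tendril contributions to each base face must cancel (Claim~\ref{claim:orth_significant} in the paper). Your sentence ``vanishing of the total contribution reduces to the circulating condition'' is therefore not justified as stated.
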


Due to known ETH-based lower bounds for the W[1]-complete \mclique problem and the well-known fact that the treewidth of a graph is not larger than its pathwidth, \cref{th:main} follows directly from \cref{lemma:planarity}.

\section{Conclusion} \label{sec:conclusion}
We proved that \textsc{Upward planarity} and \textsc{Rectilinear planarity} are both W[1]-hard parameterized by treewidth and that the $n^{\Oh(\mathsf{tw})}$ running times of the existing  algorithms for them are tight assuming ETH. Our reduction also provides an alternative NP-completeness proof for these problems, which avoids the use of the so-called \emph{wiggle} gadgets~\cite{GargT01}.

The \textsc{All-or-Nothing Flow} problem on general graphs was recently shown to be XNLP-complete~\cite{BodlaenderCW22} parameterized by treewidth. This complexity class (which contains $W[1]$) was recently introduced~\cite{BodlaenderGNS21} and captures parameterized problems solvable in nondeterministic FPT-time and logarithmic space. It would be interesting to see whether the two planarity testing problems parameterized by treewidth are also XNLP-complete.

Our results show that the parameter treewidth is too general to allow for FPT algorithms for the considered problems. An investigation of more restrictive parameterizations that yield fixed-parameter tractability is left for future work. Based on preliminary investigations, we believe both \textsc{Upward planarity} and \textsc{Rectilinear planarity} may be FPT parameterized by the cutwidth of the dual multigraph. We remark that, since the instances produced by our hardness reduction have pathwidth~$\Oh(k)$ and maximum degree~$\Oh(1)$, the cutwidth of the primal graph is also~$\Oh(k)$ (\cite[Thm. 49]{Bodlaender98}). Hence our hardness results extend to the parameterization by the cutwidth of the primal graph.

\subsubsection{Acknowledgements}
We acknowledge the fruitful working atmosphere of Dagstuhl Seminar 23162 ``New Frontiers of Parameterized Complexity in Graph Drawing'', where this work was started.

\includegraphics[height=2cm]{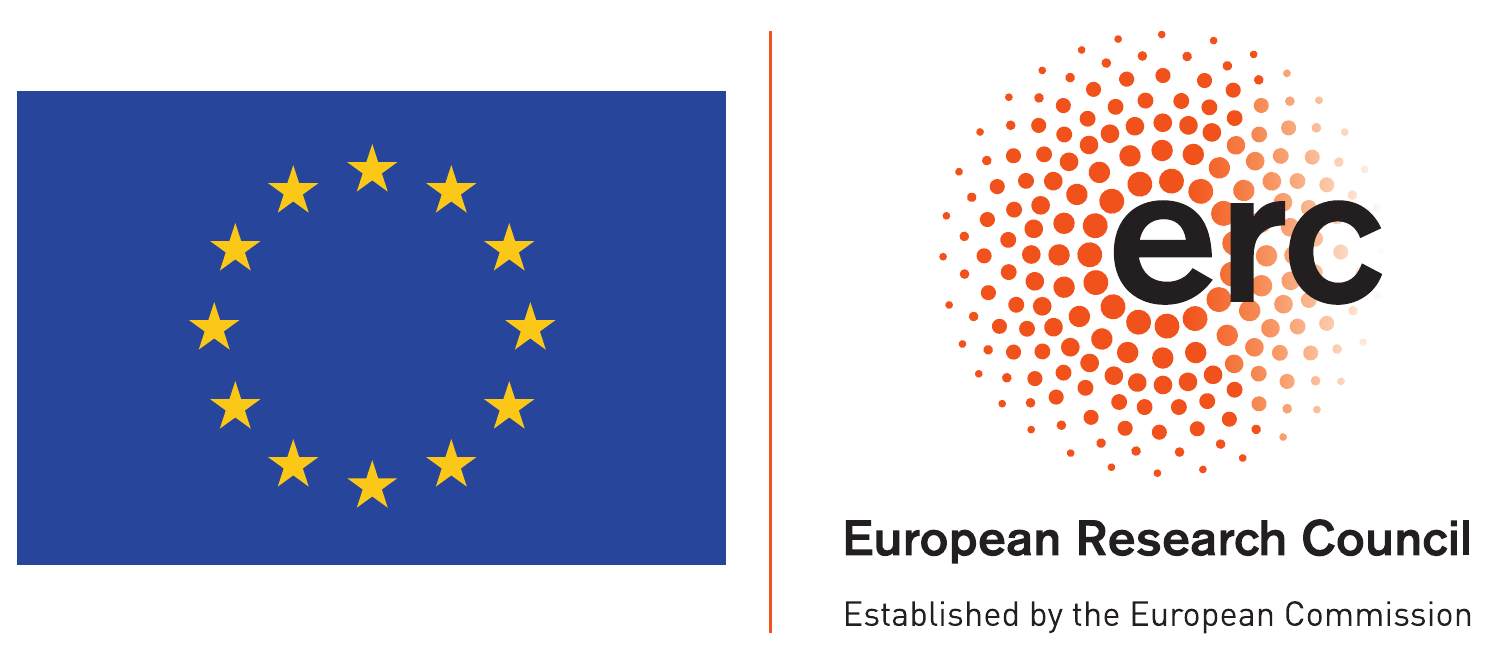}

\bibliographystyle{splncs04}
\bibliography{biblio}

\clearpage

\appendix
\section{Preliminaries}
\subsection{Parameterized complexity and ETH} 

In parameterized complexity~\cite{CyganFKLMPPS15,DowneyF13},
the complexity of a problem is studied not only with respect to the input size, but also with respect to some problem parameter(s). The core idea behind parameterized complexity is that the combinatorial explosion resulting from the \NP-hardness of a problem can sometimes be confined to certain structural parameters that are small in practical settings. We now proceed to the formal definitions.

A \emph{parameterized problem} $\Pi$ is a subset of $\Sigma^*\times \mathbb{N}$, where $\Sigma$ is a finite alphabet. Thus, an instance of $\Pi$ is a pair $(I,k)$, where $I\subseteq\Sigma^*$ and $k$ is a nonnegative integer called a \emph{parameter}. A parameterized problem $\Pi$ is \emph{fixed-parameter tractable} (\FPT) if it can be solved in $f(k)\cdot |I|^{\Oh(1)}$ time for some computable function $f(\cdot)$. 
Parameterized complexity theory also provides tools to refute the existence of an \FPT~algorithm for a parameterized problem. The standard way is to show that the considered problem is hard in the parameterized complexity classes \W[1] or \W[2]. We refer to the book~\cite{CyganFKLMPPS15} for the formal definitions of the parameterized complexity classes.  The basic complexity assumption  of the theory is that for the class  \FPT, formed by all parameterized fixed-parameter tractable problems, $\FPT\subset \W[1]\subset \W[2]$. The hardness is proved by demonstrating a parameterized reduction from a problem known to be hard in  the considered complexity class.  
A \emph{parameterized reduction} is a many-one reduction that takes an input $(I,k)$ of the first problem, and in $f(k)|I|^{\Oh(1)}$ time outputs an equivalent instance $(I',k')$ of the second problem with $k'\leq g(k)$, where $f(\cdot)$ and $g(\cdot)$ are computable functions. Another way to obtain lower bounds is to use the \emph{Exponential Time Hypothesis (ETH)} formulated by Impagliazzo, Paturi and Zane~\cite{ImpagliazzoP99}. For an integer $q\geq 3$, let $\delta_q$ be the infimum of the real numbers $c$ such that the \textsc{$q$-CNF-Satisfiability} problem can be solved in time $\Oh(2^{c n})$, where $n$ is the number of variables. The Exponential Time Hypothesis states that $\delta_3>0$. In particular, ETH implies that \textsc{$3$-CNF-Satisfiability} cannot be solved in time $2^{o(n)}n^{\Oh(1)}$.

\subsection{Treewidth and Pathwidth}
	A \emph{tree decomposition}~$\mathcal{T}$ of a graph $G$ is a pair $(T,\chi)$, where $T$ is a tree (whose vertices we call \emph{nodes}) rooted at a node $r$ and $\chi$ is a function that assigns to each node $t \in \mathcal{T}$ a set $\chi(t) \subseteq V(G)$ such that the following holds: 
\begin{itemize}
	\item For every $\{u, v\} \in E(G)$ there is a node	$t$ such that $u,v\in \chi(t)$.
	\item For every vertex $v \in V(G)$, the set of nodes $t$ satisfying $v\in \chi(t)$ forms a nonempty subtree of~$T$.
\end{itemize}

 The sets $\chi(t)$, for $t \in V(T)$, are called \emph{bags} of the tree decomposition.
The \emph{width} of a tree decomposition $(T,\chi)$ is the size of a largest set $\chi(t)$ minus~$1$, and the \emph{treewidth} of the graph $G$,
denoted $\tw(G)$, is the minimum width of a tree decomposition of~$G$.

The pathwidth is defined similarly in terms of paths. A \emph{path decomposition}~$\mathcal{P}$ of a graph $G$ is a pair 
$(P,\chi)$, where $P$ is a path and $\chi$ is a function that assigns to each node $p \in P$ a set $\chi(p) \subseteq V(G)$ such that the following holds: 
\begin{itemize}
	\item For every $\{u, v\} \in E(G)$ there is a node	$p$ such that $u,v\in \chi(p)$.
	\item For every vertex $v \in V(G)$, the set of nodes $p$ satisfying $v\in \chi(p)$ forms a nonempty subpath of~$P$.
\end{itemize}

 The sets $\chi(p)$, for $p \in V(P)$, are called \emph{bags} of the path decomposition.
The \emph{width} of a path decomposition $(P,\chi)$ is the size of a largest set $\chi(p)$ minus~$1$, and the \emph{pathwidth} of the graph $G$,
denoted $\pw(G)$, is the minimum width of a path decomposition of~$G$. 

Throughout the paper, whenever we consider the treewidth or pathwidth of a directed graph, we mean the respective parameter of the underlying graph.
 
It is known that the pathwidth of the dual graph is bounded in terms of the pathwidth of the primal.
\begin{theorem}[Amini, Huc, and P\'{e}rennes~\cite{AminiHP09}]
    \label{theorem:pw_dual}
    For a triconnected planar graph $G$, $\pw(G^*) \le 3\pw(G) + 2$, where $G^*$ is the dual graph of $G$.
\end{theorem}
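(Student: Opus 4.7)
The plan is to start with an optimal path decomposition $(P, \chi)$ of $G$ of width $w = \pw(G)$ and construct a path decomposition $(P, \chi^*)$ of $G^*$ of width at most $3w + 2$, placing the faces of $G$ along the same underlying path. First, I would pass to a nice path decomposition where adjacent bags differ by the introduction or removal of a single vertex, and fix a planar embedding of $G$ (unique up to reflection by triconnectedness). For each index $i$, let $L_i$ (respectively $R_i$) denote the vertices occurring only in bags before (respectively after) $p_i$; by the standard property of path decompositions, $\chi(p_i)$ separates $L_i$ from $R_i$ in $G$.

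The key observation is that for every face $f$ of $G$ with boundary cycle $C_f$, the set of indices $i$ such that $\chi(p_i)$ meets $V(C_f)$ is contiguous: consecutive vertices on $C_f$ are adjacent in $G$ and therefore co-occur in some bag, and the union of overlapping intervals along $C_f$ remains an interval. Moreover, since $\chi(p_i)$ separates $L_i$ from $R_i$, any face having boundary vertices in both $L_i$ and $R_i$ necessarily also has a vertex in $\chi(p_i)$. This makes the candidate definition $\chi^*(p_i) = \{f : V(f) \cap \chi(p_i) \neq \emptyset\}$ immediately satisfy the three path-decomposition axioms for $G^*$: every face appears in some bag; face occurrences are contiguous; and each dual edge, corresponding to a primal edge $\{u,v\}$ that co-occurs in some primal bag, has both of its incident faces (each containing $u$) in the corresponding dual bag.

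The heart of the proof is to refine the above definition so that $|\chi^*(p_i)| \le 3w + 3$; the naive version fails because $\sum_{v \in \chi(p_i)}\deg(v)$ is unbounded. The plan is to use the planar quotient $H_i$ obtained from the fixed embedding of $G$ by contracting $L_i$ to a single vertex $\ell$ and $R_i$ to a single vertex $r$, discarding self-loops but keeping parallel edges. Then $H_i$ is a planar multigraph on at most $|\chi(p_i)| + 2 \le w + 3$ vertices, and its faces correspond bijectively to equivalence classes of faces of $G$ that will be ``visible'' at step $i$. Triconnectedness of $G$ is used here to bound the multiplicity of parallel edges in $H_i$ (ensuring that the contraction does not destroy the planar face-count), and then Euler's formula yields at most $3(w+3) - 6 = 3w + 3$ edges, and in turn at most $3w + 3$ relevant faces, to place in $\chi^*(p_i)$. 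Assigning each face of $G$ exactly to those bags corresponding to faces of $H_i$ in which it participates preserves contiguity and dual-edge coverage.

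The main obstacle is making the quotient argument rigorous with the precise constant $3$: without triconnectedness of $G$, the contraction of $L_i$ and $R_i$ may produce arbitrarily many parallel edges in $H_i$, after which Euler's formula no longer bounds the face count, and the hypothesis is exactly what rules this out. A secondary issue is handling the outer face of $G$ and the degenerate bags at the two endpoints of the path decomposition, which is the source of the additive~$+2$ rather than a smaller constant. Putting all of this together yields $\pw(G^*) \le 3\pw(G) + 2$, as stated in~\cite{AminiHP09}.
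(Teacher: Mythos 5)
First, note that the paper does not prove this statement at all: it is imported verbatim as a black-box citation of Amini, Huc, and P\'erennes~\cite{AminiHP09}, so there is no internal proof to compare your route against; your argument has to stand on its own, and as written it does not.

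There are two concrete gaps, both in the counting step that is supposed to turn the (correct, but unboundedly wide) naive assignment $\chi^*(p_i)=\{f: V(f)\cap\chi(p_i)\neq\emptyset\}$ into one of width $3w+2$. First, the sets $L_i$ and $R_i$ need not induce connected subgraphs, so ``contracting $L_i$ to a single vertex $\ell$'' is not a minor operation and can destroy planarity; your quotient $H_i$ is not guaranteed to be planar, and repairing this by contracting components separately introduces an unbounded number of new vertices unless you add a further (3-connectivity plus planarity) argument bounding the number of components attaching to the bag. Second, and more fatally, triconnectedness of $G$ does \emph{not} bound the multiplicity of parallel edges in $H_i$: take a wheel with hub $h$ and the standard path decomposition with bags $\{h,v_j,v_{j+1}\}$; at a middle bag, $L_i$ is a path of rim vertices all adjacent to $h$, and contracting it yields $\Theta(n)$ parallel $\ell h$ edges and hence $\Theta(n)$ faces of $H_i$, independently of $w$. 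So Euler's bound $3(w+3)-6$ simply does not apply to $H_i$, and the claimed ``at most $3w+3$ relevant faces'' is false for the quotient as you define it. What would actually be needed is an argument that the faces of $G$ absorbed into these parallel-edge regions (faces meeting $\chi(p_i)$ in only one vertex, whose dual edges were already covered earlier) can be \emph{omitted} from $\chi^*(p_i)$, followed by a re-verification of contiguity and of coverage of every dual edge under that pruned assignment; none of this is carried out, and it is precisely where the real difficulty of the theorem lies. (Minor additional slips: for a simple planar graph the face count is at most $2n-4$, not $3n-6$, and the handling of the outer face/endpoint bags that you credit for the ``$+2$'' is asserted rather than argued.)
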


Note that this result is constructive when a path decomposition is given. We also recall that triangulating the graph increases the pathwidth by at most a constant factor.

\begin{theorem}[Biedl~\cite{Biedl16}]
    \label{theorem:pw_triangulation}
    There is a polynomial-time algorithm that, given a simple planar graph $G$ of pathwidth~$k$ on at least three vertices, outputs a plane triangulation~$G'$ of~$G$ such that~$\pw(G') \in \Oh(k)$.
\end{theorem}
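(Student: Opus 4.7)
The plan is to compute a path decomposition $(P, \chi)$ of $G$ of width $k$, fix a planar embedding of $G$, and triangulate face by face during a left-to-right sweep over the bags of $(P, \chi)$, inserting new chords into a parallel modification of the decomposition whose width remains $\Oh(k)$. Here I would first make $(P, \chi)$ into a nice path decomposition with introduce and forget nodes, and think of the bags $X_1, X_2, \ldots, X_n$ as a frontier separating ``already processed'' from ``still to come'' vertices, sliding from left to right.

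Throughout the sweep I would maintain an enlarged bag $X_i' \supseteq X_i$ of size $\Oh(k)$ that, in addition to $X_i$, holds a bounded number of \emph{anchor} vertices for every face whose triangulation is still in progress. At each position $i$, I would identify faces $f$ whose boundaries have been fully introduced by time $i$ and whose remaining alive boundary vertices all reside in $X_i'$, and triangulate each such $f$ by adding a fan of chords from its anchor to the other still-alive boundary vertices of $f$. Since all endpoints of any new chord coexist in $X_i'$, these chords can legally be inserted into the corresponding bag of the modified decomposition, yielding a valid path decomposition of the resulting supergraph.

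The main obstacle is showing that $|X_i'|$ stays in $\Oh(k)$ throughout the sweep. The key planarity argument is that the current bag $\chi(p)$ serves as a vertex separator between the already-processed part of $G$ and the rest, and in a planar embedding each partially-processed face (some boundary vertex introduced, some not yet) must meet this separator, so its alive boundary must lie in a controlled neighborhood of $\chi(p)$. A charging argument based on Euler's formula applied to the planar ``interface'' induced at $\chi(p)$ bounds the number of such faces and their alive vertices by $\Oh(k)$, which is what controls $|X_i'|$. Once the anchor is chosen to be the last-to-be-forgotten boundary vertex of each face, triangulations can always be performed before the anchor leaves the bag, so no triangulation step is blocked.

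Finally, I would check that the per-bag work adds only $\Oh(1)$ anchors and produces chords whose total number across all faces is $\Oh(n)$ (by Euler's formula on the triangulation), so the procedure runs in polynomial time, and that the resulting graph $G'$ really is a plane triangulation of $G$ with every internal face a triangle. Combined with the bag-size bound, this yields a plane triangulation~$G'$ of pathwidth $\Oh(k)$, as claimed.
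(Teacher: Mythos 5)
First, note that the paper does not prove this statement at all: it is imported verbatim from Biedl~\cite{Biedl16} (``Triangulating planar graphs while keeping the pathwidth small''), so there is no in-paper proof to compare against, and what you are attempting is a re-proof of a genuinely nontrivial external theorem.

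Your sketch has a real gap exactly at its load-bearing step, namely the claim that at any position of the sweep the number of partially-processed faces (each of which you charge an anchor in the enlarged bag $X_i'$) is $\Oh(k)$ ``by Euler's formula applied to the interface at $\chi(p)$.'' As stated, this count is false. A face straddling the frontier must indeed contain a vertex of the current bag, but a single bag vertex can be incident to arbitrarily many such faces. Concretely, take two hubs $a,b$ joined by $n$ internally disjoint paths of length three; this graph has pathwidth $3$, with a decomposition that processes the paths one at a time. Fix the embedding so the paths appear in rotation order $1,2,\dots,n$, but let the decomposition process all odd-indexed paths first and then the even ones. Halfway through, every one of the $n-1$ internal faces has both forgotten and not-yet-introduced boundary vertices, so $\Theta(n)$ faces are simultaneously ``in progress'' while the bag has size $\Oh(1)$; your $X_i'$, with one anchor per such face, blows up. (In this particular example the anchors could be chosen as the always-alive hubs $a,b$, but your proposal gives no general mechanism guaranteeing that all but $\Oh(k)$ anchors can be taken alive, and nothing in your plan couples the chosen embedding to the chosen decomposition -- for triconnected inputs the embedding is forced, so you cannot simply ``pick a compatible embedding'' either.) Designing the triangulation order and the extra bag contents so that this interaction between the face structure and the linear layout stays bounded is precisely the hard content of Biedl's proof, and it is missing here. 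Two further, smaller omissions: fan-triangulating a face from a boundary anchor presupposes the face boundary is a simple cycle (so the graph must first be augmented to be $2$-connected without increasing the pathwidth by more than a constant factor), and you must avoid creating parallel edges when a fan chord duplicates an edge already present on another face; both issues are handled explicitly in Biedl's argument but not in your sketch.
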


In our reductions, we construct the target graph by replacing the edges of an intermediate graph with certain gadget graphs. For completeness, we prove in the next lemma the folklore bound on the pathwidth of the resulting graph.

\begin{lemma}\label{lemma:pw_gadgets}
    Let $G$ be a graph and $H_1$, \ldots, $H_t$ be a family of graphs, where for each $i \in [t]$, there are two distinct special vertices $s_i$, $t_i \in V(H_i)$. Consider a graph $F$ obtained from $G$ by replacing every edge $\{u,v\} \in E(G)$ with a copy of one of the graphs $H_i$, $i \in [t]$, such that $u$ becomes associated with the vertex $s_i$ of the copy, and $v$ with $t_i$. Then
    $\pw(F) \le \pw(G) + \max_{i \in [t]} \pw(H_i) + 1$.
\end{lemma}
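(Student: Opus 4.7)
The plan is to build a path decomposition of $F$ by starting from an optimal path decomposition of $G$ and splicing into it, at appropriate bags, the path decompositions of the gadgets. First, I would fix a path decomposition $(P, \chi)$ of $G$ of width $\pw(G)$, and for each edge $e = \{u,v\} \in E(G)$ pick a node $p_e$ of $P$ whose bag contains both $u$ and $v$ (which exists by the edge-coverage axiom). Let $H_{i(e)}$ be the gadget replacing $e$, and fix a path decomposition $(Q_e, \psi_e)$ of $H_{i(e)}$ of width $\pw(H_{i(e)})$. In the copy of $H_{i(e)}$ that becomes part of $F$, relabel $s_{i(e)}$ as $u$ and $t_{i(e)}$ as $v$, and give fresh names to all internal vertices so that the vertex sets of different copies are pairwise disjoint (and disjoint from $V(G)$).

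Next, I would construct $(P', \chi')$ by replacing each node $p$ of $P$ with a block consisting of $p$ (with bag $\chi(p)$) followed by, for each edge $e$ with $p_e = p$ processed in some arbitrary order, the sequence of nodes of $Q_e$ in their original order, each equipped with bag $\chi(p) \cup \psi_e(q)$ (after the relabelling above). I would then verify the three path-decomposition axioms. Every vertex of $F$ is covered: vertices of $G$ appear already in $\chi$, and each internal gadget vertex appears in the spliced block for its edge. Every edge of $F$ is covered: gadget-internal edges are covered inside the spliced subsequence, while gadget edges incident to $s_{i(e)}$ or $t_{i(e)}$ are covered because $u$ and $v$ are kept in every bag of the block via the union with $\chi(p)$. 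For the contiguity axiom, internal gadget vertices appear only inside their own contiguous block; a vertex $x \in V(G)$ had a contiguous range of bags in $P$, and within $P'$ we only augment those bags with fresh gadget vertices while keeping all of $\chi(p)$ in each spliced bag, so $x$'s occurrences remain contiguous.

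Finally, I would bound the width. Every new bag has the form $\chi(p) \cup \psi_e(q)$ (or simply $\chi(p)$). Since the internal vertices of the copy of $H_{i(e)}$ use fresh names not appearing in $\chi(p)$, and the only shared vertices between $\chi(p)$ and $\psi_e(q)$ are (possibly) the relabelled $s_{i(e)} = u$ and $t_{i(e)} = v$, the crudest estimate already suffices:
\[
|\chi(p) \cup \psi_e(q)| \;\le\; |\chi(p)| + |\psi_e(q)| \;\le\; (\pw(G)+1) + \bigl(\max_{i \in [t]} \pw(H_i) + 1\bigr),
\]
so the width of $(P', \chi')$ is at most $\pw(G) + \max_{i \in [t]} \pw(H_i) + 1$, as claimed.

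This is largely a bookkeeping argument rather than a conceptually subtle proof; the only points requiring care are (i) keeping $\chi(p)$ inside every spliced bag so that the contiguity of $G$-vertices passing through $p$ is preserved, and (ii) ensuring that when several edges of $G$ are expanded at the same node $p_e = p$, the spliced blocks do not interfere with one another, which is guaranteed by using disjoint fresh labels for the internal vertices of distinct copies.
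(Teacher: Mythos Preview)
Your proof is correct and follows essentially the same approach as the paper: both start from an optimal path decomposition of $G$, splice in the gadget decompositions at a bag covering the corresponding edge while keeping the whole bag $\chi(p)$ inside every spliced node, and bound the resulting bag size by $|\chi(p)| + |\psi_e(q)|$. The only cosmetic difference is that the paper duplicates $\chi(p)$ and inserts one gadget at a time between the copies, whereas you batch all gadgets assigned to $p$ into a single block; both yield the same width bound and the same correctness argument.
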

\begin{proof}
    Consider the optimal path decomposition $(P, \chi)$ of $G$. For every edge $\{u,v\} \in E(G)$ perform the following operation on $(P, \chi)$: Take an arbitrary $p \in V(P)$, such that $u, v \in \chi(p)$. Let $H_i$ be the replacement of $\{u,v\}$ in $F$, and consider the optimal path decomposition of $H_i$, where the set $\chi(p)$ is added to every bag, and the occurences of $s_i$ ($t_i$) are removed. Now make a copy of the bag $\chi(p)$, and insert the decomposition of $H_i$ augmented as above between the two copies. Let $(P', \chi')$ be the resulting decomposition, we claim that it is a path decomposition of $F$ of desired width.

    By construction, every bag of the new decomposition is a subset of a union of a bag of $P$ and a bag in the optimal path decomposition of $H_i$ for some $i \in [t]$, thus the size of any bag is at most $\pw(G) + 1 + \max_{i \in [t]} \pw(H_i) + 1$. For correctness, first consider an internal vertex of some $H_i$ in $F$ (i.e., a vertex that is not $s_i$ or $t_i$ in that $H_i$). Its appearances are exactly as in the optimal path decomposition of $H_i$, thus his vertex indeed forms a subpath in $(P', \chi')$. Consider a vertex of $F$ that originates from a vertex of $G$, its appearances changed from a subpath of $P$ to a subpath of $P'$, since every newly inserted bag contains fully the bag $\chi(p)$ it was based on. Finally, every edge of every inserted $H_i$ is present in $(P', \chi')$ by the inserted augmented decomposition of $H_i$.
\end{proof}



\subsection{Planar drawings and embeddings} \label{subse:planar-definitions}

	A \emph{drawing} of a graph maps each vertex to a point in the plane and each edge to a Jordan arc between the end-points of the edge. A drawing is \emph{planar} if no two edges intersect, except at common end-points. A planar drawing partitions the plane into regions, called \emph{faces}. The bounded faces are called \emph{internal}, while the unbounded face is the \emph{outer face}. 
	Two planar drawings of a graph are \emph{equivalent} if: (1) they have the same \emph{rotation system}, that is, for each vertex $v$, the clockwise order of the edges incident to $v$ is the same in both drawings; and (2) their outer faces are delimited by the same walk, that is, the order of the edges encountered when clockwise traversing the boundary of the outer face is the same in both drawings. A \emph{planar embedding} of a graph is an equivalence class of planar drawings of that graph. 
	
	Thus, a planar embedding of a graph consists of a rotation system and a choice for the walk delimiting the outer face. We often talk about a \emph{face of a planar embedding}, meaning a face of any planar drawing that respects the planar embedding. The \emph{flip} of a planar embedding is the planar embedding obtained by reversing the clockwise order of the edges incident to each vertex and by reversing the order of the edges encountered when clockwise traversing the boundary of the outer face.
	

\subsection{Upward planar drawings and embeddings} \label{subse:upward-definitions}

	Throughout the paper, we use the term \emph{digraph} as short for ``directed graph''. A digraph is \emph{acyclic} if it contains no directed cycle. An acyclic digraph is usually called \emph{DAG}, for short. A vertex in a digraph is a \emph{source} if it is only incident to outgoing edges and it is a \emph{sink} if it is only incident to incoming edges. For an edge $uv$  (oriented from $u$ to $v$) of a digraph, $u$ is the \emph{source of $uv$} and $v$ is the \emph{sink of $uv$}.
	A vertex in a digraph is a \emph{switch} if it is a source or a sink, and it is a \emph{non-switch} otherwise. The \emph{underlying graph} of a digraph is the undirected graph obtained from the digraph by ignoring the edge directions. 

	A \emph{plane digraph} is a digraph together with a prescribed planar embedding for its underlying graph.

	A drawing of a digraph is \emph{upward} if every edge is represented by a Jordan arc which is monotonically increasing in the $y$-direction from the source to the sink of the edge, and it is \emph{upward planar} if it is both upward and planar. A digraph is \emph{upward planar} if it admits an upward planar drawing; we use \textsc{Upward Planarity Testing} to denote the problem of determining whether a digraph is upward planar; w.l.o.g., we assume that the input digraph is connected. 

	Consider an upward planar drawing $\Gamma$ of a digraph $G$. An \emph{angle} $\alpha$ of a face $f$ of $\Gamma$ is a triple $(e_1,v,e_2)$, where $e_1$ and $e_2$ are two edges of $G$ that are incident to the vertex $v$, that are incident to the face $f$, and that are consecutive in the order of the edges encountered when clockwise traversing the boundary of $f$. We say that $\alpha$ is \emph{flat} if one between $e_1$ and $e_2$ is incoming $v$ and the other one is outgoing $v$, otherwise $\alpha$ is a \emph{switch angle}.
	Then $\Gamma$ defines an \emph{angle assignment}, which assigns the value $-1$, $0$, and $1$ to each small, flat, and large angle, respectively, in every face of $\Gamma$. The angle assignment, together with the planar embedding of the underlying graph of $G$ in $\Gamma$, constitutes an \emph{upward planar embedding} of~$G$.

	A switch angle at a vertex $v$ is hence delimited by two outgoing or by two incoming edges for $v$. Each switch angle is further classified as \emph{large} or \emph{small} as follows. Consider a switch angle $\alpha=(e_1,v,e_2)$ at a vertex $v$ in a face $f$ delimited by two outgoing edges (resp.\ by two incoming edges) and consider a disk $D$ centered at $v$, sufficiently small so that its boundary has a single intersection with every edge incident to $v$. The edges $e_1$ and $e_2$ divide $D$ into two regions, one of which contains part of $f$ and contains no portion of any edge incident to $v$ in its interior; call $D'$ this region. Then we say that $\alpha$ is \emph{large} if $D'$ contains a suitably short vertical segment that has $v$ as its highest (resp.\ lowest) end-point, it is \emph{small} otherwise.

	An upward planar drawing hence defines an \emph{angle assignment}, which is an assignment of the value $-1$, $0$, and $1$ to each small, flat, and large angle, respectively, in every face of $\Gamma$. This angle assignment, together with the planar embedding of the underlying graph of $G$ in $\Gamma$, constitutes an \emph{upward planar embedding} of $G$. An \emph{upward plane digraph} is a digraph together with a prescribed upward planar embedding.
	

	The angle assignments that enhance a planar embedding into an upward planar embedding have been characterized by Didimo et al.~\cite{DBLP:journals/siamdm/DidimoGL09}, building on the work by Bertolazzi et al.~\cite{DBLP:journals/algorithmica/BertolazziBLM94}. Note that, once the planar embedding $\Epsilon$ of a digraph $G$ is specified, then so are the angles of the faces of $\mathcal E$; in particular, whether an angle is flat or switch only depends on $\mathcal E$. Consider an angle assignment for $\Epsilon$. If $v$ is a vertex of $G$, we denote by $n_i(v)$ the number of angles at $v$ that are labeled $i$, with $i \in \{-1,0,1\}$. If $f$ is a face in $\Epsilon$, we denote by $n_i(f)$ the number of angles of $f$ that are labeled $i$, with $i \in \{-1,0,1\}$. The cited characterization is as follows.
		
	
\begin{theorem}[\cite{DBLP:journals/algorithmica/BertolazziBLM94,DBLP:journals/siamdm/DidimoGL09}]\label{th:upward-conditions}
		Let $G$ be a digraph, $\Epsilon$ be a planar embedding of the underlying graph of $G$, and $\lambda$ be an assignment of each angle of each face in $\Epsilon$ to a value in $\{-1,0,1\}$. Then $\Epsilon$ and $\lambda$ define an upward planar embedding of $G$ if and only if the following properties hold:
		\begin{description}
			\item[UP0] If $\alpha$ is a switch angle, then $\lambda(\alpha)\in\{-1,1\}$, and if $\alpha$ is a flat angle, then $\lambda(\alpha)=0$.
			\item[UP1] If $v$ is a switch vertex of $G$, then $n_1(v)=1$, $n_{-1}(v)=\deg(v)-1$, ${n_0(v)=0}$.
			\item[UP2] If $v$ is a non-switch vertex of $G$, then $n_1(v)=0$, $n_{-1}(v)=\deg(v)-2$, $n_0(v)=2$.
			\item[UP3] If $f$ is a face of $G$, then $n_1(f)=n_{-1}(f)-2$ if $f$ is an internal face, and $n_{1}(f)=n_{-1}(f)+2$ if $f$ is the outer face.
		\end{description}
	\end{theorem}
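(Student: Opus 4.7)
The plan is to prove both directions of the equivalence separately. The forward direction is primarily a geometric bookkeeping argument about upward planar drawings, while the reverse direction is the substantive part and requires constructing an upward drawing realizing $(\Epsilon,\lambda)$.

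For necessity, suppose $\Gamma$ is an upward planar drawing inducing $\Epsilon$ and $\lambda$. Property UP0 is immediate from the definition of the angle assignment: switch angles take values in $\{-1,+1\}$ and flat angles take value $0$. For UP1, at a source $v$ all edges leave upward, so in the cyclic rotation they all lie in the open upper half-plane at $v$; exactly one angle, the one whose interior contains the downward vertical direction, is large, while the remaining $\deg(v)-1$ are small, and there are no flat angles, with the sink case being symmetric. For UP2, at a non-switch vertex the incoming edges approach from below and the outgoing edges leave above, so the rotation consists of a contiguous arc of incoming edges and a contiguous arc of outgoing edges, producing exactly two flat angles; a small-disk argument rules out large switch angles at a non-switch vertex, since such an angle would be bounded by two edges on the same side of $v$ and its sector cannot reach the opposite vertical direction without crossing an edge of the opposite type. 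For UP3, traverse the boundary of each face and apply the Gauss--Bonnet-style identity, in which flat angles contribute nothing while small and large switch angles contribute $\pm 1$ to the total angle deficit; one obtains $n_{-1}(f)-n_1(f)=2$ for internal faces and $n_{-1}(f)-n_1(f)=-2$ for the outer face.

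For sufficiency, suppose $(\Epsilon,\lambda)$ satisfies UP0--UP3. The goal is to produce an upward planar drawing realizing this data. UP1 and UP2 imply that at every vertex the incoming edges form a contiguous arc of the rotation and the outgoing edges form a complementary arc, so the orientation is bimodal and locally consistent with an upward layout at each vertex. UP3 then ensures that each internal face has the combinatorial signature of the boundary of an $x$-monotone polygon, with the correct number of convex corners in excess of reflex corners. We augment $G$ into an st-planar supergraph by inserting a super-source and super-sink into the outer face, attached to the switch vertices on its boundary so that each face of the augmented graph becomes an st-polygon. A topological numbering of the augmentation then assigns each vertex a $y$-coordinate such that every edge is strictly monotone in the $y$-direction; drawing the edges as monotone curves consistent with the rotation $\Epsilon$ yields an upward planar drawing, and a case analysis using UP0--UP3 verifies that its induced angle assignment coincides with $\lambda$.

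The main obstacle is the globalization step inside the sufficiency proof: the conditions UP1--UP3 are purely local, and one must certify that they combine into a coherent drawing. The key tool is a double-counting identity on $n_{-1}-n_1$: summing it vertex-by-vertex using UP1 and UP2 gives $\sum_v(\deg(v)-2)=2|E|-2|V|$, while summing it face-by-face using UP3 gives $2(|F|-2)$, and the two agree by Euler's formula. This global consistency is exactly what certifies that the st-augmentation is feasible, i.e., that the super-source and super-sink can be attached to absorb all remaining switches on the outer boundary. Once the st-planar structure is in place, the existence of the upward drawing follows from the classical Di Battista--Tamassia result for st-planar graphs, and the realization of the prescribed $\lambda$ is forced by how the augmentation distributes its incident edges among the switch vertices on the outer face.
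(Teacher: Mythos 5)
This theorem is not proved in the paper at all; it is quoted from the cited literature (Bertolazzi et al.\ and Didimo et al.), so the only meaningful comparison is with the standard proof in those references, which your necessity argument indeed mirrors: UP0--UP2 follow from local geometry at a vertex and UP3 from an angle-count around each face boundary. That half of your proposal is fine in outline.

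The sufficiency direction, however, has a genuine gap. You augment $G$ by inserting a super-source and super-sink in the outer face, ``attached to the switch vertices on its boundary,'' and claim this makes every face of the augmented graph an $st$-face. This cannot work: a source or sink of $G$ that does not lie on the outer face is untouched by such an augmentation, so the supergraph is not even single-source/single-sink, and an internal face whose boundary has several local sources and sinks (which UP3 fully allows, e.g.\ a face with $n_{-1}=4$, $n_1=2$) cannot be turned into an $st$-face by edges added in the outer face. The classical proof resolves exactly this point with the \emph{saturation} step you are missing: guided by $\lambda$, one repeatedly inserts a saturating edge \emph{inside} each face, pairing a large switch angle with another switch of the same face, preserving planarity, acyclicity and the invariants UP1--UP3 for the modified faces, until every face has exactly one face-source and one face-sink; only then are the outer source and sink added and the Di Battista--Tamassia theorem for planar $st$-graphs applied. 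Your double-counting identity ($\sum_v$ versus $\sum_f$ of $n_{-1}-n_1$ agreeing via Euler's formula) is a necessary consistency check but does not certify that such an augmentation exists, and the realization of the prescribed $\lambda$ at internal faces is forced precisely by these internal saturating edges, not by how the outer super-source and super-sink are attached. Without the saturation argument (or an equivalent inductive face-decomposition), the sufficiency direction does not go through.
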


Following condition \textbf{UP3}, we say that the \emph{contribution} of a subpath of the boundary of a face to this face is the sum of the values assigned to the angles at internal vertices of the subpath towards the face.

Related to upward planarity is the notion of $st$-planar graphs. An \emph{$st$-planar graph} is an acyclic orientation of a planar graph such that there is exactly one source and one sink, and the associated planar embedding where the source and the sink lie on the outer face. It is known that an acyclic digraph is upward planar if and only if it is a subgraph of an $st$-planar graph~\cite{BattistaT88}. Moreover, a triconnected $st$-planar graph has a unique upward planar embedding as the planar embedding together with the choice of the outer face is fixed, and, consequently, the angle assignments is fixed too.

In our hardness reduction, we use the following gadgets called \emph{tendrils}, introduced by Garg and Tamassia~\cite{GargT01}.
 \begin{lemma}[\cite{GargT01}]
     \label{lemma:tendrils}
     For every $k \ge 0$, there exists a directed acyclic graph $T_k$ with two special vertices called \emph{poles}, such that the following properties hold:
     \begin{itemize}
         \item one of the poles is a source and the other is a sink in $T_k$,
         \item after adding the edge connecting the poles, $T_k$ becomes triconnected,
         \item $T_k$ admits a unique upward planar embeddding,
         \item under this embedding, the angle contribution of the two boundary paths from the source pole to the sink pole are $2k$ and $-2k$, respectively,
         \item $T_k$ is of size $\Oh(k)$ and of pathwidth 2.
     \end{itemize}
 \end{lemma}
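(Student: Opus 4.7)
My plan is to build $T_k$ explicitly as a chain of $k$ identical constant-size \emph{unit gadgets} and then verify each listed property in turn. Let $U$ be a small planar DAG with a distinguished bottom source $a$, top sink $b$, and two internal vertices $\ell, r$ on the left and right boundaries chosen so that (i) $U$ is $ab$-planar with $\ell$ and $r$ lying strictly on its two boundary paths, (ii) $U+ab$ is triconnected, and (iii) in the unique upward planar embedding of $U$, both $\ell$ and $r$ are switch vertices, with the large angle of $\ell$ lying in the left exterior face and the large angle of $r$ in the right exterior face, producing contributions of $+2$ and $-2$ to the respective angle sums. Such a $U$ can be exhibited as a planar DAG on $\Oh(1)$ vertices in which the two internal boundary vertices have degree at least two, with all incident edges oriented in a common direction so as to force them to be switches; a direct case analysis then confirms (i)--(iii) for this fixed gadget. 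We obtain $T_k$ by identifying $b_i = a_{i+1}$ for $i = 1, \ldots, k-1$ and setting $s := a_1$, $t := b_k$.

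Acyclicity, the source/sink roles of the poles, and the size bound $|V(T_k)|, |E(T_k)| \in \Oh(k)$ are immediate from the construction. Triconnectivity of $T_k + st$ I would prove by induction on $k$: the base case is (ii) for a single unit, and in the inductive step the only new vertex cut is the size-one cut at the shared pole, which is destroyed by adding the bypass edge $st$; three internally vertex-disjoint paths between any pair of vertices in $T_k + st$ can then be exhibited through the two boundary paths of the chain together with the $st$ edge. Pathwidth $2$ is attained by concatenating constant-width path decompositions of each $U_i$ so that consecutive decompositions overlap on the shared identification vertex, following the gluing scheme of \cref{lemma:pw_gadgets}.

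Uniqueness of the upward planar embedding follows in two steps. First, since $T_k + st$ is triconnected and planar, by Whitney's theorem it has a unique combinatorial planar embedding up to a global reflection, and this uniqueness is preserved after removing $st$ (the two faces incident to $st$ merge into a single face containing both poles, which lie on the outer face). Second, by \cref{th:upward-conditions} the upward angle assignment is forced once the rotation system is fixed: \textbf{UP1} forces the unique large angle at each switch to sit opposite its two incident edges, and \textbf{UP2} determines the two flat angles at each non-switch. The remaining global reflection merely swaps the labels of the two boundary paths, matching the symmetric $+2k/-2k$ phrasing of the lemma. The boundary contributions of $+2k$ and $-2k$ are then computed by summing the per-unit contributions of $\pm 2$ over the $k$ copies.

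The main technical hurdle is designing the unit gadget $U$ so as to simultaneously satisfy (i)--(iii): the two internal boundary vertices must have all their incident edges oriented in a common direction to qualify as switches, yet enough internal connectivity must be present for $U + ab$ to be triconnected, while the embedding must route the large angles of $\ell$ and $r$ to opposite exterior faces. Verifying these properties for the specific small gadget is a routine application of \cref{th:upward-conditions}; once the unit is in hand, the rest of the lemma follows from the inductive arguments outlined above.
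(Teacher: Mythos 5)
Your proposal does not match the paper's treatment: the paper does not prove this lemma at all, but imports the tendril gadget wholesale from Garg and Tamassia~\cite{GargT01}. Judged on its own terms, your from-scratch construction has genuine gaps. The most serious one is structural: gluing $k$ unit gadgets at single shared vertices $b_i=a_{i+1}$ cannot yield a graph for which $T_k+st$ is triconnected. For $k\ge 2$ the pair $\{s,b_1\}$ separates the internal vertices of $U_1$ from the rest (their only neighbours outside $U_1$'s interior are $s$ and $b_1$), and for $k\ge 3$ two consecutive junction vertices $\{b_i,b_{i+1}\}$ separate the interior of $U_{i+1}$; the single edge $st$ destroys neither cut. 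So the inductive step ``the only new cut is the size-one cut at the shared pole'' is false, and the actual tendril must (and does) attach consecutive coils along more than a cut vertex. A second gap is the unit gadget itself: its existence with properties (i)--(iii) is the entire technical content of the lemma, and you only assert it (``a direct case analysis then confirms''), which is exactly the part one cannot wave away.

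Two further points would need repair even with a correct skeleton. First, the contribution bookkeeping is inconsistent: by \textbf{UP1} a degree-two switch vertex has one large and one small angle, so it contributes $+1$ to the face containing its large angle and $-1$ to the other; placing the large angle of $\ell$ in the left outer face and the large angle of $r$ in the right outer face would make \emph{both} boundary paths contribute positively, and in any case you get $\pm 1$ per vertex, not the claimed $\pm 2$ per unit, while the non-switch junction vertices (incoming edges from $U_i$, outgoing to $U_{i+1}$) contribute $0$ or $-1$ by \textbf{UP2} and must be accounted for to reach exactly $2k$ and $-2k$. Second, the uniqueness argument misuses \cref{th:upward-conditions}: \textbf{UP1} fixes only the \emph{number} of large angles at a switch vertex, not their position, so the angle assignment is not ``forced once the rotation system is fixed''---eliminating that freedom via the per-face constraints \textbf{UP3} is precisely what the gadget's internal structure must achieve. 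Likewise, deleting $st$ from a triconnected planar graph does not automatically preserve uniqueness of the combinatorial embedding (removing an edge can create split pairs that admit flips), so that step also needs a gadget-specific argument rather than an appeal to Whitney's theorem.
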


\subsection{Rectilinear planarity} \label{subse:rect-definitions}
It is also well-known that rectilinear embeddings can be characterized by angle assignments under a planar embedding. Let $G$ be an undirected graph of degree at most 4, and let $\Epsilon$ be its planar embedding. Consider an angle assignment that maps every angle in $\Epsilon$ to a value in \{1, 2, 3, 4\}. For a face $f$ in $\Epsilon$, we denote by $n_i(f)$ the number of angles of $f$ that are labeled $i$, with $i \in \{1, 2, 3, 4\}$. The characterization is as follows.

\begin{theorem}[\cite{DBLP:journals/siamcomp/Tamassia87,VijayanW85}]\label{th:rect-conditions}
		Let $G$ be a graph, $\Epsilon$ be a planar embedding of $G$, and $\lambda$ be an assignment of each angle of each face in $\Epsilon$ to a value in $\{1, 2, 3, 4\}$. Then $\Epsilon$ and $\lambda$ define a rectilinear embedding of $G$ if and only if the following properties hold:
		\begin{description}
			\item[RE0]\label[condition]{th:rect-conditions-vertex} For every vertex $v \in V(G)$, the sum of the labels of angles around $v$ is $4$.
			\item[RE1]\label[condition]{th:rect-conditions-face} For every face $f$ in $\Epsilon$,
   $$2 \cdot n_4(f) + n_3(f) - n_1(f) = \begin{cases}
   -4\quad\text{if $f$ is an internal face},\\
   4\quad\text{if $f$ is the outer face}.
   \end{cases}$$
		\end{description}
	\end{theorem}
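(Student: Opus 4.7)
The characterization is well-established in the graph-drawing literature; the forward direction is an angle-sum computation, while the reverse direction requires reconstructing an actual rectilinear drawing from the combinatorial data.

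\emph{Forward direction.} Suppose $(\Epsilon, \lambda)$ arises from a rectilinear planar drawing $\Gamma$, so each label $k \in \{1,2,3,4\}$ represents an interior angle of $k\cdot \pi/2$. \textbf{RE0} is immediate: the angles around any vertex $v$ partition a full turn of $2\pi$, so their labels sum to $4$. For \textbf{RE1}, traverse the boundary of a face $f$ keeping $f$ on the left. By the rotational-index theorem for simple closed curves, the sum of exterior (turning) angles equals $+2\pi$ for an internal face and $-2\pi$ for the outer face. Since the exterior angle at an interior angle labeled $k$ equals $\pi - k\cdot \pi/2$, in units of $\pi/2$ we obtain
$$\sum_{\alpha \text{ of } f}(2-\lambda(\alpha)) \;=\; n_1(f) - n_3(f) - 2n_4(f) \;=\; \pm 4,$$
which rearranges to \textbf{RE1}.

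\emph{Reverse direction.} Given $(\Epsilon, \lambda)$ satisfying \textbf{RE0} and \textbf{RE1}, I would construct a rectilinear drawing in two phases: first assign edge \emph{directions} from $\{E,N,W,S\}$, then edge \emph{lengths}. Pick any edge and orient it eastward from one endpoint; propagate directions through $\Epsilon$ using the rule that, at each vertex, the angle labels in rotational order specify how the direction of each incident edge rotates (by $k\cdot \pi/2$) relative to the previous one. Condition \textbf{RE0} ensures local consistency, as the labels around the vertex sum to $4$ and the $2\pi$ rotation closes. For global consistency, note that in a planar graph the cycle space is generated by the facial cycles, so it suffices to check that walking around each face the direction of motion rotates by a multiple of $2\pi$; by the same turning-angle computation as in the forward direction, \textbf{RE1} yields exactly $\pm 2\pi$. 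Hence edge directions are globally well-defined.

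It remains to pick strictly positive edge lengths so that every face closes geometrically. Group edges by axis (horizontal/vertical); for each face and each axis the signed sum of lengths along the boundary must vanish. This is a linear feasibility problem with positivity constraints, one equation per face per axis. A clean way to establish feasibility is Tamassia's flow formulation: set up one flow network per axis on the dual graph, with unknowns on the arcs corresponding to edge lengths and balance conditions at each face-node; \textbf{RE1} together with Euler's formula ensures that the demands are consistent, and strictly positive lengths can be extracted via a standard feasibility argument (for instance, by starting from all-ones lengths and augmenting along cycles, or by building the drawing face-by-face while maintaining consistency with the already-drawn portion). \textbf{The main obstacle} is this final length-assignment step: direction consistency falls out transparently from the turning-number computation, but producing strictly positive lengths that close all faces simultaneously is where the real technical content lies and is where both \textbf{RE0} and \textbf{RE1} are used in an essential, global way.
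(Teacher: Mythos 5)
First, note that the paper does not prove this statement at all: it is imported as a known characterization from the cited works of Tamassia and Vijayan--Wigderson, so there is no in-paper argument to compare against, and your attempt has to be measured against the standard proofs in those references. Your forward direction is essentially that classical argument and is correct: \textbf{RE0} is the full-turn condition at each vertex, and \textbf{RE1} follows from summing the turning angles $\pi - \lambda(\alpha)\cdot\pi/2$ along the boundary of each face. One small point of care: a face boundary in a plane graph need not be a simple closed curve (with bridges or cut vertices the boundary walk repeats edges and vertices), so you should invoke the angle-sum identity for closed boundary walks, counting angles with multiplicity, rather than the rotation-index theorem for simple closed curves; the computation itself is unchanged.

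The reverse direction is where your proposal has a genuine gap, and you partly acknowledge it. The direction-propagation step is fine: \textbf{RE0} gives local consistency, \textbf{RE1} gives that the net rotation around each facial walk is a multiple of $2\pi$, and facial walks generate all closed walks of a connected plane graph. But the heart of the theorem is the claim that strictly positive edge lengths exist making every face close in both axes, and your proposal only asserts this. ``Starting from all-ones lengths and augmenting along cycles'' is not an argument---there is no guarantee that such augmentation terminates, nor that an augmenting cycle preserving positivity exists at each step---and the per-face balance conditions do not directly form a flow network with an evident feasible point: flow conservation at a face node corresponds to ``total top length equals total bottom length,'' which is only the right condition once the face is rectangular. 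The standard proofs close exactly this gap by refining every face into rectangular faces through the insertion of dummy vertices and edges; for the refined representation a feasible positive length assignment is immediate, and the refinement is then discarded. Without that refinement step (or a substitute, e.g.\ an LP-duality/Farkas argument showing that the length system with strict positivity is always feasible under \textbf{RE0} and \textbf{RE1}), your reverse direction does not go through as written.
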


 In the light of condition RE1, we say that the \emph{contribution} of a subpath of the boundary of a face to this face is twice the number of fours plus the number of threes minus the number of ones among the values assigned to the angles at internal vertices of the subpath towards the face.
 Similarly to upward planarity, we use the following gadgets called \emph{rectilinear tendrils}, introduced by Garg and Tamassia~\cite{GargT01}.
 \begin{lemma}[\cite{GargT01}]
     \label{lemma:rect_tendrils}
     For every $k \ge 0$, there exists an undirected graph $T_k$ with two special vertices called \emph{poles}, such that the following properties hold:
     \begin{itemize}
         \item $T_k$ is rectilinear planar,
         \item the poles have degree one in $T_k$,
         \item $T_k$ admits exactly four rectilinear embeddings that all share the same underlying planar embedding where the poles are both on the outer face,
         \item under this embeddings, the angle contribution of the two boundary paths from the source pole to the sink pole are $f$ and $-f$, respectively, where $f \in \{4k, 4k + 1, 4k + 2\}$,
         \item $T_k$ is of size $\Oh(k)$ and of pathwidth 2.
     \end{itemize}
 \end{lemma}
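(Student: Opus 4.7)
}
The plan is to construct $T_k$ explicitly as a chain of $k$ rigid orthogonal unit gadgets, with two degree-one ``pole'' stubs attached at the two ends, and then to verify the five listed properties in turn using the angle-assignment characterization of \cref{th:rect-conditions}. For the basic unit $U$, I would take a 4-cycle $C = (a,b,c,d)$ with two extra attached vertices: an additional vertex $x$ of degree~2 inserted on edge $\{a,b\}$ (subdividing it and attached also to $c$ via an extra edge forcing a reflex angle), so that the resulting orthogonal representation of $U$ forces one designated side to contribute $+2$ angle units and the opposite side $-2$. The full body of $T_k$ is obtained by identifying the ``right'' pole of the $i$-th copy $U_i$ with the ``left'' pole of $U_{i+1}$ for $i=1,\dots,k-1$, and finally attaching a single edge to a fresh degree-1 vertex $p$ at the left end and $q$ at the right end; $p$ and $q$ are the poles.

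First I would verify that $T_k$ is rectilinear planar by exhibiting one concrete orthogonal drawing (a staircase-like shape that zig-zags $k$ times). This simultaneously establishes the size bound $|V(T_k)| + |E(T_k)| = \Oh(k)$ and the pathwidth bound: the graph has a path-decomposition of width~2 by sliding a window of size~3 (the boundary vertices shared between two consecutive units $U_i, U_{i+1}$) along the chain, and then locally expanding each $U_i$ using its constant-size structure --- this is exactly an instantiation of \cref{lemma:pw_gadgets} with the underlying ``skeleton'' being a path. Verifying that the poles have degree~1 is immediate from the construction.

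Next I would establish the rigidity of the planar embedding. Each unit $U_i$ becomes triconnected after adding an edge between its two attachment points, so by standard SPQR-theory the planar embedding of $U_i$ is unique up to a flip. Chaining the units via 2-cuts and attaching the degree-1 poles then shows that, once we fix the outer face to contain both poles, the planar embedding of $T_k$ is unique. For the angle labeling, I would use \cref{th:rect-conditions}: property \textbf{RE0} restricts the angles at each internal vertex of $U_i$ to values in $\{1,2,3,4\}$ summing to~$4$, which --- together with \textbf{RE1} applied to the small internal face(s) of $U_i$ --- leaves essentially no choice for the interior angles of each unit. The only remaining freedom is (a) a global $90^{\circ}$ rotation-equivalent choice at one of the two endpoints, and (b) an independent choice at the other endpoint; this gives $2 \times 2 = 4$ rectilinear embeddings, and I would check case-by-case that the boundary contributions in the four resulting labelings take the values $\pm 4k, \pm (4k+1), \pm (4k+1), \pm(4k+2)$, matching the required set $f \in \{4k, 4k+1, 4k+2\}$ with $f$ on one side and $-f$ on the other.

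The main obstacle will be the rigidity argument: showing that there are \emph{exactly} four rectilinear embeddings, not more. A single unit $U_i$ has only a constant number of internal faces and only constantly many labeling possibilities, but when chaining many units one must rule out unexpected ``compensating'' labelings where an excess $+1$ in one unit is cancelled by a $-1$ in another. I would handle this by induction on $k$: the base case $k=0$ (or $k=1$) is a finite check, and the inductive step argues that the interface angles between $U_i$ and $U_{i+1}$ must agree across the shared 2-cut, forcing the labeling of $U_{i+1}$ once that of $U_i$ is fixed; the only true degrees of freedom therefore live at the two endpoints of the chain, yielding exactly~4 labelings with the stated contributions.
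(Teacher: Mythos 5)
First, a point of reference: the paper does not prove \cref{lemma:rect_tendrils} at all — it is imported from Garg and Tamassia~\cite{GargT01} and used as a black box — so there is no in-paper argument to compare yours against; what you are attempting is a reconstruction of the cited gadget. Your overall plan (a chain of constant-size rigid units with all remaining freedom pushed to the two ends, plus degree-one pole stubs) is in the spirit of the Garg--Tamassia tendril, and the routine parts (rectilinear drawability, size $\Oh(k)$, pathwidth $2$ via \cref{lemma:pw_gadgets}, degree-one poles) would go through for any reasonable unit. The problem is that the parts that carry the actual content of the lemma are not established.

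Concretely: (1) the unit gadget $U$ is not well-defined — you announce two extra attached vertices but describe only $x$, and $x$ is claimed to have degree $2$ while simultaneously subdividing $\{a,b\}$ and being joined to $c$, which makes it degree $3$; since everything downstream rests on a finite check of this unit's admissible angle assignments, the claimed ``one side $+2$, other side $-2$'' rigidity cannot be verified. (2) The rigidity claim itself — exactly four rectilinear labelings, with freedom only at the ends — is the heart of the lemma and is only asserted. Note that inside $T_k$ alone the two boundary paths do not bound faces of $T_k$ (they will bound faces of the host graph), so their angles are constrained only indirectly, through \textbf{RE0} at each boundary vertex combined with \textbf{RE1} on the internal faces of the units; you must show that every internal-face angle at every boundary vertex is forced, which then forces the outward contribution. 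Your induction step (``interface angles must agree across the shared 2-cut, forcing the labeling of $U_{i+1}$'') presupposes exactly this local rigidity and does not rule out the compensating labelings you yourself flag. Also, a degree-one pole has its single angle forced to label $4$, so the four embeddings can only come from the angles at the poles' neighbours; attributing them to a ``rotation-equivalent choice at the endpoints'' leaves unverified both the count of four and the requirement that in each labeling the two boundary contributions are exactly $f$ and $-f$. (3) The arithmetic does not match the statement: $k$ units contributing $\pm 2$ each yields $\pm 2k$, not $\pm 4k$; this is fixable by rescaling the unit or the chain length, but then the claimed list of contributions $\pm 4k,\pm(4k{+}1),\pm(4k{+}1),\pm(4k{+}2)$ must be re-derived rather than asserted. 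As it stands, the proposal is a plausible outline of the known construction, not a proof.
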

 We say that the tendril $T_k$ has a \emph{significant contribution} $4k$ to a face if its contribution is in $\{4k, 4k + 1, 4k + 2\}$, and $-4k$ otherwise.

\section{From \textsc{Multicolored Clique} to \textsc{All-or-Nothing-Flow}}
In this section, we present an FPT-reduction that given an instance $(G, k)$ of \mclique,  constructs an equivalent
instance of \AoNF on a planar graph of pathwidth~$\Oh(k)$. We present several useful gadgets, before describing the construction of the flow instance.




\subsection{Reduction to \AoNF}
The vertex set $V(G)$ of a \mclique instance $(G, k)$ is partitioned into~$k$ sets~$V_1, \ldots, V_k$. For each~$i \in [k]$, let $V_i=\{v_{i, 1}, \dots, v_{i, N}\}$. Then, let $\overline{E}(G)=\{\{u,v\}\mid u\in V_i, v\in V_j, \{u,v\}\not\in E(G)\}$, i.e. the set of non-edges of $G$; let $|\overline{E}(G)|=m$ and number the edges of~$\overline{E}(G)$ arbitrarily from~$1$ to~$m$. 

\paragraph{Gadgets.}
Now, to facilitate the presentation, we describe a \emph{vertex selection gadget} ($\VS$ gadget) for a vertex set~$V_i=\{v_{i, 1}, \dots, v_{i, N}\}$ and $j\in \mathbb{N}$ ($\VS_i^j$,~\cref{fig:VS_gadget}).
\begin{itemize}
    \item First, we introduce a pair of vertices $V_i^j$ and $V_i^{j+1}$, and a vertex set $\left\{v_{i, {q}}^{j}, u_{i, {q}}^{j}, w_{i, {q}}^{j}, g_{i, {q}}^{j}, h_{i, {q}}^{j}\right\}_{{q}\in [N]}$.
    \item For each $q\in [N]$,  we introduce an oriented path from $V_i^j$ to $V_i^{j+1}$ through the vertices $v_{i, {q}}^{j}, u_{i, {q}}^{j}, w_{i, {q}}^{j}, g_{i, {q}}^{j}, h_{i, {q}}^{j}$ in the specified order, also each arc of such a path has capacity $2kN+2q$.
\end{itemize}


\begin{figure}[t]
    \centering
        \includegraphics[page=1]{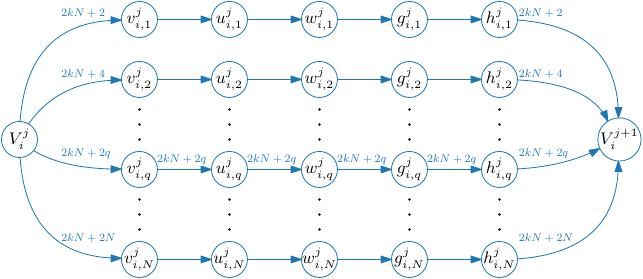} 
    \caption{The $\VS_i^j$ gadget.}
    \label{fig:VS_gadget}
\end{figure}

At this point, we can introduce the \emph{checker gadget}. The checker gadget is a specific combination of $k$ suitable $\VS$ gadgets. For each $j\in[m]$, for the $j^{\text{th}}$ non-edge $\{v_{i,a}v_{\ell,b}\}$ of~$G$ (without loss of generality, $i\leq\ell$) we construct its checker gadget $\CH^j$ as follows.

\begin{itemize}
    \item For each $i\in [k]$, we introduce a $\VS_i^j$ gadget.
    \item Then,  we add a pair of vertices: $x^j$ and $y^j$.
    \item We add an oriented path of capacity $1$ going from $x^j$ to $y^j$ through $w$-vertices of all VS gadgets introduced here, as shown in~\cref{fig:CH_gadget}.
    \item We add four more oriented paths of capacity $1$, going in the following fashion: 
    \begin{itemize}
        \item it starts at $V_i^j$ ($V_{\ell}^j$), goes to $v_{i, N}^{j}$ ($u_{\ell, N}^{j}$) and then reaches $x^j$ through the $v$-vertices ($u$-vertices) of VS gadgets with a subscript less than or equal to $i$ (less than or equal to $\ell$);
        \item it starts at $y^j$, goes to the $h_{\ell, 1}^{j}$ ($g_{i, 1}^{j}$) through the $h$-vertices ($g$-vertices) of VS gadgets with a subscript greater than or equal to $\ell$ (greater than or equal to $i$), and then reaches $V_{\ell}^{j+1}$ ($V_{i}^{j+1}$).
    \end{itemize}
    
    \item We decrease the capacity of the arcs following the path from $V_i^j$ to $V_i^{j+1}$ (from $V_{\ell}^j$ to $V_{\ell}^{j+1}$) through the $a^{\text{th}}$ ($b^{\text{th}}$) row of the $\text{VS}_{i}^j$ ($\text{VS}_{\ell}^j$) gadget by one,  i.e. set it to $2kN+2a-1$ ($2kN+2b-1$).
\end{itemize}

The final construction of a $\CH^j$ gadget is shown in~\cref{fig:CH_gadget}.

\begin{figure}
    \centering
        \includegraphics[page=2]{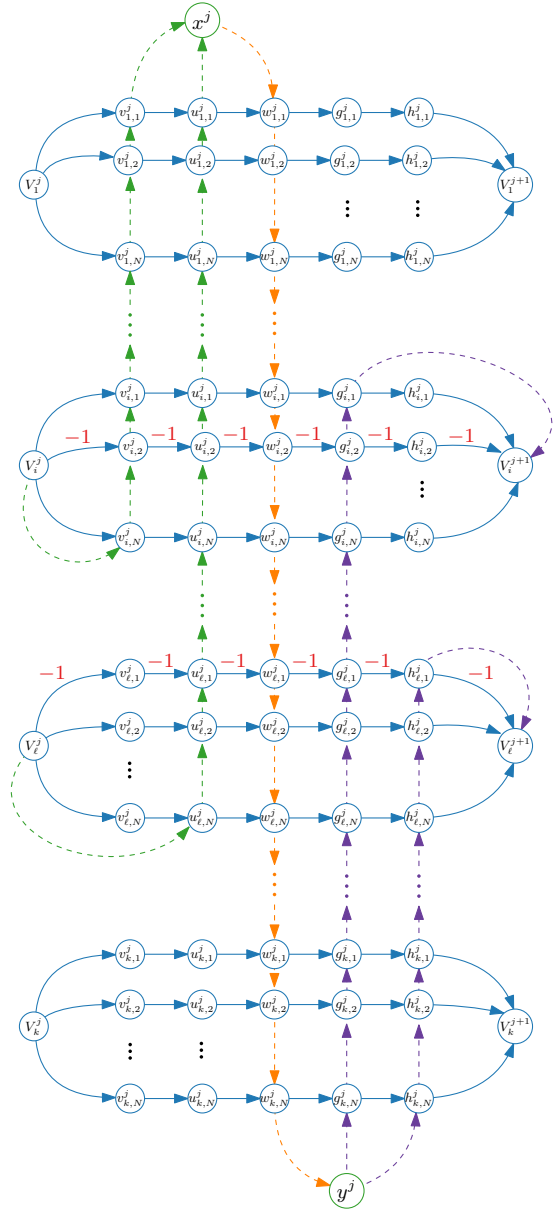} 
    \caption{The $CH^j$ gadget for an non-edge $v_{i,2} v_{\ell, 1}$. All arcs of colored paths have capacity $1$.}
    \label{fig:CH_gadget}
\end{figure}

\paragraph{The final reduction.} 
We now present the reduction. Consider an instance $(G, k)$ of \mclique. We construct a flow network~$(\calG, s, t, c)$ as follows. See also \cref{fig:multicolored-k-clique-reduction} for a schematic representation..
\begin{itemize}
    \item First, we introduce a source vertex $s$ and sink vertex $t$.
    \item Then, for each $j\in [m]$, we introduce the $\CH^j$ gadget. Note that, for $j\in [m-1]$, two VS gadgets $\VS_i^j$ and $\VS_i^{j+1}$ share the common vertex $V_i^{j+1}$.
    \item For each $i\in[k]$, we add:
    \begin{itemize}
        \item one arc of capacity $2kN$ and $N$ arcs of capacity $2$ from $s$ to $V_i^1$, and
        \item one arc of capacity $2kN$ and $N$ arcs of capacity $2$ from $V_i^{m+1}$ to $t$. 
    \end{itemize}
    \item At the end, we add a set $A$ of $k(N-1)$ arcs from $s$ to $t$, each of capacity~$2$.
\end{itemize}

This concludes the construction of $\mathcal{G}$. The reduction returns $(\mathcal{G}, \calF)$ as an instance of \AoNF,  where $\calF=k(2kN+2N)$.

\subsection{Correctness of the Reduction to \AoNF}
Suppose, given an instance $(G, k)$ of \mclique, that the reduction from the previous subsection returns $(\mathcal{G}, \calF)$ as an instance of \AoNF.
\begin{lemma}
\label{lemma:AoNF-1}
If $(G, k)$ is a yes-instance of \mclique, then $\mathcal{G}$ admits an all-or-nothing $st$-flow with value $\calF$.
\end{lemma}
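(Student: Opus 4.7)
The plan is to build the required all-or-nothing flow $f$ by encoding the multicolored clique $C = \{v_{1, a_1}, v_{2, a_2}, \ldots, v_{k, a_k}\}$ directly into the choice of parallel arc used in each vertex-selection gadget. For every row $i \in [k]$, I will send exactly $2kN + 2a_i$ units of flow from $s$ to $t$ along the $a_i$-th parallel arc of every $\VS_i^j$ gadget; the incoming $2kN + 2a_i$ units at $V_i^1$ are supplied by fully saturating the single capacity-$2kN$ arc out of $s$ together with exactly $a_i$ of the $N$ parallel capacity-$2$ arcs out of $s$, and symmetrically the flow leaves $V_i^{m+1}$ to $t$. Since each arc picked is used at its full capacity, the all-or-nothing property is preserved so far.

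The core of the correctness argument is showing that each checker gadget $\CH^j$ for the $j$-th non-edge $\{v_{i,a}, v_{\ell,b}\}$ accommodates this flow. Because $C$ is a clique and $\{v_{i,a}, v_{\ell,b}\}$ is a non-edge, at most one of $a = a_i$ or $b = a_\ell$ can hold. When neither holds, the $a_i$-th path in $\VS_i^j$ and the $a_\ell$-th path in $\VS_\ell^j$ both retain their undecreased capacities and I route the main flow through them while leaving every capacity-$1$ colored arc empty. When exactly one holds, say $a = a_i$, the arcs on the $a_i$-th path in $\VS_i^j$ have reduced capacity $2kN + 2a_i - 1$; I saturate them and route the missing unit along the capacity-$1$ detour $V_i^j \to x^j \to y^j \to V_i^{j+1}$ via the $v$-, $w$-, and $g$-chains, saturating every arc of that detour. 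Flow conservation at each internal vertex $v_{i',q}^j, u_{i',q}^j, w_{i',q}^j, g_{i',q}^j, h_{i',q}^j$ of every VS-gadget that the detour passes through then follows mechanically: a vertex on the chosen $a_{i'}$-th main row sees $2kN + 2a_{i'}$ on both sides (possibly combining $2kN + 2a_{i'} - 1$ from the main arc with $1$ from the colored detour), while a vertex off the chosen row sees either $1$ on both colored incidences or $0$ on both.

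The total flow leaving $s$ produced so far is $\sum_{i=1}^k (2kN + 2a_i) = 2k^2 N + 2\sum_i a_i$, which may be strictly smaller than the target $\calF = k(2kN + 2N) = 2k^2N + 2kN$. I close this gap by saturating exactly $\sum_{i=1}^k (N - a_i) \leq k(N-1)$ of the direct capacity-$2$ arcs in $A$, which contributes the missing $2\sum_{i=1}^k (N - a_i)$ units. After this adjustment, the value of $f$ is exactly $\calF$, flow conservation holds at every vertex of $\calG$, and every arc is either unused or saturated.

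The main obstacle I anticipate is the careful bookkeeping inside a single checker gadget: verifying flow conservation at every vertex along the $v$-, $u$-, $w$-, $g$-, and $h$-chains, checking that the colored detours are routed consistently through all the auxiliary VS-gadgets they traverse (those with subscript $\le i$ for the $v$-detour, with subscript $\ge i$ for the $g$-detour, and symmetrically for $u$ and $h$), and confirming that rows $i$ and $\ell$ are never simultaneously forced to use the capacity-$1$ bottleneck $x^j \to y^j$. This last point is exactly where the non-edge structure is exploited: simultaneous use would require both $v_{i,a}$ and $v_{\ell,b}$ to lie in $C$, contradicting the fact that $C$ is a clique while $\{v_{i,a}, v_{\ell,b\}}$ is a non-edge. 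Once this observation is in place, the remaining verifications reduce to routine arithmetic.
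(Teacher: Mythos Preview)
Your proposal is correct and follows essentially the same approach as the paper's proof: select the $a_i$-th path through each $\VS_i^j$, supply it from $s$ via the $2kN$-arc plus $a_i$ capacity-$2$ arcs, use the capacity-$1$ detour through $x^j,y^j$ in the (at most one) row whose path capacity was decreased in $\CH^j$, observe that two rows never compete for the $x^j\to y^j$ bottleneck because $C$ is a clique, and top up to $\calF$ using $\sum_i(N-a_i)\le k(N-1)$ of the arcs in $A$. Your treatment is slightly more explicit about flow conservation at the internal $v,u,w,g,h$-vertices than the paper's, but the argument is the same.
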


\begin{proof}
Let $C$ be a multicolored $k$-clique in $G$.
For each $i\in [k]$, without loss of generality, let $v_{i, a_i}\in V(C)$ for some $a_i\in [N]$.
Then, let the flow $f$ in $\calG$ use the arc of capacity $2kN$ and $a_i$ arcs of capacity $2$ from $s$ to $V_i^1$. 
Thus, for each $i\in [k]$, there is an inflow $f_i=2kN+2a_i$ at $V_i^1$ that is an even value in the interval from $2kN+2$ to $2kN+2N$.
Let $S=\sum_{i\in[k]}f_i$. Notice, inequality $k(2kN+2)\leq S \leq k(2kN+2N)$ holds.
Recall that~$A$ consists of $k(N-1)$ parallel $st$-arcs, each of capacity $2$. Since $\calF-S\leq 2k(N-1)$, sending the rest of the flow over a subset of the arcs of $A$ will be enough to achieve the total flow of value $\calF$ if, in turn, each of $f_i$ reaches $t$.

At this point, we are left to check whether, for each $i\in [k]$, an inflow at $V_i^1$ can be propagated to $t$ through $\calG$.
For each $i\in[k]$, by construction, there is a directed path from $V_i^1$ to $V_i^{m+1}$ whose arcs are of capacity either $2kN+2a_i$ or $2kN+2a_i-1$. Thus, for each $j\in [m]$, we go from column to column and if there is a path of  capacity $2kN+2a_i$ through the $\text{VG}_i^j$ gadget, we direct the flow through it. Otherwise, by construction there is a path of capacity $2kN+2a_i-1$ through the $\text{VG}_i^j$ gadget; also, there is a path of capacity $1$ that goes through either the $v$- or $u$-vertices up to $x^j$, which reaches $y^j$ through the $w$-vertices and then reaches $V_{i}^{j+1}$ through either the $g$- or $h$-vertices of the VS gadgets.

To route all these parts of the flow simultaneously, it is crucial that there is no column $j$ where two rows both need to use the path from $x^j$ to $y^j$ at the same time. Suppose the opposite. Then $v_{i, a_i}$ ($v_{\ell, a_{\ell}}$) propagates the flow of value $2kN+2a_i-1$ ($2kN+2a_\ell-1$) through $\text{VS}_i^j$ ($\text{VS}_{\ell}^j$), and the flow of value $1$ goes through the path $x^jy^j$. By construction, the capacities of only two paths in $\text{CH}^j$ were decreased by one, each of them corresponds to an endpoint of the $j^{\text{th}}$ non-edge. So, $\{v_{i, a_i}v_{\ell, a_\ell}\}\in \overline{E}(G)$. This contradicts the assumption that $C$ is a $k$-clique in $G$.

Hence, for each $i\in [k]$, $f_i$ reaches $V_i^{m+1}$ and afterwards goes from $V_i^{m+1}$ to $t$ through the arc of capacity $2kN$ and through $a_i$ arcs of capacity $2$. As outlined earlier, together with the flows through the arcs of $A$, the resulting value of the flow is $k(2kN+2N)$, i.e. the required $\calF$.
\end{proof}

\begin{lemma}
\label{lemma:AoNF-2}
If $\mathcal{G}$ admits an all-or-nothing $st$-flow with value $\calF$, then there is a multicolored $k$-clique in $G$.
\end{lemma}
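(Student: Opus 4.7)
The plan is to extract, from any all-or-nothing $st$-flow $f$ of value $\calF$ in $\calG$, indices $b_1, \ldots, b_k \in [N]$ such that $\{v_{1, b_1}, \ldots, v_{k, b_k}\}$ is the desired multicolored clique. The first step is to pin down the inflow into each row. The arcs from $s$ to $V_i^1$ consist of a single arc of capacity $2kN$ together with $N$ parallel arcs of capacity $2$, so the all-or-nothing constraint forces the inflow at $V_i^1$ to have the form $\epsilon_i \cdot 2kN + 2b_i$ with $\epsilon_i \in \{0,1\}$ and $b_i \in \{0, 1, \ldots, N\}$. Since the $k(N-1)$ direct $st$-arcs absorb at most $2k(N-1)$ flow, the rows must collectively carry at least $\calF - 2k(N-1) = 2k^2N + 2k$; this rules out $\epsilon_i = 0$ for any $i$, since otherwise the row inflows total at most $(k-1) \cdot 2kN + 2kN = 2k^2N$. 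Hence $\epsilon_i = 1$ for every $i$.

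Next, I would propagate the flow through each row. Let $G_i^j$ be the total flow from $V_i^j$ to $V_i^{j+1}$, counting both the parallel row paths inside $\VS_i^j$ and, when applicable, the single-unit bypass path through $x^j$ and $y^j$. Flow conservation at $V_i^{j+1}$ yields $G_i^j = G_i^{j+1}$, so $G_i^j = 2kN + 2b_i$ for every column $j$. Because $(2kN+2) + (2kN+4) > 2kN + 2N \ge G_i^j$, at most one row path of $\VS_i^j$ can be saturated. A short parity check then leaves only two viable configurations per column of row $i$: either (i) an unreduced row path of capacity $2kN + 2b_i$ is used alone, or (ii) the reduced row path of capacity $2kN + 2b_i - 1$ is used together with a single unit of bypass flow. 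In particular, this forces $b_i \in [N]$, and the bypass at column $j$ is used by row $i$ exactly when $b_i$ coincides with the row index of $\VS_i^j$ whose capacity was decreased.

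The main obstacle, and the step that finally yields the clique, is to exploit that the bypass at column $j$ is shared between precisely the two rows forming the $j$-th non-edge. For non-edge $\{v_{i,a}, v_{\ell,b}\}$ indexed by $j$, the only arcs entering $x^j$ originate from $V_i^j$ and $V_\ell^j$, the only arcs leaving $y^j$ reach $V_i^{j+1}$ and $V_\ell^{j+1}$, and every arc on the path from $x^j$ to $y^j$ has capacity $1$. By the previous paragraph, row $i$ demands the bypass at column $j$ exactly when $b_i = a$, and row $\ell$ demands it exactly when $b_\ell = b$, so these two conditions cannot hold simultaneously. Ranging over all non-edges, the set $\{v_{1, b_1}, \ldots, v_{k, b_k}\}$ avoids every non-edge of $G$ and is thus a multicolored $k$-clique, as required.
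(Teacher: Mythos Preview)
Your proof is correct and follows essentially the same approach as the paper's: both pin down the inflow at each $V_i^1$ to be $2kN + 2b_i$ with $b_i \in [N]$ via a counting argument against the capacity of the direct $st$-arcs, argue that this value propagates unchanged across all columns, and then use the capacity-$1$ bottleneck between $x^j$ and $y^j$ to forbid the simultaneous choices $b_i = a$ and $b_\ell = b$ for each non-edge $\{v_{i,a}, v_{\ell,b}\}$. Your explicit ``at most one row path'' bound and parity check spell out what the paper leaves to the phrase ``by construction''; like the paper, your propagation step tacitly assumes the single unit of bypass flow returns to the same row (i.e.\ no $V_i^j \to x^j \to y^j \to V_\ell^{j+1}$ cross-over), which is the one place both arguments are sketchy but identical.
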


\begin{proof}
Suppose that there is an all-or-nothing flow $f$ of value $\calF$ in the flow network $\calG$.
Since $d^+(s)=\calF$, for each $i\in [k]$, the equality $2kN<d^-(V_i^j)\leq 2kN + 2N$ holds.
The second part of the inequality is straightforward since $2kN+2N$ is the total value of all incoming edges of $V_i^1$.
To prove the first, assume, by contradiction, that for some $i\in [k]$, $d^-(V_i^1)<2kN+2$.
There are two cases, the first is for $d^-(V_i^1)>2$. But by the construction of VS gadget, at least $2kN+2$ inflow is necessary to propagate something further through one of its horizontal paths.
As for the second case, consider the following sum of inflows, i.e. $\sum_{\ell\in[k], \ell\neq i}d^-(V_{\ell}^j)\leq (k-1)(2kN+2N)$. So, we have at least $2kN+2N$ units left to reach the flow value $\calF$. 
Except the VS gadgets, $f$ could propagate the flow through the set $A$ of $k(N-1)$ arcs of the capacity $2$.
But $2kN-2k$ with even 2-inflow into $V_i^1$ is less than the necessary $2kN+2N$, a contradiction.
Thus, if the all-or-nothing flow $f$ of value $\calF$ exists, then, for each $j\in [k]$, $d^-(V_i^1)$ is an even value between $2kN+2$ and $2kN+2N$, depending on the number of incoming arcs (from $s$ of capacity $2$) used by $f$.

We claim that for each $i\in [k]$ and $j\in[m]$, $d^+(V^j_i)$ is the same as $d^-(V_i^{j+1})$, in other words, the flow cannot 'escape' from the correspondent VS gadget. Let $d^-(V^j_i)=\calF'$. Then the flow of value $\calF'$ or $\calF'-1$ is propagated through the VS gadget itself by  construction, and, if necessary for the second case, a flow of value $1$ goes through the vertices of the CH gadget up to $x^j$, but then it can only go to~$y^j$, which in turn goes to~$V^{j+1}_i$.
This ensures that the flow propagates the same value through all VS gadgets on the same row.

To conclude, consider the following set $C$ of vertices in $\calG$: for each $i\in[k]$, consider the value on the inflow into $V_i^1$, let it be $d^-(V_i^1)=2kN+2a_i$; then we add $v_{i, a_i}$ from $V_i$ to the set $C$. 
According to the arguments above, such vertex is uniquely defined for each $V_i$, $i\in[k]$.
The existence of a pair $v_{i,a}, v_{\ell,b}\in C$ such that $\{v_{i,a}v_{\ell,b}\}\not\in E(G)$ contradicts the construction since, for each non-edge of $G$, there is a CH-column in $\calG$ that forbids inflows of values $2kN+2a$ in $V_i^j$ and $2kN+2b$ in $V_{\ell}^j$ simultaneously.
This completes the proof: Indeed, the vertex set $C$ is a multicolored clique in $G$.
\end{proof}

One of the key observations to obtain the result we are aiming for is the bounded pathwidth of $\calG$.

\begin{lemma} \label{lemma:aonf:pathwidth}
    The pathwidth of $\calG$ is $\Oh(k)$.
\end{lemma}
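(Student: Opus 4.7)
The plan is to construct an explicit path decomposition of $\calG$ of width $\Oh(k)$ that sweeps through the $m$ checker gadgets $\CH^1, \ldots, \CH^m$ from left to right. The key observation driving the construction is that consecutive checker gadgets share only the $k$ vertices $V_1^{j+1}, \ldots, V_k^{j+1}$, so the natural \emph{interface bag} between columns $j$ and $j+1$ is $I_j := \{s, t\} \cup \{V_i^{j+1} : i \in [k]\}$, of size $k+2$. The decomposition will start with the bag $I_0$, then for each $j \in [m]$ insert a block of bags handling the internal structure of $\CH^j$, and finally transition to $I_j$ by forgetting all column-$j$-local vertices.

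Inside the block for column $j$, every bag will extend a common \emph{persistent set} $P_j$ consisting of: (i) $s$ and $t$; (ii) the entry and exit vertices $V_i^j, V_i^{j+1}$ for all $i \in [k]$; (iii) the checker vertices $x^j$ and $y^j$; and (iv) the set $S_j$ of vertices lying on the five capacity-$1$ paths inside $\CH^j$ (namely the paths from $V_i^j$ and $V_\ell^j$ to $x^j$, the $x^j$-to-$y^j$ path through the $w$-vertices, and the paths from $y^j$ to $V_i^{j+1}$ and $V_\ell^{j+1}$, where $\{v_{i,a}, v_{\ell,b}\}$ is the $j$-th non-edge of $G$). For each of the remaining at most $N$ parallel five-vertex paths inside any $\VS_{i'}^j$ whose internal vertices are not in $S_j$, I will insert a short sequence of bags extending $P_j$ by a sliding window of at most two consecutive vertices of that path, covering each row edge before forgetting its internals.

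The main obstacle will be to show that $|S_j| \in \Oh(k)$, so that $|P_j| \le 2 + 2k + 2 + |S_j| \in \Oh(k)$ and every bag therefore has size $\Oh(k)$. This boils down to inspecting the construction of $\CH^j$ and verifying that each of the five capacity-$1$ paths visits only $\Oh(1)$ vertices per VS gadget: for example, the path from $V_i^j$ to $x^j$ visits one $v$-vertex in each VS gadget with subscript at most $i$, the $x^j$-to-$y^j$ path visits one $w$-vertex per VS gadget, and symmetrically for the other three paths. Once this bound is in hand, the remaining verification is routine: the arcs $s \to V_i^1$ and $V_i^{m+1} \to t$ are covered by $I_0$ and $I_m$ respectively; the parallel $s$-to-$t$ arcs in $A$ are covered by any bag (since $s$ and $t$ are always present); the capacity-$1$ path edges inside $\CH^j$ have both endpoints in $P_j$; each row edge of a VS gadget is covered by the appropriate sliding-window bag; and every vertex's occurrences form a contiguous subpath, because column-local vertices appear only within their block while each interface vertex $V_i^{j+1}$ persists exactly across the boundary between blocks $j$ and $j+1$.
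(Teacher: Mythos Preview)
Your high-level sweep over the columns is sound, but the key size estimate fails: the set~$S_j$ of vertices lying on the five capacity-$1$ paths in~$\CH^j$ has size~$\Theta(kN)$, not~$\Oh(k)$. The vertices~$v_{i,q}^j, u_{i,q}^j, w_{i,q}^j, g_{i,q}^j, h_{i,q}^j$ that make up each horizontal path of a~$\VS$ gadget are precisely the degree-four crossing points introduced when planarising the non-planar construction; each vertical capacity-$1$ arc therefore passes through one such vertex \emph{per horizontal path it crosses}, not one per VS gadget. For instance, the~$x^j$-to-$y^j$ path threads through \emph{all}~$kN$ $w$-vertices~$w_{1,1}^j, w_{1,2}^j, \ldots, w_{k,N}^j$ in the natural vertical order, and the path from~$V_i^j$ to~$x^j$ likewise visits every~$v$-vertex of every~$\VS_{i'}^j$ with~$i'\le i$. (This is also why the construction description says the path ``goes to~$v_{i,N}^j$'' rather than to an arbitrary~$v$-vertex: that is the first vertex encountered when entering the vertical~$v$-column from~$V_i^j$.) Consequently your persistent set~$P_j$ has size~$\Theta(kN)$, and the decomposition you describe has width~$\Theta(kN)$ rather than~$\Oh(k)$.

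The fix is to drop~$S_j$ from the persistent set altogether and instead let the sliding window range over the \emph{horizontal paths} (in the linear order~$(1,1),(1,2),\ldots,(k,N)$), keeping the five internal vertices of two consecutive such paths in each bag. The persistent set then contains only~$\{s,t,x^j,y^j\}\cup\{V_{i'}^j,V_{i'}^{j+1}:i'\in[k]\}$, of size~$2k+4$. Each capacity-$1$ edge joins two \emph{consecutive} horizontal paths and is therefore covered by the bag holding both; each horizontal-path edge and each edge to~$x^j,y^j,V_{i'}^j,V_{i'}^{j+1}$ is covered as before. This is exactly the decomposition the paper gives, yielding bags of size at most~$2k+14$.
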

\begin{proof}
    Without loss of generality, let us fix $j\in[m]$ and consider the following decomposition of a single $\CH^j$ gadget of graph $\calG$. 
    We introduce a path $P_j=P_j^1P_j^2\dots P_j^{kN}$  and associate the vertex set $\{x^j, y^j\}\cup\{V_i^j, V_i^{j+1}\}_{i\in [k]}$ with the bag (node) $P_j$. Now, for each $1<\ell\leq [kN]$, let $p=\lceil \ell/N \rceil$, $q=\{\ell/N \}$; we add $\left\{v_{p, q}^{j}, u_{p, q}^{j}, w_{p, q}^{j}, g_{p, q}^{j}, h_{p, q}^{j}\right\}$ to both $P_j^{\ell-1}$ and $P_j^{\ell}$; for $\ell=1$, the corresponding set $\left\{v_{1, 1}^{j}, u_{1, 1}^{j}, w_{1, 1}^{j}, g_{1, 1}^{j}, h_{1, 1}^{j}\right\}$ lies in only one bag $P_j^1$.
    
    Observe that for each $v\in \CH^j$, there exists a bag of the constructed path $P_j$ that contains $v$; in addition to that, each vertex occupies either all or $2$ consecutive bags. The last property of a decomposition that is needed to be a valid path decomposition is, for each edge of the graph, the existence of a bag with both its endpoints. For edges that are incident to any of $\{x^j, y^j\}\cup\{V_i^j, V_i^{j+1}\}_{i\in [k]}$ the property holds, since these vertices are in all bags, thus, they lie with each other vertex in the same bag at least once. Then, there are edges of the horizontal paths of $\VS$ gadgets. But all edges of each path were added simultaneously to the same bags. Last step is to check vertical edges of the capacity $1$. Since these edges are between vertices of a consecutive horizontal paths, there is a bag where we can find both their vertices. Thus, we have a valid path-decomposition for each $\CH^j$, $j\in [m]$. Notice,  that each bag of it contains no more than $2k+12$ vertices.

    To construct a valid path-decomposition for the whole graph $\calG$, let us consider a path that is a concatenation of $\{P_j\}_{j\in [m]}$ in increasing order. 
    There are no edges in-between vertices of different $\CH$ gadgets and $V$-vertices are the only intersection of vertex sets of $P_j$ and $P_{j+1}$ for each $j\in[m-1]$. Since $\{V_i^j, V_i^{j+1}\}_{i\in [k]}$ are in all bags of $P_j$, then each V-vertex is in at most two paths with sequential subscripts. As the last step, to all $mkN$ bags of a current decomposition we add vertices $s,t$. It follows that there exists a path-decomposition of $\calG$, such that $\pw(\calG)=2k+13$.
\end{proof}

To complete the line of argumentation for \AoNF, we establish that the constructed graph~$\calG$ is planar. We also state an additional property that will be useful later, to ensure that we can enrich the graph without violating planarity.

\begin{lemma} \label{lem:aonf:graph:is:planar}
    The graph~$\calG$ has a planar embedding containing two distinct faces~$f_1, f_2$, such that:
    \begin{itemize}
        \item all vertices of~$\{x^j \mid j \in [m+1]\}$, the source~$s$, and sink~$t$, are incident to~$f_1$, and
        \item all vertices of~$\{y^j \mid j \in [m+1]\}$, the source~$s$, and sink~$t$, are incident to~$f_2$.
    \end{itemize}
\end{lemma}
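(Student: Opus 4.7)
I would prove the lemma by describing an explicit planar embedding of~$\calG$ and identifying the two requested faces. The embedding places~$s$ on the far left and~$t$ on the far right, with the $k$ rows of VS gadgets stacked vertically between them (row~$1$ on top, row~$k$ on the bottom); each VS gadget is drawn as $N$ nested horizontal paths from $V_i^j$ to $V_i^{j+1}$, the bundles of parallel arcs from $s$ (resp.~$t$) to the $V_i^1$ (resp.~$V_i^{m+1}$) are fanned out in a planar way, and the $k(N-1)$ parallel arcs of~$A$ are drawn as nested arcs all placed below row~$k$. In this skeleton, the bounded face immediately below row~$k$ and above the topmost arc of~$A$ is a horizontal strip stretching from~$s$ to~$t$, while the region above row~$1$ is part of the outer face and likewise stretches from~$s$ to~$t$.

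For each column~$j$, I would insert the entire CH gadget inside column~$j$ by placing $x^j$ above row~$1$ and $y^j$ inside the strip below row~$k$. The two paths from $V_i^j$ and $V_\ell^j$ upward to $x^j$ are routed through a left corridor of column~$j$, using the vertices $v_{i',N}^j$ for $i' \leq i$ and $u_{\ell',N}^j$ for $\ell' \leq \ell$ (with the $v$-path to the outer left of the $u$-path, matching their relative positions inside each crossed VS gadget). Symmetrically, the two paths from $y^j$ up to $V_i^{j+1}$ and $V_\ell^{j+1}$ via $g$- and $h$-vertices are routed through a right corridor on the right side of each crossed VS gadget. The $xy$-path is routed through a middle corridor, threading vertically through a chosen $w$-vertex in each VS gadget. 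Since each CH gadget is confined to a single column, no crossings arise between gadgets of different columns, and within a single column the three corridors occupy disjoint vertical strips.

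With this embedding, take $f_1$ to be the outer face and $f_2$ to be the bounded strip between row~$k$ and the topmost arc of~$A$. Every $x^j$ is placed above row~$1$ with all its incident edges pointing downward, hence $x^j \in \partial f_1$; the source $s$ lies on $\partial f_1$ via the topmost of its parallel arcs to $V_1^1$, and symmetrically $t$ via the topmost arc from $V_1^{m+1}$. Every $y^j$ lies inside the strip~$f_2$ with all its edges pointing upward, so $y^j \in \partial f_2$; the source and sink lie on $\partial f_2$ as the endpoints of the topmost $A$-arc, which forms the lower boundary of~$f_2$. Since $f_1$ is unbounded while $f_2$ is bounded above by row~$k$, the two faces are distinct.

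The main obstacle I expect is verifying crossing-freeness of the routing inside each column, especially that a single $w$-vertex per VS gadget can be chosen consistently so that the $xy$-path does not cross any horizontal path of any VS gadget. This can be arranged by pulling the chosen $w$-vertex toward the top or bottom boundary of the VS gadget and drawing the adjacent horizontal paths to leave a gap around it, after which crossing-freeness follows by direct inspection of the finite number of incidences inside each column.
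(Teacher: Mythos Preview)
Your approach is essentially the same as the paper's: describe an explicit planar embedding (with $s$ on the left, $t$ on the right, the $k$ rows stacked vertically, each $x^j$ above row~$1$, each $y^j$ below row~$k$, and the arcs of~$A$ drawn below everything) and then identify~$f_1$ as the outer face and~$f_2$ as the strip between row~$k$ and the topmost $A$-arc. The paper's own proof is in fact briefer than yours---it essentially defers to the figures---so your write-up is a more explicit version of the same argument.

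There is, however, one genuine slip in the routing you describe. You propose to thread the $xy$-path through ``a chosen $w$-vertex in each VS gadget'' and to ``pull the chosen $w$-vertex toward the top or bottom boundary \ldots\ drawing the adjacent horizontal paths to leave a gap around it.'' This cannot work topologically: the $N$ horizontal paths of a VS gadget all share the two endpoints $V_i^j$ and $V_i^{j+1}$, so in any planar drawing they are nested, and a vertical path entering the gadget from above and leaving below must meet every one of them. Passing through a single $w$-vertex and ``leaving a gap'' still forces crossings with the other $N-1$ paths. In the actual construction the $xy$-path is routed through \emph{all} $kN$ $w$-vertices---one on each horizontal path of each of the $k$ gadgets in the column---so that every would-be crossing becomes an incidence at a vertex. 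The same principle governs the $v$-, $u$-, $g$-, and $h$-paths: each threads through the corresponding vertex on every horizontal path it must traverse. Once you adopt this routing (which is what $\calG$ actually is), your identification of $f_1$ and $f_2$ and the verification that $s$, $t$, the $x^j$, and the $y^j$ lie on the claimed faces go through exactly as you wrote.
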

\begin{proof}
    The overall graph $\calG$ looks like the drawing of \cref{fig:multicolored-k-clique-reduction}; with minor differences which are crucial for planarization; see \cref{fig:multicolored-k-clique-reduction-planar}. The first difference is that instead of $N$ arcs between any pair $V_i^j$ and $V_i^{j+1}$, $i\in [k]$ and $j\in [m]$, we have a $\VS_i^j$ gadget (see~\cref{fig:VS_gadget}). The second is that each column of VS gadgets with the same superscript form the CH gadget, as was shown in~\cref{fig:CH_gadget}. Thus, the statement of the Lemma we prove becomes obvious. Indeed, as we see in~\cref{fig:multicolored-k-clique-reduction}, there exists an embedding for a non-planar instance, such that on one of the faces there are all vertices of~$\{x^j \mid j \in [m+1]\}\cup\{s, t\}$ and on one of the others there are all vertices of~$\{y^j \mid j \in [m+1]\}\cup\{s, t\}.$ But these two steps we described to obtain a planar instance $\calG$ do not significantly change those faces that we care about, see in~\cref{fig:multicolored-k-clique-reduction-planar}. Thus, the embedding with the necessary property still exist for the planar instance $\calG$ as well.
\end{proof}

\begin{figure}[t]
    \centering
    \includegraphics[page=2,width=\textwidth]{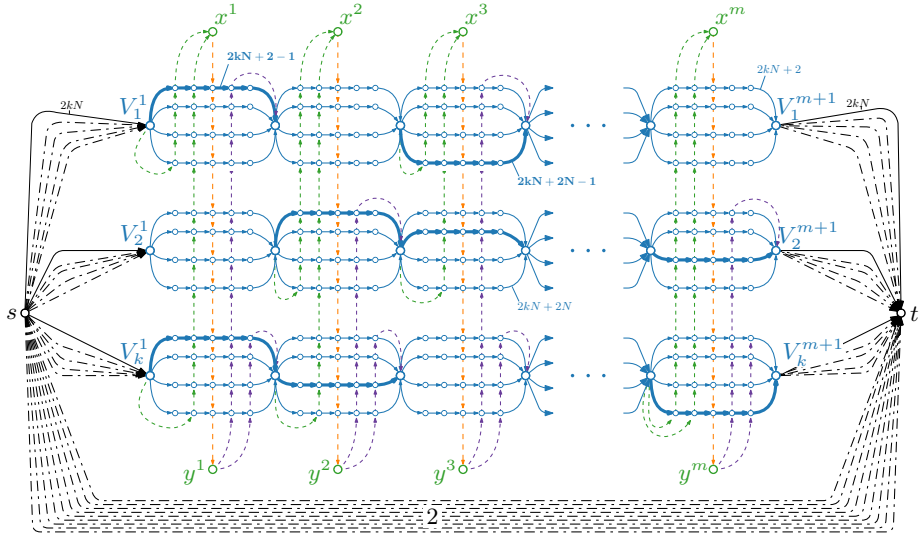}
    \caption{Illustration for the FPT-reduction from \mclique with~$k=3$, $N=4$ to planar \AoNF on a graph of pathwidth~$\Oh(k)$. Dashed edges have capacity~1, dash-dotted edges ($k(N-1)$ of them on the bottom) have capacity~2.}
    \label{fig:multicolored-k-clique-reduction-planar}
\end{figure}

Since the preceding lemmata guarantee that the constructed instance of \AoNF is equivalent to the \mclique instance we started from, its pathwidth is~$\Oh(k)$, and it is planar, this leads to a proof of \cref{lemma:aonf}. Observe that the edge capacities created during the construction are all polynomial in~$k$ and~$N$, both of which are bounded by~$|V(G)|$.

\LemAonf*\label{lemma:aonf*}

\section{From \textsc{All-or-Nothing-Flow} to \textsc{Circulating Orientation}}
To facilitate the next step in our chain of reductions, we make the following observation about the instances constructed by the reduction of \cref{lemma:aonf}. Recall that  \cref{lemma:aonf} transforms an instance~$(G,V_1, \ldots, V_k, k)$ of \mclique into an instance~$(\calG, \calF)$ of \AoNF with source~$s$, sink~$t$, and capacity function~$c$.

\begin{observation} \label{obs:aonf:properties}
    Partition the vertices~$v \in V(\calG) \setminus \{s,t\}$ into the following three sets, based on the difference between the overall capacity of outgoing versus incoming arcs.
        \begin{enumerate}
            \item $V_< := \{v \in V(\calG) \setminus \{s,t\} \mid d^+_\calG(v) < d^-_\calG(v)\}$.
            \item $V_= := \{v \in V(\calG) \setminus \{s,t\} \mid d^+_\calG(v) = d^-_\calG(v)\}$.
            \item $V_> := \{v \in V(\calG) \setminus \{s,t\} \mid d^+_\calG(v) > d^-_\calG(v)\}$.
        \end{enumerate}
        Then the following holds:
        \begin{enumerate}
            \item $V_< = \{x^j \mid j \in [m+1]\} \cup \{V^{m+1}_i \mid i \in [k]\}$.
            \item $V_> = \{y^j \mid i \in [m+1]\} \cup \{V^1_i \mid i \in [k]\}$.
        \end{enumerate}
        Also, $E^-_\calG(s) = E^+_\calG(t) = \emptyset$ and~$d^+_\calG(s) = d^-_\calG(t) = k(2kN+2N) + 2k(N-1)$.
\end{observation}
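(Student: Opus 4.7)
The plan is to verify the claims by a direct case analysis on the vertices of $\calG$, using only the description of the gadgets and the final wiring around $s$ and $t$. I would organize the argument by vertex type, showing that contributions to $d^+$ and $d^-$ either match or mismatch in the claimed way.

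First I would handle $s$ and $t$. By construction, every arc incident to $s$ is outgoing and every arc incident to $t$ is incoming, so $E^-_\calG(s)=E^+_\calG(t)=\emptyset$. Summing capacities, $s$ has, for each $i\in[k]$, one arc of capacity $2kN$ and $N$ arcs of capacity $2$ to $V_i^1$, plus $k(N-1)$ arcs of capacity $2$ to $t$, giving $d^+_\calG(s)=k(2kN+2N)+2k(N-1)$; the computation for $d^-_\calG(t)$ is symmetric.

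Next I would classify the remaining vertices. For any internal VS-vertex (i.e.\ a vertex of the form $v_{i,q}^j,u_{i,q}^j,w_{i,q}^j,g_{i,q}^j,h_{i,q}^j$) the two main-path arcs it lies on have equal capacity, and whenever such a vertex also belongs to a capacity-$1$ path of some CH gadget, it is an internal vertex of that path, so the capacity-$1$ arcs cancel as well; hence it lies in $V_=$. For the auxiliary vertices, $x^j$ receives the two capacity-$1$ arcs ending the $v$- and $u$-paths and sends a single capacity-$1$ arc into the $w$-path, so $d^-_\calG(x^j)=2>1=d^+_\calG(x^j)$, giving $x^j\in V_<$; by the same argument $y^j\in V_>$. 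For a middle row vertex $V_i^j$ with $1<j\le m$, the total capacity of arcs entering from $\VS_i^{j-1}$ and leaving into $\VS_i^j$ is identical (each row contains $N$ main paths of identical capacities); any $-1$ decrement the CH gadget at column $j-1$ applies to an incoming main-path arc is compensated by the matching capacity-$1$ arc arriving from the corresponding $g$- or $h$-path, and analogously for outgoing arcs in the CH gadget at column $j$, so $V_i^j\in V_=$. Finally, for $V_i^1$ the incoming capacity is $2kN+2N$ (only from $s$), while the outgoing capacity equals $\sum_{q\in[N]}(2kN+2q)=2kN^2+N(N+1)$ up to the same $\pm 1$ compensation; this is strictly larger, so $V_i^1\in V_>$. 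The symmetric computation gives $V_i^{m+1}\in V_<$.

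The only subtle step is the bookkeeping at the row vertices: one must line up every $-1$ decrement inside a VS gadget with the capacity-$1$ CH-path that shares that endpoint, to see that the imbalance at a middle $V_i^j$ really is zero, and that the imbalances at the extreme $V_i^1,V_i^{m+1}$ are unaffected. Once this pairing is made explicit, the remaining arithmetic is routine and yields exactly the partition claimed in the observation.
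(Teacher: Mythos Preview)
Your proposal is correct and is exactly the natural verification the paper has in mind: the paper states this as an observation without proof, relying on the reader to carry out precisely the vertex-by-vertex capacity check you describe. Your treatment of the middle row vertices $V_i^j$ --- pairing each $-1$ decrement on a VS arc with the capacity-$1$ CH-path terminating at that same $V_i^j$ --- is the only point requiring care, and you handle it correctly.
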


We use \cref{obs:aonf:properties} in the proof of the next lemma.

\begin{lemma} \label{lemma:co-intermediate}
There is a polynomial-time algorithm that, given an instance of \mclique with parameter~$k$, outputs an equivalent instance of \COr on a planar graph of pathwidth~$\Oh(k)$ whose edge-capacities are bounded by a polynomial in~$|V(G)|$.
\end{lemma}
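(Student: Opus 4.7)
The plan is to chain the reduction of \cref{lemma:aonf} with a transformation from \AoNF on the constructed graph~$\calG$ to \COr. Let $(\calG, \calF)$ with source~$s$, sink~$t$, and capacity function~$c$ be the \AoNF instance produced by \cref{lemma:aonf}. As explained in \cref{sec:overview}, if every internal vertex of a flow network satisfies $d^+_\calG(v) = d^-_\calG(v)$, the source has no incoming arcs, the sink has no outgoing arcs, and $\calF = d^+_\calG(s)/2 = d^-_\calG(t)/2$, then dropping orientations yields an equivalent \COr instance. The goal is therefore to modify $\calG$ (preserving planarity and pathwidth) so these properties hold.

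First, I would use the planar embedding guaranteed by \cref{lem:aonf:graph:is:planar} to insert a super-source~$S$ into the face~$f_1$ containing every $x^j$ together with~$s,t$, and a super-sink~$T$ into the face~$f_2$ containing every $y^j$ together with~$s,t$. By \cref{obs:aonf:properties}, the ``in-heavy'' vertices of $V_<$ are exactly the $x^j$'s together with the $V^{m+1}_i$'s, and the ``out-heavy'' vertices of $V_>$ are exactly the $y^j$'s together with the $V^1_i$'s. For each $x^j$ I would add an arc $Sx^j$ of capacity $d^-_\calG(x^j) - d^+_\calG(x^j)$, and symmetrically for each $y^j$ an arc $y^j T$; placement of $S, T$ inside $f_1, f_2$ keeps the drawing planar. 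The remaining imbalanced vertices $V^{m+1}_i$ and $V^1_i$ are already incident to $t$ and $s$ respectively in $\calG$, so their imbalance can be absorbed directly through the existing arcs to~$t$/from~$s$ without adding any new edges.

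The next step is to turn $s,t,S,T$ into balanced internal vertices of a \COr instance. In a non-planar setting one would add a single edge $\{S,T\}$ with capacity equal to $d^+(S) = d^-(T)$ plus a capacity-$\calF$ edge between $s$ and $t$. Since both of these would break planarity, I would instead add a single four-cycle $(s, S, t, T)$ whose edges are drawn inside the two big faces without crossings, and set the four capacities so that: (i) in any circulating orientation the total out-capacity at each of $s, S, t, T$ equals half of its degree, (ii) the effect of the missing $\{S,T\}$-edge is simulated by the two paths $S\text{-}s\text{-}T$ and $S\text{-}t\text{-}T$ of the 4-cycle, and (iii) the gap $\calF - d^+_\calG(s)/2$ caused by the fact that the target flow value differs from half of the source out-capacity is absorbed. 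Choosing the four capacities is a small linear system: two equations enforce the balance at~$S$ and~$T$, the other two enforce the balance at~$s$ and~$t$ after accounting for the $V^1_i$/$V^{m+1}_i$ contributions and for~$\calF$.

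With the construction in hand I would verify equivalence in both directions, essentially by the general principle of \cref{sec:overview}: an all-or-nothing flow of value~$\calF$ in $\calG$ yields a circulating orientation of~$H$ by keeping arcs that carry full flow in their original direction, reversing arcs that carry no flow, and orienting the edges incident to $S, T$ and the 4-cycle using flow conservation; conversely, a circulating orientation projects back to an all-or-nothing flow of the prescribed value. Planarity is preserved by construction, edge capacities remain polynomially bounded since all newly introduced capacities are differences or sums of quantities that are already polynomial in $|V(G)|$, and pathwidth stays in $\Oh(k)$ by \cref{lemma:aonf:pathwidth} combined with adding $S$ and $T$ to every bag (a $+2$ overhead). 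The main obstacle is step three: picking the four capacities of the 4-cycle so that the simulation of $\{S,T\}$ is exact in the all-or-nothing regime; getting this linear system right, and checking that the unique orientation it forces on the 4-cycle is consistent with every scenario at $V^1_i$ and $V^{m+1}_i$, is the delicate part alluded to in the overview.
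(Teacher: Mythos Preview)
Your high-level plan matches the paper's proof almost exactly: start from the \AoNF instance of \cref{lemma:aonf}, insert two new vertices into the faces of \cref{lem:aonf:graph:is:planar} to balance the $x^j$'s and $y^j$'s, simulate the missing super-source/super-sink edge by a 4-cycle on $\{s,S,t,T\}$, and verify equivalence via the standard ``keep if saturated, reverse if empty'' map. The pathwidth and planarity arguments are also the right ones.

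There is, however, one concrete gap. You write that the imbalanced row-endpoints $V^1_i$ and $V^{m+1}_i$ ``can be absorbed directly through the existing arcs to~$t$/from~$s$ without adding any new edges.'' This is not the case, and here the overview you are following is slightly misleading: ``via the standard source and sink'' in the paper means \emph{by adding new parallel edges to $s$ and $t$}, not that the already-present arcs suffice. Under the orientation rule you describe (keep saturated arcs, reverse empty ones), every internal vertex $v$ ends up with out-capacity $d^-_\calG(v)$ and in-capacity $d^+_\calG(v)$ on the standard edges, independently of the flow. For $V^1_i$ these values are $2kN+2N$ versus $\sum_{q\in[N]}(2kN+2q)$, which differ by $\alpha:=\sum_{q\in[N-1]}(2kN+2q)$; the 4-cycle touches only $s,S,t,T$ and cannot fix this. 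The paper therefore adds, for each $i\in[k]$, a new edge $\{V^1_i,s\}$ of capacity~$\alpha$ (and symmetrically $\{t,V^{m+1}_i\}$), parallel to the existing arcs and hence still planar. These $\alpha$-edges are also needed for the converse: the argument that any circulating orientation encodes a valid flow uses that $\alpha$ is quadratic in~$N$, which forces all $\{V^1_i,s\}$ edges to be oriented into~$s$ once the 4-cycle direction is fixed. Without this leverage the backward direction does not go through.

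Once you add these $2k$ extra edges (and fold their total capacity $k\alpha$ into the 4-cycle system), your plan coincides with the paper's proof; the one-parameter freedom you noticed in the 4-cycle system is exactly the large constant~$\xi$ the paper uses to pin down the cycle's orientation in the reverse direction.
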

\begin{proof}
Consider an instance~$(G,V_1, \ldots, V_k, k)$ of \mclique, where each set~$V_i$ has size~$N$. We assume without loss of generality that~$N \geq 10k$, which can be achieved by inserting isolated vertices if needed. By applying the construction of \cref{lemma:aonf}, we transform the instance into an equivalent instance~$(\calG, \calF)$ of \AoNF with source~$s$, sink~$t$, and capacity function~$c$. To transform the latter into an equivalent instance~$(H,c_H)$ of \COr, we proceed as follows.
\begin{itemize}
    \item Initialize~$H$ as the edge-capacitated undirected multigraph obtained from~$\calG$ by simply forgetting the orientation of all edges, whose capacity function we denote by~$c_H$. 
    
    All edges of~$H$ originating from this step are called \emph{standard edges}. The remaining edges of~$H$, to be introduced below, are called \emph{special edges}.
    \item For each~$i \in [k]$, add an edge~$\{V^1_i, s\}$ of capacity~$\alpha := \sum _{q \in [N-1]} 2kN + 2q$. Note that~$d^-_\calG(V^1_i) + \alpha = d^+_\calG(V^1_i)$.
    \item For each~$i \in [k]$, add an edge~$\{t, V^{m+1}_i\}$ of capacity~$\alpha$. Note that~$d^-_\calG(V^{m+1}_i) + \alpha = d^+_\calG(V^{m+1}_i)$.
    \item Add a super-source~$S$, along with an edge~$\{y^j, S\}$ of capacity~$1$ for each~$j \in [m+1]$.
    \item Add a super-sink~$T$, along with an edge~$\{T, x^j\}$ of capacity~$1$ for each~$j \in [m+1]$.
    \item Let~$\beta := 2\calF - d^+_\calG(s)$. Note that this value is non-negative by the construction of~$\calG$; see \cref{obs:aonf:properties}. This guarantees that~$d^+_{\calG}(s) - \calF + \beta = \calF$, which we will use below.
    \item Let~$\xi$ be a sufficiently large value, such as~$\xi := 100 k^2 N^2$. Add the following edges, forming a 4-cycle in the graph~$H$:
    \begin{itemize}
        \item An edge~$\{S,s\}$ of capacity~$\xi + \beta + m$.
        \item An edge~$\{s, T\}$ of capacity~$\xi + k \cdot \alpha + m$.
        \item An edge~$\{T,t\}$ of capacity~$\xi + k \cdot \alpha$.
        \item An edge~$\{t,S\}$ of capacity~$\xi + \beta$.
    \end{itemize}
\end{itemize}
This concludes the description of~$(H,c_H)$. It is not difficult to see that~$H$ is planar: we obtain it as a copy of a planar graph in Step 1 and then add edges which are parallel to existing edges in the following two steps. The remaining steps consist of inserting two vertices~$S$ and~$T$ and edges incident on them. By \cref{lem:aonf:graph:is:planar}, there are two distinct faces in the embedding of~$H$ into which we can draw~$S$ and~$T$ that contain all their prospective neighbors, so that the edges can be drawn without crossings.

It is also not difficult to see that the pathwidth of~$H$ is still~$\Oh(k)$: all edges we create are incident to one of the vertices~$\{s,t,S,T\}$, so by simply inserting these four vertices into all bags of a path decomposition of~$\calG$, we obtain a path decomposition of~$H$ whose pathwidth is larger than the pathwidth~$\Oh(k)$ of~$\calG$ (\cref{lemma:aonf:pathwidth}) by at most four. It remains to prove that the resulting instance of \COr is equivalent to the \AoNF-instance on~$\calG$.

\begin{numclaim} \label{claim:aonf:to:co}
If~$\calG$ has an all-or-nothing flow~$f$ of value~$\calF$, then~$(H, c_H)$ has a circulating orientation.
\end{numclaim}
\begin{proof}
Suppose that there is an all-or-nothing flow~$f$ of value~$\calF$ in the flow network~$\calG$. We use it to obtain a circulating orientation of~$(H,c_H)$, as follows:
\begin{itemize}
    \item For the edges~$e \in E(H)$ which were copied from~$\calG$ in the first step, we orient them as follows: if~$f(e) = c(e)$ then we orient~$e$ in the same way as it appears in~$\calG$; otherwise ($f(e) = 0$) we reverse its orientation.
    \item We orient the edges~$\{V^1_i, s\}$ for~$i \in [k]$ from~$V^1_i$ to~$s$.
    \item We orient the edges~$\{t, V^{m+1}_i\}$ for~$i \in [k]$ from~$t$ to~$V^1_i$.
    \item We orient the edges~$\{y^j, S\}$ for~$j \in [m+1]$ from~$y^j$ to~$S$.
    \item We orient the edges~$\{T, x^j\}$ for~$j \in [m+1]$ from~$T$ to~$x^j$.
    \item We orient the edges of the final 4-cycle as~$(S,s), (s,T), (T,t), (t,S)$.
\end{itemize}
Let~$\vv{H}$ denote the resulting directed graph~$H$. To argue that we have obtained a circulating orientation, we introduce some additional notation. For a vertex~$v \in V(H) \setminus \{S,T\}$, which also exists in~$\calG$, we use~$\hat{d}^+_{\vv{H}}(v)$ ($\hat{d}^-_{\vv{H}}(v)$) to denote the total capacity of the standard edges incident on~$v$ which are oriented out of~$v$ (into~$v$) in~$\vv{H}$. Hence the special edges do not contribute to these terms.

We argue that for~$v \in V(H) \setminus \{S,T, s, t\}$ we have~$\hat{d}^+_{\vv{H}}(v) = d^-_\calG(v)$ and $\hat{d}^-_{\vv{H}}(v) = d^+_\calG(v)$, that is, the contribution of standard edges is independent of the flow~$f$ and is based on the capacity of the edges oriented oppositely in~$\calG$. To see this, note that if~$f$ does not send any flow through~$v$ then all arcs of~$E^+_\calG(v)$ are reversed in~$\vv{H}$, and similarly all arcs of~$E^-_\calG(v)$ are reversed in~$\vv{H}$. For each unit of flow sent into~$v$, there is a unit of capacity on an incoming arc that retains its orientation in~$\vv{H}$, and by flow conservation, a unit of capacity on an outgoing arc that retains its orientation in~$\vv{H}$. Since each edge that carries flow is utilized to full capacity, the choices made by~$f$ do not influence the contribution of standard edges to inwards and outwards oriented capacity of~$v$. 

Based on this insight, we now prove that~$\vv{H}$ is a circulating orientation:~$d^+_{\vv{H}}(v) = d^-_{\vv{H}}(v)$ for each~$v \in V(H)$. For this argument, we distinguish several types of vertices, making use of the partition of~$V(\calG) \setminus \{s,t\}$ into sets~$V_<, V_=, V_>$ given in \cref{obs:aonf:properties}.
\begin{enumerate}
    \item Consider a vertex~$v \in V(H) \cap V_=$, i.e. a vertex of~$H$ that was already present in the flow network~$\calG$, in which the capacity of its incoming arcs equalled that of its outgoing arcs. In the construction of~$H$ we did not add any edge incident on~$v$, so that~$d^+_{\vv{H}}(v) = \hat{d}^+_{\vv{H}}(v)$ and~$d^-_{\vv{H}}(v) = \hat{d}^-_{\vv{H}}(v)$; all its incident edges are standard edges. By the argument above, we have 
    \[d^-_{\vv{H}}(v) = \hat{d}^-_{\vv{H}}(v) = d^+_{\calG}(v) = d^-_{\calG}(v) = \hat{d}^+_{\vv{H}}(v) = d^+_{\vv{H}}(v),\]
    where the middle equality comes from the definition of the set~$V_=$.
    \item Consider a vertex~$x^j$ for~$j \in [m+1]$. By the argument above we have~$\hat{d}^+_{\vv{H}}(x^j) = d^-_\calG(x^j) = 2$, while $\hat{d}^-_{\vv{H}}(x^j) = d^+_\calG(x^j) = 1$; here we use the construction of the instance~$\calG$. Since~$x^j$ is incident on one special edge of capacity~$1$, which is oriented from~$T$ to~$x^j$, this gives~$d^+_{\vv{H}}(x^j) = d^-_{\vv{H}}(x^j) = 2$. The argument for~$y^j$ is symmetric.
    \item Consider a vertex~$V^1_i$ for~$i \in [k]$. By the argument above we have $\hat{d}^+_{\vv{H}}(V^1_i) = d^-_\calG(V^1_i)$, which equals~$2kN+2N$ by construction of~$\calG$. We have $\hat{d}^-_{\vv{H}}(V^1_i) = d^+_\calG(V^1_i)$, which equals~$\sum_{q \in [N]} 2kN+2q$. The unique special edge~$\{V^1_i, s\}$ incident on~$V^1_i$ has capacity~$\alpha = \sum _{q \in [N-1]} 2kN + 2q$ and is oriented towards~$s$. It therefore compensates for the imbalance and implies~$d^+_{\vv{H}}(V^1_i) = d^-_{\vv{H}}(V^1_i)$. The argument for $V^{m+1}_i$ is symmetric.
    \item Consider the super-source~$S$. All its incident edges are oriented inwards, except for the edge~$\{s,S\}$ which has capacity~$\xi + \beta + m$. Since the total capacity of the remaining edges is~$\xi + \beta$ (for the edge~$\{t, S\}$) plus~$1 \cdot m$ (for the~$m$ edges~$\{y^j, S\}$ of capacity~$1$), vertex~$S$ satisfies the requirements.
    \item The argument for~$T$ is similar: it has a single edge~$(s,T)$ of capacity~$\xi + k\cdot \alpha + m$ oriented inwards, while the total capacity of the remaining outwards-oriented edges is~$\xi + k \cdot \alpha$ (for~$\{T,t\}$) plus~$1 \cdot m$ (for the $m$ edges~$\{T, x^j\}$).
    \item Consider the normal source~$s$ of the flow network~$\calG$. From the standard edges, the fact that the flow has value~$\calF$ while the source has no outgoing arcs in~$\calG$ ensures that there is~$\calF$ capacity on the outwards oriented edges, while the remaining~$d^+_{\calG}(s) - \calF$ capacity is oriented inwards. 
    
    From the special edges incident on~$s$, we have~$\xi + \beta + m$ capacity oriented into~$s$ on the edge~$\{S,s\}$ plus~$k \cdot \alpha$ capacity on the edges~$\{V^1_i, s\}$ for~$i \in [k]$. The only special edge oriented out of~$s$ is~$\{s,T\}$ of capacity~$\xi + k \cdot \alpha + m$. We therefore find:
    \begin{align*}
        d^-_{\vv{H}}(s) &= (d^+_{\calG}(s) - \calF) + (\xi + \beta + m) + (k \cdot \alpha) & \mbox{Standard plus special edges.} \\
        &= \calF + (\xi + k \cdot \alpha + m) & \mbox{As $d^+_{\calG}(s) - \calF + \beta = \calF$.} \\
        &= d^+_{\vv{H}}(s). & \mbox{Standard plus special edges.}
    \end{align*}
    
    \item Finally, consider the normal sink~$t$ of~$\calG$. From the standard edges, there is~$\calF$ capacity on edges into~$t$ and~$d^-_\calG(t) - \calF = d^+_\calG(s) - \calF$ capacity on edges out of~$t$. From the special edges, there is~$\xi + k \cdot \alpha$ capacity on the edge~$(T,t)$ entering~$t$, while the capacity of special edges leaving~$t$ is~$k \cdot \alpha$ (for~$V^{m+1}_i$ for each~$i \in [k]$) plus~$\xi + \beta$ (for~$\{T,t\}$), for a total of~$k \cdot \alpha + \xi + \beta$. We therefore find:
    \begin{align*}
        d^-_{\vv{H}}(t) &= \calF + \xi + k \cdot \alpha & \mbox{Standard plus special edges.} \\
        &= d^+_\calG(s) - \calF + k \cdot \alpha + \xi + \beta & \mbox{As $d^+_{\calG}(s) - \calF + \beta = \calF$.} \\
        &= d^+_{\vv{H}}(t). & \mbox{Standard plus special edges.}
    \end{align*}
\end{enumerate}
\cref{obs:aonf:properties} shows that the treated cases indeed cover all vertices of~$H$. As each type of vertex has the same capacity oriented inwards and outwards, the given orientation~$\vv{H}$ is indeed a circular orientation. This proves the claim.
\end{proof}

We now prove the converse.

\begin{numclaim} \label{claim:co:to:aonf}
If~$(H, c_H)$ has a circulating orientation, then~$\calG$ has an all-or-nothing flow~$f$ of value~$\calF$.
\end{numclaim}
\begin{proof}
Consider a circulating orientation~$\vv{H}$ of~$H$. We argue that for each of the four vertices~$\{s, S, t, T\}$, exactly one edge of the 4-cycle on these vertices is oriented inwards and one edge is oriented outwards. This follows from the fact that all edges on this 4-cycle have capacity at least~$\xi$, and that~$\xi$ is larger than the sum of the capacities of all other edges incident on a common vertex. Hence if the two edges of capacity at least~$\xi$ are both oriented into a vertex, or both oriented out of a vertex, then the capacity of the remaining edges is not sufficient to obtain equal values for the inwards- versus outwards-oriented capacity. 

The preceding argument shows that the 4-cycle is either oriented as~$(S,s), (s,T), (T,t), (t,S)$ or~$(S,s), (s,T), (T,t), (t,S)$ $(S,t), (t, T), (T, s), (s, S)$. In the latter case, we reverse the orientation of all edges (which preserves a circulating orientation). Hence we may assume from now on that in~$\vv{H}$, the 4-cycle is oriented as~$(S,s), (s,T), (T,t), (t,S)$.

Based on this orientation, we define a flow in~$\calG$ as follows. For each arc~$e \in E(\calG)$, if~$\vv{H}$ orients the corresponding edge of~$H$ in the same direction as it appears in~$\calG$, then we define~$f(e) = c(e)$; otherwise we define~$f(e) = 0$. It remains to show that~$f$ has the flow conservation property and defines a flow of value~$\calF$. Before we do so, we establish some consequences of this orientation of the 4-cycle.

\begin{enumerate}
    \item Each special edge~$\{T, x^j\}$ for~$i \in [k]$ is oriented away from~$T$ in~$\vv{H}$. To see this, observe that the capacity~$\xi + k \cdot \alpha + m$ of the edge~$(s,T)$ on the 4-cycle which is oriented into~$T$, equals the capacity of all other edges incident on~$T$ combined. Hence, to achieve a circulating orientation the remaining edges incident to~$T$ (including all edges~$\{T, x^j\}$) are oriented out of~$T$.\label{co:xj}
    \item Each special edge~$\{y^j, S\}$ for~$i \in [k]$ is oriented into~$S$ in~$\vv{H}$. The argument is similar as in the previous case: the capacity~$\xi + \beta + m$ on the edge~$(S,s)$ of the 4-cycle which is oriented out of~$S$ by~$\vv{H}$ is equal to the total capacity of the remaining edges incident to~$S$ (including~$\{y^j, S\})$, which are therefore oriented into~$S$.\label{co:yj}
    \item Each edge~$\{V^1_i, s\}$ for~$i \in [k]$ is oriented into~$s$. Here the argument is slightly more delicate, but the idea is similar. Among the edges of the 4-cycle incident to~$s$, the edge~$\{S,s\}$ of capacity~$\xi + \beta + m$ is oriented into~$s$ while the edge~$\{s,T\}$ of capacity~$\xi + k\cdot \alpha + m$ is oriented out of~$s$. It turns out that the value of~$\alpha$ is so large compared to the capacities of the remaining edges incident to~$s$ that, in order to get a circulating orientation, all~$k$ edges of capacity~$\alpha$ must be oriented into~$s$. Namely, suppose for a contradiction that at least one of the~$k$ edges of capacity~$\alpha$ is oriented away from~$s$. Then the capacity of edges oriented away from~$s$ is at least~$(\xi + k \cdot \alpha + m) + \alpha$, while the capacity of the remaining edges incident on~$s$ is at most~$\xi + \beta + m + k(2kN + 2N) + (k-1)\alpha + k(2(N-1))$, which can be verified to be strictly smaller since~$\alpha$ is quadratic in~$N$. (Recall the assumption that $N \geq 10k$.)\label{co:vone}
    \item A symmetric argument, again based on the fact that~$\alpha$ is quadratic in~$N$, ensures that each edge~$\{t, V^{m+1}_i\}$ is oriented out of~$t$.\label{co:vm}
\end{enumerate}


\paragraph{A sufficient condition for flow conservation.} We make the following general claim, which will help us establish flow conservation. If~$v \in V(\calG) \setminus \{s,t\}$ satisfies one of the following:
\begin{enumerate}
    \item $v$ is incident on a unique special edge~$e^*$, that special edge is oriented out of~$v$ in~$\vv{H}$, and~$c_H(e^*) + d^-_{\calG}(v) = d^+_{\calG}(v)$,\label[condition]{co:special:out}
    \item $v$ is incident on a unique special edge~$e^*$, that special edge is oriented into~$v$ in~$\vv{H}$ and~$c_H(e^*) + d^+_{\calG}(v) = d^-_{\calG}(v)$, or\label[condition]{co:special:in}
    \item $v$ is not incident on any special edge and~$d^+_{\calG}(v) = d^-{\calG}(v)$,\label[condition]{co:nospecial}
\end{enumerate}
then~$f$ satisfies the flow conservation property for vertex~$v$. 

We prove this implication as follows. Suppose the first condition holds for~$v$. Recall that the total capacity of edges incident on~$v$ is~$d_H(v)$, which equals~$c_H(e^*) + d^-_{\calG}(v) + d^+_{\calG}(v)$ since~$e^*$ is the unique special edge incident on~$v$. By the assumption above, we have~$d_H(v)/2 = d^+_{\calG}(v)$. Consider the subset of~$E^-_{\calG}(v)$ whose corresponding edges are oriented into~$v$ in~$\vv{H}$ and let their total capacity be~$D$. This means that~$f$ sends flow over these edges into~$v$ and therefore the flow into~$v$ under~$f$ equals~$D$. Since~$\vv{H}$ is a circular orientation, half of the capacity incident on~$v$ is oriented into~$v$. Since no special edge is oriented into~$v$, while the edges corresponding to~$E^-_{\calG}(v)$ contribute~$D$ to the capacity oriented into~$v$, from the remaining edges (corresponding to~$E^+_{\calG}(v)$) there is a total of~$d_H(v)/2 - D = d^+_{\calG}(v) - D$ capacity oriented into~$v$. It follows that the rest of the edges of~$E^+_{\calG}(v)$ are oriented out of~$v$, which means~$d^+_{\calG}(v) - (d^+_{\calG}(v) - D) = D$ capacity of the edges of~$E^+_{\calG}(v)$ are oriented out of~$v$ in~$\vv{H}$ and therefore have the same orientation in~$\vv{H}$ as in~$\calG$. Consequently,~$f$ sends~$D$ flow out of~$v$ over these edges, which implies that flow conservation holds for~$v$. This argument establishes the first condition. The proof of the second condition is symmetric. The proof for the third condition follows by the same reasoning, for example by considering a single hypothetical capacity-0 special edge incident on~$v$ which does not affect any argument.

\paragraph{Flow conservation.} Using the implication derived above to establish that~$f$ has the flow conservation property for all~$v \in V(\calG) \setminus \{s,t\}$, it suffices to establish that one of the three cases above holds for~$v$. We distinguish several types of vertices of~$V(\calG) \setminus \{s,t\}$, based on the partition given in \cref{obs:aonf:properties}.

\begin{itemize}
    \item For~$v \in V_=$, by definition we have~$d^+_{\calG}(v) = d^-_{\calG}(v)$, so that \cref{co:nospecial} applies to guarantee flow conservation for~$v$.
    \item For each vertex~$x^j$ for~$j \in [m]$, by construction of~$H$ there is a unique special edge~$\{T, x^j\}$ incident on~$x^j$ which has capacity~$1$ and is oriented into~$x^j$. As~$d^+_{\calG}(x^j) = 1$ and~$x^-_{\calG}(x^j) = 2$, \cref{co:special:in} guarantees flow conservation for~$x^j$. The argument for~$y^j$ is symmetric.
    \item For each vertex~$V^1_i$ for~$i \in [k]$, by construction of~$H$ there is a unique special edge~$\{V^1_i, s\}$ incident on~$V^1_i$ which has capacity~$\alpha$ and is oriented out of~$V^1_i$. As~$d^+_{\calG}(V^1_i) = d^-_{\calG}(v^1_i) + \alpha$ by choice of~$\alpha$, \cref{co:special:out} guarantees flow conservation for~$V^1_i$. The argument for~$V^{m+1}_i$ is symmetric.
\end{itemize}
As \cref{obs:aonf:properties} shows that this covers all vertices of~$V(\calG) \setminus \{s,t\}$, we conclude that~$f$ has the flow conservation property.

\paragraph{Value of the flow.} As the final step of the argument, we prove that the value of flow~$f$ is~$\calF$, that is, the capacity of edges leaving~$s$ over which~$f$ sends (the full capacity) of flow equals~$\calF$ and the remainder of the edges out of~$s$ carries no flow. To do so, we analyze the orientation of the edges incident on~$s$ in~$\vv{H}$. Out of the special edges incident on~$s$, we know that:
\begin{itemize}
    \item a capacity of $\xi + \beta + m + k \cdot \alpha$ is oriented into $s$ (by the edge $\{s,S\}$ and the~$k$ edges~$\{V^1_i, s\}$), and
    \item a capacity of $\xi + k \cdot \alpha + m$ is oriented out of $s$ (by the edge $\{s,T\}$).
\end{itemize}
Since~$\vv{H}$ is a circulating orientation, the standard edges therefore contribute $\beta$ more out-capacity than in-capacity of~$s$ in~$\vv{H}$. The total capacity of the standard edges incident on $s$ is $d^+_{\calG}(s)$. Let $D$ be the capacity of the standard edges oriented outwards, implying that the remaining $d^+_{\calG}(s) - D$ of standard capacity is oriented inwards. Note that the flow out of~$s$ under~$f$ is exactly~$D$. Since the standard out-capacity exceeds the standard in-capacity by~$\beta$, we have that~$D = \beta + (d^+_{\calG}(s) - D)$, which means that~$2D = \beta + d^+_{\calG}(s)$. Since $\beta = 2\calF - d^+_{\calG}(s)$, this implies~$2D = 2\calF - d^+_{\calG}(s) + d^+_{\calG}(s)$, so that~$D = \calF$ and the flow out of the source~$s$ indeed equals~$\calF$ in~$f$.
\end{proof}

Since the preceding two claims show that the constructed instance is equivalent to the input instance, while we already argued for its planarity and bounded pathwidth earlier, this completes the proof of \cref{lemma:co-intermediate}.
\end{proof}

The proof of the following lemma now easily follows by some post-processing steps that can be done in a black-box fashion. To obtain an instance with the stated properties, we have to turn the planar multigraph resulting from \cref{lemma:co-intermediate} into a simple graph, make it triconnected, and triangulate it.

\LemCo*\label{lemma:co*}
\begin{proof}
Consider an instance~$(G,V_1, \ldots, V_k, k)$ of \mclique. By the construction of \cref{lemma:co-intermediate} we transform such instance into an equivalent instance~$(H, c_H)$ of \COr on a planar multigraph of pathwidth~$\Oh(k)$. 
We then subdivide all edges of~$H$: For each edge~$e = \{u,v\} \in E(H)$ we introduce a new vertex~$w_e$ and replace~$e$ by a pair of edges~$\{u, w_e\}, \{w_e, v\}$, both of capacity~$c_H(e)$. It is easy to see that this subdivision does not affect the existence of a circulating orientation, and it is well-known that subdividing all edges increases the pathwidth by at most one; this also follows from \cref{lemma:pw_gadgets}. Let~$(H', c_{H'})$ be the resulting edge-capacitated simple planar graph. 

We apply \cref{theorem:pw_triangulation} to the simple graph~$H'$ to obtain a simple plane triangulation~$H''$ such that~$\pw(H'') \in \Oh(k)$. We define a capacity function~$c_{H''}$ for~$E(H'')$ as follows: we set~$c_{H''}(e) = c_{H''}(e)$ for all~$e \in E(H') \cap E(H'')$, while we set~$c_{H''}(e) = 0$ for~$e \notin E(H')$. Hence, all newly created edges obtain a capacity of 0 so that their presence does not affect whether an orientation of the remaining edges is circulating or not. The instance~$(H'', c_{H''})$ therefore satisfies all requirements and is equivalent to the original \mclique instance. As all steps of the process can be done in polynomial time, the lemma follows.
\end{proof}

\section{From \textsc{Circulating Orientation} to \textsc{Upward Planarity Testing}}

Here we provide the formal proof of the reduction from \textsc{Circulating Orientation} to \textsc{Upward Planarity Testing}, encapsulated in the following lemma.


\begin{lemma} \label{lemma:upwards}
There is a polynomial-time reduction from planar triangulated triconnected instances of \textsc{Circulating Orientation} of pathwidth~$k$ with capacities polynomial in graph size to  \textsc{Upward Planarity Testing} instances of pathwidth~$\Oh(k)$.
\end{lemma}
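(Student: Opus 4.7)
The plan is to follow the outline sketched in \cref{sec:overview} and formalize the reduction in four steps: construction, forward correctness, backward correctness, and pathwidth bound. First, given an input instance $(P,c)$ of \COr where $P$ is planar, triconnected, and triangulated, I would fix a planar embedding of $P$ and construct the planar dual $D$, weighting each edge $e^* \in E(D)$ by $w(e^*) := c(e)$ where $e$ is the primal edge. By \cref{theorem:pw_dual}, we have $\pw(D) \in \Oh(\pw(P)) = \Oh(k)$. I would then build the digraph $\vv{G}$ by replacing each edge $e^* \in E(D)$ with a copy of the tendril $T_{w(e^*)}$ provided by \cref{lemma:tendrils}, identifying the two poles of the tendril with the endpoints of $e^*$. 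Since each tendril has pathwidth $2$ and size $\Oh(w(e^*))$, which is polynomial in $|V(P)|$ by assumption, \cref{lemma:pw_gadgets} immediately yields $\pw(\vv{G}) \le \pw(D) + 3 \in \Oh(k)$, and the construction runs in polynomial time.

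For the forward direction, suppose $P$ admits a circulating orientation $\sigma$. Orient each dual edge $e^*$ so that, traversing $e^*$ in its chosen direction, the face lying to the left (respectively right) corresponds to a fixed convention matching $\sigma$ on the primal edge $e$. Since the tendril $T_{w(e^*)}$ admits a unique upward planar embedding whose two boundary walks contribute $+2w(e^*)$ and $-2w(e^*)$ to the two incident faces, we place each tendril according to the chosen orientation of $e^*$, assigning the $+$-boundary to the face indicated by $\sigma$. I would then verify condition \textbf{UP3} of \cref{th:upward-conditions} for every face $f$ of $\vv{G}$: the faces arising from tendril interiors are handled automatically by the tendril's uniqueness, and for each face $f$ originating from a face $\varphi$ of $D$ (equivalently, a vertex $v$ of $P$), the total contribution is $\sum_{e^* \in E(\varphi)} \pm 2 w(e^*)$, where the signs reflect whether the clockwise traversal of $\varphi$ agrees with the chosen orientation of $e^*$. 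Since $\sigma$ is circulating at $v$, the positive and negative parts are equal, so the total contribution is $0$, and a global choice of outer face can then be fixed to satisfy the $\pm 2$ deficit of \textbf{UP3}.

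For the backward direction, suppose $\vv{G}$ is upward planar and fix an upward planar embedding given by \cref{th:upward-conditions}. The key structural observation is that, because $P$ is triconnected and triangulated, $D$ is triconnected and of maximum degree three (actually irrelevant here), and after adding a virtual edge between the poles each tendril also becomes triconnected; hence the planar embedding of $\vv{G}$ is determined up to a flip of the outer face and a flip of each tendril. Each tendril-flip choice therefore defines an orientation of the corresponding dual edge $e^*$, which pulls back to an orientation of the primal edge. Applying condition \textbf{UP3} to the face of $\vv{G}$ originating from a vertex $v$ of $P$, the contributions of the boundary vertices of $D$ cancel out (they are non-switch), so only the tendril contributions remain. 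Their sum being zero (interior faces) or $\pm 2$ up to global reorientation for the outer face translates, via $\pm 2w(e^*)$, into the circulating condition at $v$.

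The main obstacle I expect is keeping the topological bookkeeping clean in the backward direction, specifically verifying that the flip of each tendril can be interpreted consistently as an orientation of the dual edge across the whole embedding (so that a single flip propagates correctly to both incident faces), and handling the outer face in a way compatible with the circulating condition. This will rely crucially on the uniqueness of the upward embedding of the tendril from \cref{lemma:tendrils} together with the rigidity of planar embeddings of triconnected planar graphs, so the argument is largely combinatorial case-checking rather than a new idea.
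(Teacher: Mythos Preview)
Your proposal captures the high-level architecture but is missing a crucial ingredient that the paper relies on: before inserting tendrils, the paper first fixes an \emph{acyclic $st$-planar orientation} $\vv{D}$ of the dual $D$ (using an $st$-numbering, with $s,t$ adjacent on the outer face). Tendrils are directed gadgets with a source pole and a sink pole, so when you ``replace each edge $e^*$ with a tendril'' you must decide which endpoint becomes the source; you never specify this, and an arbitrary choice can make $\vv{G}$ fail to be a DAG, let alone upward planar. More importantly, the $st$-planar orientation is what makes the base-angle bookkeeping work: in the paper, $\vv{D}$ has a unique upward planar embedding, and the base angles of $\vv{G}$ are forced to agree with it, so the base angles already contribute exactly $-2$ to each inner base face and $+2$ to the outer one; the tendrils then need to contribute net zero. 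Your assertion in the backward direction that ``the contributions of the boundary vertices of $D$ cancel out (they are non-switch)'' is unjustified without this setup, and your forward-direction remark that ``a global choice of outer face can then be fixed to satisfy the $\pm 2$ deficit'' is not how the deficit is achieved.

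There is a second, smaller gap: the paper premultiplies all capacities by $3$, so every tendril contribution is a multiple of $6$. This is what drives the parity argument in the backward direction (\cref{claim:lambda_fixed} in the paper): it forces $s$ and $t$ onto the outer face and forces their large angles to lie there, because no combination of tendril contributions (multiples of $6$) plus the remaining base-angle contribution ($0$ or $-2$) can hit the target $+2$ otherwise. Without this trick, you cannot rule out embeddings where the outer face is an inner base face of $\vv{D}$ or where the large angle of $s$ or $t$ sits on an inner face, and then the correspondence between tendril flips and a circulating orientation breaks down.
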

\begin{proof}
    Let $(P, c)$ be an instance of \textsc{Circulating Orientation}, where $P$ is a triconnected planar graph, and $c: E(P) \to \mathbb{Z}_{\ge 0}$ is the capacity function.
    W.l.o.g., we can also assume that every capacity is a multiple of $3$, as multiplying all capacities by the same factor is safe.

    Let $\Epsilon_P$ be a planar embedding of $P$, and let $D$ be the dual graph of $P$ with respect to $\Epsilon_P$; see also \cref{fig:upward-P,fig:upward-D} for an illustration. Let $\Epsilon_D$ be the respective planar embedding of $D$, and let $w : E(D) \to \mathbb{Z}_{\ge 0}$ be the weight function on the edges of $D$ defined so that $w(e') = c(e)$, where $e \in E(P)$, $e' \in E(D)$, and $e'$ is the dual edge of $e$.
    \begin{numclaim}
        The graph $D$ is a planar triconnected graph of maximum degree 3 and pathwidth $\Oh(k)$.
    \end{numclaim}
    \begin{proof}
        Since $D$ is the dual of the planar triconnected graph $P$, it is immediately planar and also triconnected. Since $P$ is triangulated, the maximum degree of its dual is 3. Pathwidth remains bounded by~\cref{theorem:pw_dual}. 
    \end{proof}

    First, we pick an orientation of $D$ so that the resulting graph is $st$-planar, see, for instance, ~\cref{fig:upward-VD}. This will be helpful to restrict possible embeddings of the target graph of the reduction.

    \begin{numclaim}\label{claim:D_unique_emb}
        There exists an orientation $\vv{D}$ of $D$ such that $\vv{D}$ is an $st$-planar graph that admits a unique upward planar embedding.
        Also, its underlying planar embedding is the unique planar embedding of $D$ where $s$ and $t$ lie on the outer face.
    \end{numclaim}
    \begin{proof}
        Pick arbitrarily an adjacent pair of vertices $s \ne t \in V(D)$. Since $D$ is triconnected, there exists an $st$-numbering of $V(D)$~\cite{LempelEC67}, and so an orientation $\vv{D}$ of $D$ such that $\vv{D}$ is acyclic, $s$ is the only source, and $t$ is the only sink. Moreover, since $D$ is triconnected, its planar embedding is fixed up to the choice of the outer face. Since $s$ is the only source and $t$ is the only sink, they both have to be part of the boundary of the outer face.
        By triconnectivity of $D$, there is only one face where both $s$ and $t$ lie on the boundary.
    \end{proof}

    We now construct the target digraph $\vv{G}$ of our reduction. Consider the digraph $\vv{D}$, and replace every arc $uv \in E(\vv{D})$ with a copy of the tendril $T_k$, where $k = w(uv)$, such that $u$ is the source pole of $T_k$, and $v$ is the sink pole. The resulting digraph, after this replacement is performed for all the arcs of $\vv{D}$, is precisely $\vv{G}$. We call the vertices of $\vv{G}$ that originate from vertices of $\vv{D}$ \emph{base} vertices, and denote the respective set by $B \subset V(G)$. The other vertices, i.e., the inner vertices of the tendrils, are called \emph{auxiliary}, denoted by $A \subset V(\vv{G})$. For base vertices $u, v \in B$, if $uv \in E(\vv{D})$, we denote the set of vertices of the respective tendril in $G$ by $T^{uv}$. Here and next for the sake of readability we associate base vertices of $\vv{G}$ with the respective vertices of $\vv{D}$, slightly abusing the notation. By definition, it holds that $A = \bigcup_{uv \in E(\vv{D})} T^{uv} \setminus \{u, v\}$.
    
    By construction, we observe several immediate properties of the digraph $\vv{G}$, see also~\cref{lemma:tendrils}.
    
    \begin{numclaim}
        The digraph $\vv{G}$ is acyclic. Its size is polynomial in the size of the digraph $\vv{D}$ and weights $w$, and its pathwidth is $\Oh(k)$.
    \end{numclaim}

    We now characterize the planar embeddings of $\vv{G}$ where $s$ and $t$ belong to the outer face. Let $\vv{D}'$ be an arbitrary orientation of $D$ (not necessarily $\vv{D}$). We say that $\vv{D}'$ defines an embedding $\Epsilon_{\vv{D}'}$ of $\vv{G}$ in the following sense. The embedding of $\vv{G}$ is obtained by enhancing the embedding of $\vv{D}$ given by~\cref{claim:D_unique_emb}: every arc $e \in E(\vv{D})$ is replaced with an embedding of the respective tendril $T^{uv}$. We pick a flip of this embedding consistently with the orientation $\vv{D}'$: for an arc $e \in E(\vv{D}')$, let the left face and the right face be defined with respect to the direction of $e$, in the unique upward planar embedding of $\vv{D}$. We pick the embedding of $T_{c(e)}$ so that the boundary with positive contribution goes on the right face, while the negative contribution to the left face. In other words, for a face $f$ that has an arc $e$ on its boundary walk, the contribution of the respective tendril will be positive if $e$ is oriented clockwise along the walk, and negative otherwise; see \cref{fig:upward-T,fig:upward-G} for an illustration.
    
    Later, in this way we will model the choice of orienting the edge in $P$ by the choice of orienting its dual in $D$, and show that the orientation achieves the conservation of flow in every vertex if and only if the tendrils flipped according to the respective $\vv{D}'$ provide the total contribution of zero to every face.
    Before we move to the proof of that equivalence, we show that the choice of $\vv{D}'$ essentially fixes the embedding of $\vv{G}$.
    
    \begin{numclaim}\label{claim:embeddings_by_flips}
        The planar embeddings of $\vv{G}$ are precisely the embeddings $\Epsilon_{\vv{D}'}$ for all possible choices of the orientation $\vv{D}'$ of $D$, up to a change of the outer face. The planar embeddings of $\vv{G}$ where $s$ and $t$ are on the outer face are precisely $\Epsilon_{\vv{D}'}$.
    \end{numclaim}
    \begin{proof}
        The claim follows immediately from the fact that $\vv{G}$ is obtained by replacing edges of the triconnected graph $\vv{D}$ by triconnected\footnote{accounting also for connectivity through the rest of the graph} tendrils $T_k$, thus the planar embedding of $\vv{G}$ is predetermined up to a flip of each tendril. The second part of the statement  follows since setting $s$ and $t$ to the boundary of the outer face defines it uniquely (because $\vv{G}$ is triconnected).
    \end{proof}

    For a planar embedding $\Epsilon_{\vv{D}'}$ of $G$, we call a face a \emph{base} face when it originates from a face of $D$, and an \emph{auxiliary} face when it is an inner face of a tendril $\vv{G}[T^{uv}]$ for some $uv \in E(D)$. We call all angles that belong to auxiliary faces or auxiliary vertices \emph{auxiliary} angles, and the remaining angles, which are the angles of base vertices in base faces, \emph{base} angles. Again, with a slight abuse of notation, we associate base faces and base angles in $G$ with the respective faces and angles in $D$. Next we argue that given a fixed flip of the tendrils, the angle assignment on $\vv{G}$ is completely predetermined by unique upward planar embeddings of the tendrils and of $\vv{D}$.

    \begin{numclaim}\label{claim:lambda_fixed}
        Let $\vv{G}$ admit an upward planar embedding with the planar embedding $\Epsilon$ and the angle assignment $\lambda$. Then $s$ and $t$ lie on the outer face in this embedding. Moreover, on every auxiliary angle, $\lambda$ coincides with the unique upward planar embedding of the respective tendril. On every base angle, $\lambda$ coincides with the unique upward planar embedding of $\vv{D}$.
    \end{numclaim}
    \begin{proof}
        For the angle assignment, let $\vv{T} = \vv{G}[T^{uv}]$ be a tendril in $\vv{G}$ for some $uv \in E(D)$. The upward planar embedding of $\vv{G}$ remains upward planar when restricted to $\vv{T}$, thus the respective angle assignment has to coincide with the unique upward planar embedding of $\vv{T}$. This proves the statement for all auxiliary angles.

        Let $v$ be a base vertex of $\vv{G}$, $v \ne s, t$. Then it is a switch vertex in both $\vv{D}$ and $\vv{G}$, and the positions for flat angles
        are predetermined by changes of edge directions in the circular order given by the embedding $\Epsilon$. These positions match exactly the flat angles of $v$ in the embedding of $\vv{D}$, as each tendril appears in the order consecutively by~\cref{claim:embeddings_by_flips}, and the edges of a tendril incident to $v$ all have the same direction as the original edge in $\vv{D}$.

        We now show that $s$ and $t$ lie on the outer face. Assume the contrary, then the outer face is either an auxiliary face or a base face that is inner in the embedding of $\vv{D}$. The first case is immediately impossible, as an auxiliary face is an inner face of a tendril, and the tendril admits a unique upward planar embedding. In the second case, we consider the contributions of angles to the face in the upward planar embedding of $\vv{G}$. For base angles not from $s$ or $t$, by the above the angle assignment coincides with the unique upward planar embedding of $\vv{D}$, thus every base angle is flat except for the local source and local sink of the face. Since at most one of them is in $\{s, t\}$, at most one of them is large, and so the total contribution of base angles is either $0$ or $-2$. For the auxiliary angles, their values are, as above, predetermined by the unique upward planar embedding of each tendril. It remains to observe that the contribution to the face of auxiliary angles of a tendril $T_{w(e)}$ on the boundary of the face is either $-6k$ or $6k$ for some $k \in \mathbb{Z}_{\ge 0}$ by~\cref{lemma:tendrils} and the assumption that every capacity in the input graph $P$ is a multiple of $3$. On the other hand, by the property \textbf{UP3} of an upward planar embedding the total contribution to the outer face must be 2. Since the base angles contribute 0 or -2 in total, and the auxiliary angles contribute the total of $6k$, for some $k \in \mathbb{Z}$,  the value of $2$ cannot be reached independently of which tendrils lie on the boundary, and their flips. We arrive at a contradiction, therefore both $s$ and $t$ must lie on the outer face of the embedding.
        
        Finally, it remains to show that the angle assignment for $s$ and $t$ coincides with the upward planar embedding of $\vv{D}$, i.e, that the large angle for both $s$ and $t$ lies on the outer face.
        Assume $v = s$ and the large angle is assigned to an inner base face $f$ instead of the outer face. By the above, all the remaining base angles on $f$ are defined by the upward planar embedding of $\vv{D}$, and thus exactly one of them is small and the rest are flat. Thus, the balance of the base angles on $f$ is zero. Again, since the contribution of auxiliary angles to $f$ is a multiple of $6$, the target value of $-2$ required by the property \textbf{UP3} of an upward planar embedding cannot be reached, which is a contradiction. The case $v = t$ is symmetric.
    \end{proof}
    
    With that, we are ready to show that $\vv{G}$ admits an upward planar embedding if and only if there exists a circulating orientation of $(P, c)$.

    Fix an upward planar embedding of $\vv{G}$, let $\Epsilon$ be the respective planar embedding of $\vv{G}$, and $\lambda$ be the respective angle assignment.
    By~\cref{claim:lambda_fixed}, $s$ and $t$ belong to the outer face in $\Epsilon$. Thus by~\cref{claim:embeddings_by_flips}, the embedding is given by an orientation $\vv{D}'$ of $D$, that is, $\Epsilon = \Epsilon_{\vv{D}'}$. By~\cref{claim:lambda_fixed}, the angle assignment $\lambda$ is fixed.
    We now show that the following orientation $\vv{P}$ of $P$ constitutes a solution to the instance $(P, c)$ of \textsc{Circulating Orientation}.
    For an edge $e = uv \in E(P)$, orient it away from $u$ if and only if the dual edge $e^* \in E(D)$ is oriented clockwise along the boundary of $f_u$ in $\vv{D}'$, where $f_u$ is the face in $D$ corresponding to the vertex $u \in V(P)$, see~\cref{fig:backtop} for an example. 
    
    Consider a base face $f$ of $\vv{G}$, by the characterization of $\lambda$ from~\cref{claim:lambda_fixed} the total contribution of auxiliary angles from tendrils to $f$ must be zero as the condition \textbf{UP3} of~\cref{th:upward-conditions} is fulfilled for $f$ in both $\vv{D}$ and $\vv{G}$.
    Consider the vertex $v_f$ in $P$ corresponding to the face $f$. We get $\sum_{e \in E^+_{\vv{P}}(v_f)} c(e) = \sum_{e' \in E^-_{\vv{P}}(v_f)} c(e')$ from the fact that the total contribution of tendrils to the face $f$ is zero, since an outgoing arc $e \in E^+_{\vv{P}}(v_f)$ of capacity $c(e)$ corresponds to a tendril with a contribution of $2c(e)$ to $f$, and  an incoming arc $e' \in E^-_{\vv{P}}(v_f)$ of capacity $c(e')$ corresponds to a tendril with a contribution of $-2c(e')$ to $f$. Since the above holds for every base face $f$ of $\vv{G}$ and so for every vertex $v_f$ of $P$, the forward direction of the proof is done.

    In the other direction, let $\vv{P}$ be the orientation of $P$ such that for every $v \in V(\vv{P})$, $\sum_{e \in E^+_{\vv{P}}(v)} c(e) = \sum_{e' \in E^-_{\vv{P}}(v)} c(e')$.
    We now argue that this implies an orientation $\vv{D}'$ of $D$ with the following property: for every face $f$ of $D$ (recall the planar embedding is fixed by $\Epsilon_D$), the total sum of the weights of the edges oriented clockwise along the boundary walk of $f$ is equal to the total sum of the weights of the edges oriented counter-clockwise along the boundary walk of $f$. Specifically, this orientation is obtained by orienting a dual edge $e \in E(D)$ of the arc $uv \in E(\vv{P})$ clockwise in the order defined by the boundary walk of $f_u$, which is the face in $D$ dual to the vertex $u \in V(\vv{P})$. This also implies that $e$ is oriented counter-clockwise in the order of the boundary walk of $f_v$, the dual face of $v$. Now, for a face $f_v$ in $D$, which corresponds to a vertex $v$ in $P$, the edges oriented clockwise along its boundary walk in $\vv{D}'$ are precisely those the duals of which are oriented away from $v$ in $\vv{P}$, thus their total weight is $\sum_{e \in E^+_{\vv{P}}(v)} c(e)$. Similarly, an edge is oriented counter-clockwise if and only if its dual is oriented towards $v$ in $\vv{P}$, thus the total weight of counter-clockwise edges is $\sum_{e' \in E^-_{\vv{P}}(v)} c(e')$. Finally, we obtain the desired property by the fact that $\vv{P}$ is a solution so $\sum_{e \in E^+_{\vv{P}}(v)} c(e) = \sum_{e' \in E^-_{\vv{P}}(v)} c(e')$ holds.
    
    We now consider an upward planar embedding $\Epsilon_{\vv{D}'}$ given by the orientation $\vv{D}'$ above, from~\cref{claim:embeddings_by_flips} and~\cref{claim:D_unique_emb}. From~\cref{claim:D_unique_emb}, it is enough to verify that the property \textbf{UP3} for every base face is fulfilled.
    For a base face $f$, the contribution of base angles is the same in $\vv{D}$ and $\vv{G}$ by~\cref{claim:D_unique_emb}, thus \textbf{UP3} is fulfilled for $f$ if the total contribution of auxiliary angles to $f$ is zero. Let $v_f$ be the corresponding to $f$ vertex in $P$, by the property above the sum of weights of edges oriented clockwise along the boundary of $f$ is $\sum_{e \in E^+_{\vv{P}}(v_f)} c(e)$, and counter-clockwise is $\sum_{e \in E^-_{\vv{P}}(v_f)} c(e)$.
    By definition of $\Epsilon_{\vv{D}'}$, every arc $e \in E(\vv{D}')$ oriented clockwise corresponds to a tendril that has a total contribution of $2c(e)$ to $f$, and every arc $e' \in E(\vv{D}')$ oriented counter-clockwise contributes $-2c(e')$. Since $\sum_{e \in E^+_{\vv{P}}(v_f)} c(e) = \sum_{e' \in E^-_{\vv{P}}(v_f)} c(e')$, the total contribution of auxiliary angles to $f$ is thus $2 \sum_{e \in E^+_{\vv{P}}(v_f)} c(e) - 2\sum_{e' \in E^-_{\vv{P}}(v_f)} c(e') = 0$. This finishes the proof of correctness.
\end{proof}

\section{From \textsc{Circulating Orientation} to \textsc{Rectilinear Planarity Testing}}
In this section we present the reduction from \textsc{Circulating Orientation} to \textsc{Rectilinear Planarity Testing}.
\begin{lemma} \label{lemma:orthogonal}
There is a polynomial-time reduction from planar triangulated triconnected instances of \textsc{Circulating Orientation} of pathwidth~$k$ with capacities polynomial in graph size to \textsc{Rectilinear Planarity Testing} instances of pathwidth~$\Oh(k)$.
\end{lemma}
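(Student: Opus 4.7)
The plan is to mirror the reduction from \cref{lemma:upwards}, with the adjustments anticipated in the overview. Starting from an instance $(P,c)$ where $P$ is triangulated, triconnected, planar of pathwidth $k$, form the planar dual $D$. Since $P$ is triangulated, $D$ has maximum degree $3$; since $P$ is triconnected, so is $D$; and by \cref{theorem:pw_dual}, the pathwidth of $D$ is $\Oh(k)$. As in the upward case, pick any $st$-orientation $\vv{D}$ of $D$ witnessing an $st$-planar graph; this fixes a reference planar embedding and defines for each edge $e \in E(D)$ a left and a right face. I also scale all capacities by a sufficiently large multiple $\Lambda$ (depending only on the maximum face length) so that the slack $\{0,1,2\}$ of a rectilinear tendril cannot accumulate enough across a single face to change the sign of any significant contribution; this is a safe preprocessing since it does not affect whether a circulating orientation exists.

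To construct $G$, first subdivide every edge $e \in E(D)$ once, producing a midpoint $m_e$ and two half-edges; let $F$ denote the resulting graph, which still has maximum degree $3$ and pathwidth $\Oh(k)$. Then replace one designated half-edge of each $e$ by a copy of the rectilinear tendril $T_{c(e)}$ from \cref{lemma:rect_tendrils}, with its poles identified with the endpoint of the half-edge and $m_e$. The result $G$ has maximum degree~$3$ (since tendril poles have degree $1$ inside their tendril), is rectilinear-planar at the gadget level, and has pathwidth $\Oh(k)$ by \cref{lemma:pw_gadgets}. The overall polynomial-size bound follows from the polynomial bound on capacities.

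Analogously to \cref{claim:embeddings_by_flips,claim:lambda_fixed}, the planar embedding of $G$ is determined, up to the flip of each tendril and the choice of outer face, by the triconnected skeleton $F$ together with the unique planar embedding of each tendril. A flip of the tendril on $e$ then naturally encodes an orientation of $e$ in $D$: the flip for which the significant contribution $+4c(e)$ lies on the right of $e$ corresponds to one orientation, and the one for which $+4c(e)$ lies on the left corresponds to the other. Under \cref{th:rect-conditions}, the rectilinear balance \textbf{RE1} must hold at every face. For a base face $f$ of $G$ corresponding to a vertex $v_f \in V(P)$, the angles at the original vertices of $D$ and the contributions of the tendrils decompose the face balance into: a fixed base part (determined by the face and the angle assignment at degree-$3$ base vertices and degree-$2$ subdivision vertices), and a tendril part equal to $\sum_{e} \sigma_e \cdot (4 c(e) + \epsilon_e)$, where $\sigma_e \in \{+1,-1\}$ is the flip sign and $\epsilon_e \in \{0,1,2\}$ is the slack. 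The equivalence I then want is: \textbf{RE1} holds at every base face if and only if $\sum_{e \in E^+_{\vv{P}}(v_f)} c(e) = \sum_{e \in E^-_{\vv{P}}(v_f)} c(e)$ in the orientation $\vv{P}$ induced via the planar-dual correspondence used in \cref{lemma:upwards}.

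The main obstacle, and the place that differs most from the upward argument, is the slack $\epsilon_e$ and the new flexibility at degree-$2$ subdivision vertices whose two angles can be any pair summing to $4$. I handle this in two steps. First, I choose $\Lambda$ so large that the total slack over any face is strictly less than $4\Lambda$, which forces the multiset of flip signs $(\sigma_e)$ around every base face to realize the exact integer balance of significant contributions predicted by the circulating orientation; in particular, the choice of $\sigma_e$'s is uniquely determined by the orientation and cannot ``cheat'' using slack. Second, I show that for any circulating orientation of $(P,c)$, one can locally assign angles at each subdivision vertex $m_e$ and choose the slack $\epsilon_e \in \{0,1,2\}$ of each tendril so that the residual balance at every base face sums exactly to $-4$ (or $+4$ for the outer face, chosen as in \cref{claim:D_unique_emb}); this is a finite local arithmetic check using that a degree-$2$ vertex admits angle pairs $(1,3),(2,2),(3,1)$ distributing $1/2/1$ units between its two incident faces. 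The outer face is handled as in the upward proof by showing that the analogous divisibility argument rules out any embedding placing $s$ or $t$ in an inner base face. Combining both directions yields the claimed equivalence, and the pathwidth bound is preserved by construction, establishing \cref{lemma:orthogonal}.
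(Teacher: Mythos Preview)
Your overall architecture is right and matches the paper's: take the dual $D$, subdivide to get a skeleton $F$, replace designated edges by rectilinear tendrils scaled by a large factor so that the significant contributions are forced to cancel on every base face, and read off the bijection between tendril flips and a circulating orientation of $P$. The forward direction (rectilinear embedding $\Rightarrow$ circulating orientation) is essentially the paper's \cref{claim:orth_significant}, and your scaling by $\Lambda$ plays the same role as the paper's $\theta=|V(F)|+1$.

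The gap is in the backward direction. You subdivide each edge of $D$ only once and then assert that ``one can locally assign angles at each subdivision vertex $m_e$ and choose the slack $\epsilon_e\in\{0,1,2\}$'' to hit the target $\pm 4$ on every base face, calling this a ``finite local arithmetic check''. But this is not a local problem: the angle labels at a degree-$3$ base vertex of $D$ (which must be a permutation of $\{1,1,2\}$) and at a degree-$2$ subdivision vertex (a pair summing to $4$) are each shared by two or three faces, and the tendril slack is shared by the two incident base faces. So what you have to solve is a global constraint system that is \emph{exactly} the question of whether the skeleton $F$ (with tendrils replaced by slack offsets) admits a rectilinear embedding. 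With a single subdivision there is no reason this is always feasible, and you give no argument; a triconnected cubic skeleton need not become rectilinear after one subdivision (think of $D=K_4$ and its close relatives).

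The paper resolves this cleanly and differently: it subdivides every edge of $D$ \emph{four} times and invokes the Garg--Tamassia fact (their Lemma~5.1) that the resulting $F$ already admits a rectilinear embedding. Then in the backward direction it simply takes the base angles from this fixed rectilinear embedding of $F$, chooses each tendril in the embedding with contribution equal to its significant contribution (slack $0$), and checks \textbf{RE1} face by face: auxiliary faces inherit the tendril's rectilinear embedding, and each base face sees the same total contribution as in $F$ because the tendrils cancel. No residual arithmetic is needed. Incidentally, the $st$-orientation and the ``divisibility rules out a bad outer face'' argument that you carry over from the upward proof are not used in the rectilinear reduction at all; the flip-to-orientation correspondence and the outer face are handled directly through $F$'s rectilinear embedding. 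If you increase the number of subdivisions and appeal to that lemma, your sketch becomes complete.
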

\begin{proof}
    Let $(P, c)$ be an instance of \textsc{Circulating Orientation}, where $P$ is a triconnected planar triangulated graph, and $c: E(P) \to \mathbb{Z}_{\ge 0}$ is the capacity function.
    Let $\Epsilon_P$ be a planar embedding of $P$, and let $D$ be the dual graph of $P$ with respect to $\Epsilon_P$. Let $\Epsilon_D$ be the respective planar embedding of $D$, and let $w : E(D) \to \mathbb{Z}_{\ge 0}$ be the weight function on the edges of $D$ defined so that $w(e') = c(e)$, where $e \in E(P)$, $e' \in E(D)$, and $e'$ is the dual edge of $e$. Analogously to the proof of~\cref{lemma:upwards}, we have the following.
    
    \begin{numclaim}
        The graph $D$ is a planar triconnected graph of maximum degree 3 and pathwidth $\Oh(k)$.
    \end{numclaim}

    We now construct the graph $F$ from $D$ by subdividing every edge 4 times. Let $\rep: E(D) \to E(P)$ be a mapping that associates every edge in $D$ to the middle edge of the respective subdivision in $P$; we call the edge $\rep(e) \in E(P)$ a \emph{representative} of $e \in E(D)$.
    \begin{numclaim}
        The graph $F$ has a unique planar embedding, and it admits a rectilinear embedding.
    \end{numclaim}
    \begin{proof}
        Identical to Lemma 5.1 of~\cite{GargT01}.
    \end{proof}

    We set a large enough parameter $\theta = |V(F)| + 1$.
    We then construct a graph $G$ from the graph $F$ by replacing every representative edge $\rep(e)$ by a rectilinear tendril $T_k$, where $k = \theta \cdot w(e)$. We call the internal vertices of the tendrils \emph{auxiliary} in $G$, and the other vertices \emph{base vertices}. Base vertices are associated with vertices of $F$ in the natural way.
    Similarly to the proof of~\cref{lemma:upwards}, since $D$ is triconnected and the rectilinear tendrils admit a single planar embedding, the embeddings of $G$ are determined by the flips of the rectilinear tendrils. The faces of an embedding of $G$ are thus either internal in the tendrils, or originate from the faces of $D$. We call the former \emph{auxiliary} and the latter \emph{base faces} of the respective embedding of $G$.
    We also call the angles at auxiliary vertices \emph{auxiliary angles}, and the angles at base vertices \emph{base angles}.

    In the same way as in~\cref{lemma:upwards}, we establish a correspondence between planar embedding of $G$ and orientations of $D$, and between orientations of $D$ and orientations of $P$. From an orientation $\vv{P}$ of $P$ we construct an orientation $\vv{D}$ of $D$: for an arc $uv \in E(\vv{P})$, we orient the dual edge $e$ clockwise in the order of the boundary walk of $f_u$, which is the dual face of $u$ in $D$; here clockwise is given by the unique rectilinear embedding of $D$. Naturally, this transformation could be reversed, so we obtain a bijection between orientations of $P$ and of $D$; under this bijection, the sum of capacities of outgoing (incoming) arcs from a vertex $v$ in $P$ is equal to the sum of weights of the arcs oriented clockwise (counter-clockwise) along the boundary of the dual face $f_v$ in $D$.
    
    In the second part of the correspondence, we pick a flip of each tendril in $G$ in accordance with the orientation $\vv{D}$ of $D$: for a face $f$ in $D$, if an edge of the boundary is oriented clockwise, the tendril is flipped so the boundary walk with the positive contribution is towards $f$; so counter-clockwise edges result in the negative contribution of the respected tendril. In this way, for a base face $f$ of $G$, the total significant contribution of the tendrils to $f$ is equal to $4\theta$ times the difference between the total weight of clockwise edges and the total weight of counter-clockwise edges. This transformation is also two-way: a planar embedding of $G$ is defined by picking a flip for each tendril, and orienting the edges clockwise whenever the respective boundary walk has positive contribution to the respective face produces the matching orientation of $D$. In the following, we denote the embedding of $G$ produced from an orientation of $\vv{D}$ of $D$ in the manner above by $\Epsilon_{\vv{D}}$.

    \begin{numclaim}\label{claim:orth_significant}
        Let $f$ be a base face of a rectilinear embedding of $G$. Then the total significant contribution to $f$ of its tendrils is $0$.
    \end{numclaim}
    \begin{proof}
        There are no vertices of degree 1 in $G$, thus no angle receives the label $4$ in the rectilinear embedding, otherwise the property \textbf{RE0} of~\cref{th:rect-conditions} is violated. Thus, the total contribution to $f$ is $n_3(f) - n_1(f)$. By \textbf{RE1}, $n_3(f) - n_1(f)$ is either $4$ or $-4$.

        The total contribution $n_3(f) - n_1(f)$ consists of base angles and auxiliary angles, where the total contribution of the latter is exactly the total contribution of the tendrils. Let $|f|$ be the number of base vertices on the boundary of $f$, which is equal to the number of vertices on the boundary of $f$ in $F$ and to the number of tendrils on the boundary of $f$ in $G$. The total contribution of base angles to $f$ is then between $-|f|$ and $|f|$. The difference between the total contribution and total significant contribution of all tendrils to $f$ is between $-2|f|$ and $2|f|$, since for every tendril the difference lies between $-2$ and $2$. As the total contribution to $f$ must be $4$ in absolute value, the total significant contribution of the tendrils is at most $3|f| + 4$, since the other parts of the contribution are at most $3|f|$ in absolute value in total, as shown above. Assume now that the total significant contribution of tendrils to $f$ is non-zero, then its absolute value is at least $4\theta$, since the significant contribution of every tendril is a multiple of $4\theta$. However, $4\theta = 4(|V(F)| + 1) > 3|f| + 4$, which is a contradiction.
    \end{proof}
    
    It remains to show that $P$ has a circulating orientation if and only if $G$ admits a rectilinear embedding.

     Fix a rectilinear embedding of $G$, and consider a base face $f$. By~\cref{claim:orth_significant}, the total significant contribution of its tendrils is zero. Consider the respective orientation of $D$, in it $f$ has the same total weight of clockwise and counter-clockwise edges. In turn, for the dual vertex $v_f$ in $P$, orienting the dual in $P$ of every clockwise edge outwards from $v_f$ and the dual of every counter-clockwise edge towards $v_f$ gives that $\sum_{e \in E^+_{\vv{P}}(v_f)} c(e) = \sum_{e' \in E^-_{\vv{P}}(v_f)} c(e')$. Since this holds for every vertex $v$ of $P$, the constructed orientation is a solution to $(P, c)$.

     In the other direction, consider an orientation $\vv{P}$ of $P$ such that for every vertex $v \in V(P)$, $\sum_{e \in E^+_{\vv{P}}(v)} c(e) = \sum_{e' \in E^-_{\vv{P}}(v)} c(e')$. This gives an orientation $\vv{D}$ of $D$ where every face has the same total weight of clockwise and counter-clockwise edges. Consider the respective embedding $\Epsilon_{\vv{D}}$ of $G$, we claim that it is rectilinear by showing a suitable angle assignment. Assign angles to the auxiliary angles as per the rectilinear embedding of the tendril where the contribution is equal to the significant contribution, and base angles as per the rectilinear embedding of $F$. The condition \textbf{RE0} of~\cref{th:rect-conditions} is thus fulfilled automatically. Now, consider a face $f$ in $G$.
     For an auxiliary face, the condition \textbf{RE1} on the total contribution is again fulfilled automatically since the angle assignment on the boundary of the face matches the angle assignment in a rectilinear embedding of the tendril. For a base face $f$, its total contribution is equal to the total contribution in the rectilinear embedding of $F$, as the total contribution of the tendrils on its boundary is equal to their total significant contribution which is equal to zero. Thus the condition \textbf{RE1} is also fulfilled in this case, therefore by~\cref{th:rect-conditions} the constructed angle assignments gives a rectilinear embedding of $G$.    
\end{proof}


Finally, we obtain the main lemma, restated below for convenience, as a combination of reduction from \textsc{Multicolored Clique} to \textsc{Circulating Orientation} given by~\cref{lemma:co}, and the reductions from \textsc{Circulating Orientation} to the respective planarity testing problems given by \cref{lemma:upwards,lemma:orthogonal}.

\LemPlanarity*\label{lemma:planarity*}

\end{document}